\newtheorem{thm}{Theorem}[section]
\newtheorem{theorem}[thm]{Theorem}
\newtheorem{corollary}[thm]{Corollary}
\newtheorem{lemma}[thm]{Lemma}
\newtheorem{definition}[thm]{Definition}
\newtheorem{claim}[thm]{Claim}
\def\cS{{\mathcal{S}}}
\def\cV{{\mathcal{V}}}
\def\cS{{\mathcal{S}}}
\begin{document}

%----------------- The Title Declarations ------------------------------

\title{Information-Theoretic Lower Bounds on the Storage Cost of Shared Memory Emulation}

\author{Viveck R. Cadambe\\
	EE Department,\\
	Pennsylvania State University, \\
	University Park, PA, USA\\
	{\tt viveck@engr.psu.edu}
	\and
Zhiying Wang\\
CPCC Center\\
University of California, Irvine\\
{Irvine, CA, USA}\\
	{\tt zhiying@uci.edu}
\and
Nancy Lynch\\
Department of EECS\\
MIT \\
{Cambridge, MA, USA}\\
	{\tt lynch@csail.mit.edu}}
% make the title area
\date{}
\maketitle

%\bibliographystyle{plain}

%%%%%%%%%%%%%%%%%%%%%%%%%%%%%%%%%%%%%%%%%%%%%%%%%%%%%%%%%%%%%%%%%%%%%%
%%%%%%%%%%%%%%%%%%%%%%%%%%%%%%%%%%%%%%%%%%%%%%%%%%%%%%%%%%%%%%%%%%%%%%
%%%%%%%%%%%%%%%%%%%%%%%%%%%%%%%%%%%%%%%%%%%%%%%%%%%%%%%%%%%%%%%%%%%%%%

\begin{abstract}
The focus of this paper is to understand storage costs of emulating an atomic shared memory over an asynchronous, distributed message passing system. Previous literature has developed several shared memory emulation algorithms based on replication and erasure coding techniques, and analyzed the storage costs of the proposed algorithms. In this paper, we present information-theoretic lower bounds on the storage costs incurred by shared memory emulation algorithms. Our storage cost lower bounds are universally applicable, that is, we make no assumption on the structure of the algorithm or the method of encoding the data. 

We consider an arbitrary algorithm $A$ that implements an atomic multi-writer single-reader (MWSR) shared memory variable whose values come from a finite set $\mathcal{V}$ over a system of $N$ servers connected by point-to-point asynchronous links. We require that in every fair execution of algorithm $A$ where the number of server failures is smaller than a parameter $f$, every operation invoked at a non-failing client terminates. We define the storage cost of a server in algorithm $A$ as the logarithm (to base 2) of number of states it can take on; the total-storage cost of algorithm $A$ is the  sum of the storage cost of all servers. We develop three lower bounds on the storage cost of algorithm $A$.

\begin{itemize}
\item In our first lower bound, we show that if algorithm $A$ does not use server gossip, then the total storage cost is lower bounded by $2 \frac{N}{N-f+1}\log_2|\mathcal{V}|-o(\log_2|\mathcal{V}|)$. 
\item In our second lower bound we show that the total storage cost is at least $2 \frac{N}{N-f+2} \log_{2}|\mathcal{V}|-o(\log_{2}|\mathcal{V}|)$ even if the algorithm uses server gossip.

\item In our third lower bound, we consider algorithms where the write protocol sends information about the value in at most one phase. For such algorithms, we show that the total storage cost is at least $\nu^* \frac{N}{N-f+\nu^*-1} \log_2( |\mathcal{V}|)- o(\log_2(|\mathcal{V}|),$ where $\nu^*$ is the minimum of $f+1$ and the number of active write operations of an execution.
\end{itemize}

Our first and second lower bounds are approximately twice as strong as the previously known bound of $\frac{N}{N-f}{\log_{2}|\mathcal{V}|}$. Furthermore, our first two lower bounds apply even for regular, single-writer single-reader (SWSR) shared memory emulation algorithms. Our third lower bound is much larger than our first and second lower bounds, although it is applicable to a smaller class of algorithms where the write protocol has certain restrictions. In particular, our third bound is comparable to the storage cost achieved by most shared memory emulation algorithms in the literature, which naturally fall under the class of algorithms studied. Our proof ideas are inspired by recent results in coding theory.

\end{abstract}
\newpage

%%%%%%%%%%%%%%%%%%%%%%%%%%%%%%%%%%%%%%%%%%%%%%%%%%%%%%%%%%%%%%%%%%%%%%
%%%%%%%%%%%%%%%%%%%%%%%%%%%%%%%%%%%%%%%%%%%%%%%%%%%%%%%%%%%%%%%%%%%%%%
%%%%%%%%%%%%%%%%%%%%%%%%%%%%%%%%%%%%%%%%%%%%%%%%%%%%%%%%%%%%%%%%%%%%%%

%%%%%%%%%%%%%%%%%%%%%%%%%%%%%%%%%%%%%%%%%%%%%%%%%%%%%%%%%%%%%%%%%%%%%%
%%%%%%%%%%%%%%%%%%%%%%%%%%%%%%%%%%%%%%%%%%%%%%%%%%%%%%%%%%%%%%%%%%%%%%
%%%%%%%%%%%%%%%%%%%%%%%%%%%%%%%%%%%%%%%%%%%%%%%%%%%%%%%%%%%%%%%%%%%%%%

\section{Introduction}
The emulation of a consistent, fault-tolerant read-write shared memory in a distributed, asynchronous, storage system has been an active area of research in distributed computing theory. In their celebrated paper \cite{ABD}, Attiya, Bar-Noy, and Dolev devised a fault-tolerant algorithm for emulating a shared memory that achieves atomic consistency (linearizability) \cite{Lamport86,Herlihy90}. Consider a distributed system with server nodes, write client and read client nodes, all of which are connected by point-to-point asynchronous links. The ideas of \cite{ABD} can be used to design server, write and read protocols that implement an atomic shared memory even if the write and read operations are invoked concurrently with the following guarantee: every read or write operation invoked at a non-failing client terminates so long as the set of servers that fail is restricted to a minority. The algorithm of \cite{ABD} used a replication-based storage scheme at the servers to attain fault tolerance. Following \cite{ABD}, several papers \cite{HGR, DGL, AJX, CT, FAB, dobre_powerstore, Cadambe_Lynch_Medard_Musial_NCA, androulaki2014_separate_metadata} have developed algorithms that use \emph{erasure coding} instead of replication for fault tolerance, with the goal of improving upon the storage efficiency of \cite{ABD}. %The goal of our paper is to  understand the storage costs of shared memory emulation algorithms from an information-theoretic perspective.

In erasure coding\footnote{Server failures are modeled as erasures of codeword symbols; hence the term, erasure coding.} which is studied in classical coding theory, each server stores a function of the value called a codeword symbol. A decoder that is able to access a sufficient number of codeword symbols recovers the value. The number of bits used to represent a codeword symbol is typically much smaller than the number of bits used to represent the value. As a consequence, erasure coding is well known to lead to smaller storage costs as compared to replication in the classical coding-theoretic set-up (See, for example, \cite{dimakis_networkcoding, Patterson_raid, Cassuto_tutorial}). Here, we aim to understand storage costs of shared memory emulation, where in contrast with the classical coding-theoretic setup, multiple versions of the data object are to be stored in a consistent manner.
%Replication-based algorithms such as \cite{ABD, LDR} build fault tolerance by storing the value of the latest version of the data object in every server. The classical coding-theoretic framework studies a system where a single version of the data object is to be stored in the servers in a fault tolerant manner. 

When erasure coding is used for shared memory emulation, new challenges arise. Since, in erasure coding, each server stores a codeword symbol and not the entire value, a read operation has to obtain a sufficient number of codeword symbols to decode the value being stored. When a write operation begins to write a new version of the data object, the old version cannot be deleted from the servers until a sufficient number of codeword symbols corresponding to the new version have been propagated to the servers. As a consequence, servers have to store codeword symbols corresponding to multiple versions of the data object to ensure that a reader can decode an atomically consistent version. Previous erasure coding based shared memory emulation algorithms  \cite{HGR, DGL, AJX, CT, FAB, dobre_powerstore, Cadambe_Lynch_Medard_Musial_NCA} have noted that the number of versions to be stored at a server can be large if there are a large number of ongoing or failed write operations whose codeword symbols have not been propagated sufficiently. Because servers store codeword symbols corresponding to multiple versions, the storage cost of using erasure coding can be large, even if the number of bits in each codeword symbol is small compared to the number of bits used to represent the value. %Reference \cite{Cadambe_Lynch_Medard_Musial_NCA} presents an erasure coding based shared memory emulation algorithm and provides a formal analysis of the storage cost incurred. 

Despite the vast amount of literature in the study of storage costs of shared memory emulation, some compelling and fundamental questions remain unanswered. Since a server can store an arbitrary function of all the symbols it receives, can we develop a sophisticated storage strategy that somehow compresses multiple versions at the servers and thereby results in smaller storage costs? If we add multiple phases to read and write protocols or include other algorithmic novelties, can we reduce the storage cost of shared memory emulation? {In our paper, we obtain insights into these questions by developing novel impossibility results that lower bound the storage cost of an arbitrary  atomic shared memory emulation algorithm.} 
\vspace{-8pt}
\section{Summary of Results and Comparisons with Related Work}
\label{sec:comparisons}
\vspace{-5pt}
In this section, we first summarize the shared memory emulation and the classical coding theory set-ups. We then describe our storage cost lower bounds in Theorems \ref{thm:second} and \ref{thm:third}. Then, we describe 
our storage cost lower bound related to Theorem \ref{thm:fourth}. Finally we compare our results to previously derived storage cost lower bounds. 
%In this section, we present our main results over the backdrop of the previous related work. We summarize the Singleton bound, a classical bound in coding theory, and then pose a question that relates to shared memory emulation, Question 1. We describe our results, Theorems \ref{thm:second} and \ref{thm:third}, that answer Question 1. We then summarize the state of the art of previous literature in erasure coding based shared memory emulation algorithms. The storage costs incurred in the previously developed algorithms motivates a question of related to their optimality, Question 2. We summarize our third result, Theorem \ref{thm:fourth}, which partially answers Question 2. We begin with an informal description of the shared memory emulation set-up.

\vspace{-5pt}
\subsection{Set up}
\vspace{-5pt}

\textbf{Shared Memory Emulation Set-up:}
We consider an arbitrary algorithm $A$ that implements, over a network of $N$ servers connected by point-to-point asynchronous links, an atomic multi-writer single-reader (MWSR) shared memory variable whose values come from a finite set $\mathcal{V}$. The algorithm $A$ is required to ensure that all operations terminate so long as the number of server failures is no larger than a parameter $f$. 
The storage cost of a server in algorithm $A$ is measured as the logarithm of the number of possible states of the server, and the storage cost of algorithm $A$ is the {total} storage cost over all the servers.

\textbf{Classical Coding Theory Set-up:}
Consider a system with $N$ servers, where a single version of a data object whose values come from a finite set $\mathcal{V}$ is to be stored. The value of the data object must be recoverable, so long as the number of server failures is no larger than a parameter $f$. The classical \emph{Singleton bound} \cite{LinCostello_Book, Roth_CodingTheory} in coding theory implies that the {total} storage cost is at least $\frac{N\log_{2}|\mathcal{V}|}{N-f}$ bits\footnote{For the sake of the discussion here, we assume that $|\mathcal{V}|$ is a power of $2$. We refer the reader to \cite{LinCostello_Book, Cadambe_Lynch_Medard_Musial_new} for more details about erasure coding.}. The lower bound of $\frac{N \log_{2}|\mathcal{V}|  }{N-f}$  on the storage cost is known to be tight in the classical coding-theoretic set-up for large values of $|\mathcal{V}|$ \cite{Roth_CodingTheory, LinCostello_Book}.

The power of erasure coding is transparent when we want to design a storage system that tolerates failures of $f$ server nodes and the number of server nodes $N$ can be chosen freely. If we use replication, every server stores $\log_{2} |\mathcal{V}|$ bits. Since we need at least $f+1$ servers to tolerate $f$ server failures, the {total} storage cost of the system is at least $(f+1) \log_{2}|\mathcal{V}|$ bits.  In contrast, if we use erasure coding, the total storage cost of the system is $\frac{N \log_{2}|\mathcal{V}|}{N-f},$  which approaches $\log_{2}|\mathcal{V}|$ as $N$ increases. If $N$ is sufficiently large, the storage cost of replication is approximately $f+1$ times the storage cost of erasure coding. 

\vspace{-5pt}
\subsection{Motivation and Summary - Theorems \ref{thm:second} and \ref{thm:third}}
\vspace{-5pt}

\textbf{Motivation:} The classical coding-theoretic model does not model clients or channels, and therefore differs significantly from the shared memory emulation model. However, the storage cost lower bound of $\frac{\log_2 |\mathcal{V}|}{N-f}$ described by the Singleton bound is, in fact, applicable in the context of shared memory emulation as well. We provide the first formal proof of the lower bound in Appendix \ref{sec:firstthm}; in particular, we show that for any SWSR regular shared memory emulation algorithm that implements a read write data object whose values come from a set $\mathcal{V}$, the {total} storage cost is at least $\frac{N \log_2 |\mathcal{V}|}{N-f}$. The natural lower bound of  $\frac{N \log_2 |\mathcal{V}|}{N-f}$ inspires the following question. 

\begin{mdframed}
\textbf{Question 1:} Does there exist an atomic shared memory emulation algorithm whose storage cost is equal to $\frac{N}{N-f} \log_{2}|\mathcal{V}|?$
\end{mdframed}
\textbf{Summary of Theorems \ref{thm:second} and \ref{thm:third}:}
In this paper, we answer the above question in the negative by proving storage cost lower bounds that are stronger than $\frac{N}{N-f} \log_{2}|\mathcal{V}|.$ In Theorems \ref{thm:second} and \ref{thm:third} %in Sections \ref{sec:secondthm}, \ref{sec:thirdthm}, 
we show that the {total}-storage cost of single-writer-single-reader (SWSR) \emph{regular} shared memory emulation algorithm is at least $\frac{2 N}{N-f+2} \log_2|\mathcal{V}|-o(\log_2|\mathcal{V}|).$ In particular, if $f$ is fixed and $N$ is chosen freely, the total-storage cost lower bound of Theorems \ref{thm:second} and \ref{thm:third} approach $\frac{2N}{N-f} \log_{2} |\mathcal{V}|-o(|\log_2|\mathcal{V}|)$ as $N$ increases; therefore the bounds of Theorems \ref{thm:second} and \ref{thm:third} are twice as large as the previously known lower bound.  Recall that regularity \cite{Lamport86} is a weaker consistency model as compared with atomicity. Since Theorems \ref{thm:second} and \ref{thm:third} apply for regular SWSR shared memory emulation algorithm, they automatically apply for atomic MWSR shared memory emulation algorithms. Theorem \ref{thm:second} describe a storage cost lower bound for algorithms which do not use server gossip. Theorem \ref{thm:third} describes a lower bound for \emph{any} shared memory emulation algorithm, including algorithms that use server gossip. Our storage cost lower bounds are universal in nature, that is, we make no assumption on the structure of the protocols or the method of data storage. Because we answer Question 1 in the negative, an important implication is that there is an unavoidable price, in terms of storage cost, to ensure regularity in a shared memory emulation system. We next discuss the tightness of Theorems \ref{thm:second} and \ref{thm:third} in the context of previously derived storage cost {upper} bounds.

\vspace{-5pt}
\subsection{Motivation and Summary - Theorem \ref{thm:fourth}}
In the sequel, we define the number of active write operations at point $P$ of an execution as the number of write operations which have begun before the point $P$ but not yet terminated or failed at point $P$. The number of active write operations of an execution is the supremum, over all points of the execution, of the number of active write operations at the points of the execution. %We now describe the relation of the storage cost of previously developed erasure coding based algorithms and the number of active writes of an execution.

\textbf{Motivation:} {There is a growing body of literature related to erasure coding based shared memory emulation algorithms \cite{HGR, DGL, AJX, CT, FAB, dobre_powerstore, Cadambe_Lynch_Medard_Musial_new,  Cadambe_Lynch_Medard_Musial_NCA, androulaki2014_separate_metadata, spiegelman2015space}. These algorithms differ in their structure, liveness conditions on operation termination, and their communication costs. A common insight that applies to all the algorithms of \cite{HGR, DGL, AJX, CT, FAB, dobre_powerstore, Cadambe_Lynch_Medard_Musial_new,  Cadambe_Lynch_Medard_Musial_NCA, androulaki2014_separate_metadata, spiegelman2015space} is that, among the class of all executions with at most $\nu$ active write operations, the worst case storage cost of implementing an atomic shared memory object whose values come from a finite set $\mathcal{V}$ is \emph{at least} $\nu \frac{N\log_{2}|\mathcal{V}|}{N-f}.$ In fact, references \cite{DGL, Cadambe_Lynch_Medard_Musial_NCA,androulaki2014_separate_metadata, CT} conduct a formal analysis of the incurred storage cost and show that the storage cost incurred\footnote{{There are subtle differences in the storage cost incurred by the algorithms of \cite{Cadambe_Lynch_Medard_Musial_new, Cadambe_Lynch_Medard_Musial_NCA,androulaki2014_separate_metadata, CT, DGL}. Nonetheless, $\nu \frac{N}{N-f}\log_{2}|\mathcal{V}|$ is a lower bound on the cost incurred by these algorithms in the \emph{worst case}, among executions where the number of active writes is bounded by $\nu$}.} is approximately $\nu \frac{N \log_{2}|\mathcal{V}|}{N-f}$.
While the prior works highlight the benefit of erasure coding when the number of active writes is small, the storage cost benefits of erasure coding vanish as the number of active writes increases. In particular, for a sufficiently large value of $\nu$, erasure coding based algorithms can even have a higher storage cost as compared to replication based algorithms \cite{ABD, LDR}, which incur a storage cost of $\Theta(f) \log_{2}|\mathcal{V}|$ irrespective of the number of active writes. 

In contrast with the storage cost upper bounds in literature, our lower bounds of Theorem \ref{thm:second} and \ref{thm:third} do not depend on the number of active writes. Furthermore, if $f$ is proportional to $N,$ then storage cost lower bounds of Theorem \ref{thm:second} and \ref{thm:third} are both $o(f) \log_{2}|\mathcal{V}|+o(\log_{2}|\mathcal{V}|).$ Prior literature in conjunction with our results of Theorems \ref{thm:second} and \ref{thm:third} motivates the following question:

\begin{mdframed}
	\textbf{Question 2:} %{\bf(a)} Can we develop an atomic MWSR shared memory emulation algorithm whose storage cost does not grow with the number of active writes?{\bf(b)} 
	Can we develop an algorithm whose storage cost, when $f$ is proportional to $N$, is as small as $o(f)\log_{2}|\mathcal{V}|$ and does not grow with the number of active writes? 
\end{mdframed}
\textbf{Summary of Theorem \ref{thm:fourth}:} 
We provide partial answer to Question 2 in our lower bound presented in {Theorem \ref{thm:fourth}}. The lower bound states that the answer to Question 2 is negative, if the write protocol of the algorithm satisfies certain technical conditions described in Section \ref{sec:fourth}. Informally speaking, the technical conditions in Section \ref{sec:fourth} imply that the write operation is executed in phases, and a message containing information about the value is sent to the servers in at most one phase per write operation. For any atomic MWSR algorithm that ensures that all operations terminate in every execution where the active number of write operations is at most $\nu$ and the number of server failures is at most $f$, Theorem \ref{thm:fourth} shows that if the write protocol satisfies the conditions stated in Section \ref{sec:fourth}, then the storage cost cannot be smaller than  $\nu^{*} \frac{\log_2|\mathcal{V}|}{N-f+\nu^{*}-1}-o(\log|\mathcal{V}|),$ where $\nu^{*}$ is the minimum of $\{f+1,\nu\}.$

Theorem \ref{thm:fourth} is interesting from a conceptual viewpoint since it captures the dependence of the storage cost on the degree of concurrency that has been noticed in the upper bounds of \cite{DGL, Cadambe_Lynch_Medard_Musial_new, Cadambe_Lynch_Medard_Musial_NCA,androulaki2014_separate_metadata, CT}. In particular, the bound of Theorem \ref{thm:fourth} can be much larger than  the bounds of Theorems \ref{thm:second} and \ref{thm:third}, if the parameters $\nu$ and $f$ are sufficiently large. If the number of active write operations exceeds $(f+1)$, then our storage cost lower bound of Theorem \ref{thm:fourth}, which equals $(f+1)\log_2|\mathcal{V}|-o(\log_2|\mathcal{V}|),$ implies that replication based algorithms are approximately optimal in the class of algorithms described in the theorem.% is approximately equal to the storage cost of replication based algorithms (See \cite{Cadambe_Lynch_Medard_Musial_new}). %Thus, if the number of active writes is too large, then the ABD algorithm is approximately storage optimal among the class of algorithms considered here. 

The class of algorithms that satisfy the conditions stated in {Section \ref{sec:fourth}} include a majority of the algorithms in literature \cite{DGL, AJX, CT, FAB, dobre_powerstore, Cadambe_Lynch_Medard_Musial_NCA}. We refer the reader to Section \ref{sec:fourth} for a more detailed justification. {Theorem \ref{thm:fourth}}, in the stated form, does not apply to a few algorithms \cite{androulaki2014_separate_metadata, HGR} because these protocols send messages related to the value of the write operation in two phases; one phase is used to send a hash of the value for client verification purposes, and a second phase is used to send codeword symbols corresponding to the value. In related discussions in Section \ref{sec:fourth}, we conjecture that the statement of {Theorem \ref{thm:fourth}} and the proof can be modified, without deviating too much from our approach, to apply to a larger class of algorithms which include \cite{androulaki2014_separate_metadata, HGR}.}

In Figure \ref{fig_compare} we compare the proposed total-storage lower bounds with the previous achievable upper bounds.

\begin{figure}
\centering
\begin{picture}(320,240)
\put(0,0){\includegraphics[scale=0.8]{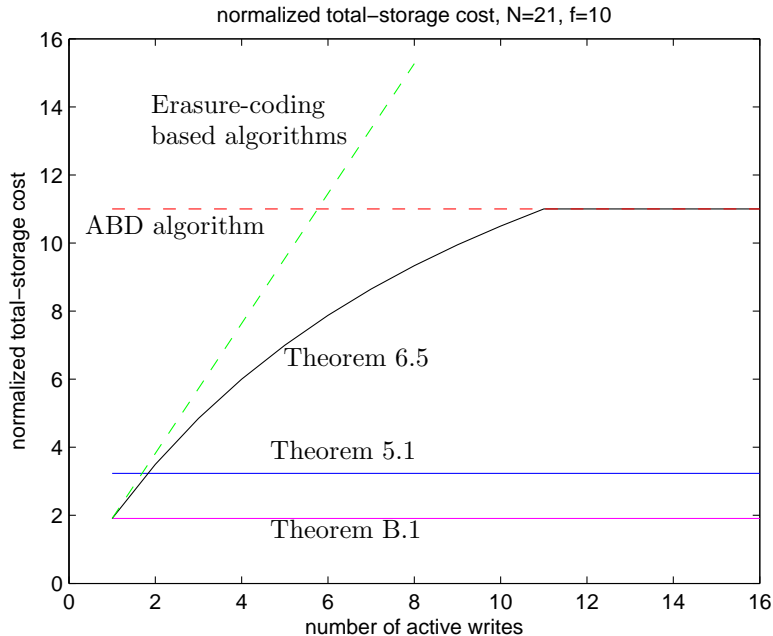}}
\put(120,45){Theorem \ref{thm:first}}
\put(120,75){Theorem \ref{thm:third}}
\put(125,110){Theorem \ref{thm:fourth}}
\put(50,160){ABD algorithm}
\put(75,200){\parbox{1.2in}{Erasure-coding based algorithms}}
\end{picture}
\caption{Storage cost upper and lower bounds for $N=21$ servers and $f=10$ server failures. We plot the total-storage cost normalized by $\log_2|\mathcal{V}|$ when $|\mathcal{V}| \to \infty$. Theorems \ref{thm:third}, \ref{thm:fourth}, \ref{thm:first} are lower bounds obtained in this paper, that corresponds to $2\frac{N}{N-f+2}, \nu^* \frac{N}{N-f+\nu^*-1}, \frac{N}{N-f}$, respectively, $\nu^* = \min(f+1,\nu)$. And ABD and erasure-coding refer to upper bounds achieved in \cite{ABD} and \cite{DGL, Cadambe_Lynch_Medard_Musial_NCA,androulaki2014_separate_metadata, CT},respectively, which corresponds to $f+1$ and $\nu \frac{N}{N-f}$.}
\label{fig_compare}
\end{figure}

\vspace{-5pt}
\subsection{Comparison with Prior Storage Cost Lower Bounds} 
\vspace{-5pt}
We compare our work with results of \cite{chockler2015space, chockler2007amnesic, LDR, spiegelman2015space, Wang_Cadambe, Wang_Cadambe_ISIT} which provide some impossibility results in connection to consistent shared memory emulation. The main reference that directly pertains to our work here is \cite{spiegelman2015space}, which describes interesting, non-trivial lower bounds on shared memory emulation algorithms where the server and client storage schemes satisfy certain restrictions. Reference \cite{spiegelman2015space} assumed that every bit stored in the system is associated uniquely with a write operation, and showed that under such a storage scheme, the worst case total storage cost of the system is at least $\Omega(\min(f,\nu) \log|\mathcal{V}|)$. The implication of \cite{spiegelman2015space} is that in the worst case, if the degree of concurrency is infinite and the server storage scheme is restricted in a particular manner, then the replication based algorithms of \cite{ABD, LDR} are approximately optimal.

 %Our results differs differ  assumptions, and the dependence on the number of active write operations. 
	
The assumption of \cite{spiegelman2015space} that every bit stored is associated with a unique write operation is restrictive and does not apply to all possible storage methods. To see this, consider a scenario where $\mathcal{V}$ is a finite field. Let $v_1, v_2 \in \mathcal{V}$ be values corresponding to two different write operations. Suppose in some algorithm $A$, at some point of an execution, a server stores $v_1 + v_2$, where $+$ denotes the addition operator over the field. Then a bit stored by the  server cannot be uniquely associated with any of the write operations in an unambiguous way. Therefore, the proof technique and the result of \cite{spiegelman2015space} fails to provide any insight on the storage cost of the algorithm $A.$ Put differently, the storage method of \cite{spiegelman2015space} does not allow for server storage techniques that potentially compress the values of different versions together (See Appendix \ref{app:discussion} for more technical details). In contrast, the results of Theorems \ref{thm:second}, \ref{thm:third} are universal and would automatically apply to algorithm $A$. The result of Theorem \ref{thm:fourth} does not impose any structure on the storage method, and could also apply to algorithm $A$ if its write protocol satisfies the appropriate restrictions. 

References \cite{LDR, chockler2015space} describe impossibility results which are peripherally related to our work. In particular, the results of \cite{LDR, chockler2015space} show that if the readers or writers do not help write another client's value \cite{LDR}, or if the servers are modeled as read-write objects \cite{chockler2015space}, the number of servers must be at least linear in the degree of concurrency in the system. However, the results of \cite{LDR, chockler2015space} do not directly relate to the total-storage cost incurred by the algorithm. Reference \cite{chockler2007amnesic} considered algorithms where the readers do not change the state of the servers, that is, the readers do not write any values or metadata. The reference showed that for the class of algorithms considered, if the value comes from an \emph{infinite} set, then there exists no regular shared memory emulation algorithm that tolerates even a single server failure. The reference nonetheless does not provide any insight into the storage cost, particularly when the values come from a finite domain. 

In information theory literature, recently developed formulations generalize the classical erasure coding model with the goal of understanding storage costs in systems where consistency is important \cite{Wang_Cadambe_ISIT, Wang_Cadambe}. The models of \cite{Wang_Cadambe_ISIT, Wang_Cadambe} however, differs from the model considered here. In particular, the models of \cite{Wang_Cadambe, Wang_Cadambe_ISIT} does not involve formal notions of write and client protocols, and the decoding requirement is only loosely based on the notion of atomicity. Our proofs of Theorem \ref{thm:second}, \ref{thm:third} and \ref{thm:fourth} bear resemblance to storage cost lower bounds of \cite{Wang_Cadambe}.

In our concluding section, Section \ref{sec:conclusion}, we provide a summary of the state of the art based on the main results of our paper and of \cite{spiegelman2015space}.

\section{Preliminaries}
\label{sec:preliminaries}
We study the emulation of a shared atomic memory in an asynchronous message passing network. Our setting consists of a set of $N$ server nodes and a possibly infinite set of client nodes. Without loss of generality, we let the set of server nodes be $\{1,2,\ldots,N\}$. We denote the set of client nodes as $\mathcal{C}$. We assume a {multi-writer single-reader}\footnote{The storage cost lower bounds presented in our paper apply trivially to multi-writer single-reader shared memory emulation algorithms as well.} setting, that is, we assume that $\mathcal{C}$ has a single write client; the remaining clients in $\mathcal{C}$ are read clients. Each client node is connected to all the server nodes, and the servers are connected to each other via point-to-point reliable, asynchronous, channels. 
We assume that the readers receive read requests (invocations) from some external source and respond with object values. We assume that writers receive write-requests and respond with acknowledgements. Every new invocation at a client waits for a response of a preceding invocation at the same client. The goal of a shared memory emulation algorithm $A$ studied in this paper is to design the client and server protocols that implement a read-write register of a data object which can take values from a finite set $\mathcal{V}$, with the following safety and liveness properties. 

\emph{Safety Properties:}
{The algorithm must emulate a SWSR regular registers \cite{Lamport86}. Informally a regular shared memory object requires that every read operation returns either the value written by the latest write operation that terminates before the invocation of the read operation, or the value of a write operation that overlaps with the read operation. In Section \ref{sec:fourth} we consider multi-writer single-reader algorithms, and we require the algorithm to be atomic \cite{Lamport86}\footnote{In fact, we will study weakly regular multi-writer single-reader algorithms \cite{shao2011multiwriter}. See details in Section \ref{sec:fourth}}. Informally, in an atomic algorithm, the observed external behavior of every execution looks like the execution of a serial variable type. Recall that SWSR execution of an atomic shared memory emulation is also regular, so our lower bounds for regular algorithms in Theorems \ref{thm:second} and \ref{thm:third} also apply to atomic emulation algorithms.

%{The algorithm must emulate weakly regular registers (the so-called ``MWRegWeak register'' of \cite{shao2011multiwriter}). Informally  a weakly regular shared memory object is one that supports concurrent write and read operations where, in every execution, for every terminating read operation $\pi_r$,  there is a subset $\Phi$ of the non-terminating write operations such that the operations in $\{\pi_r\}\cup\Phi \cup \Pi$ look like the execution of a serial variable, where $\Pi$ is the set of all terminating write operations in the execution.}

%It is worth noting that for the special case where we consider single writer single reader executions, weakly regular registers are the same as regular registers \cite{Lamport86}.

\emph{Liveness Properties:}  An operation of a non-failed client must terminate in a fair execution, so long as some conditions are satisfied in the execution. Specifically, we require operations to terminate if the number of server failures in the execution is bounded by a parameter $f.$ {We consider algorithms with weaker liveness properties as well in Section \ref{sec:fourth}, where we ensure termination of operations in executions if the number of \emph{active} write operations is bounded. A formal statement of the weaker liveness properties is provided in Section \ref{sec:fourth}. }

We require the above correctness properties to hold irrespective of the number of client failures. The data object can take values from a {finite} set $\mathcal{V}.$

\subsection*{Storage Cost Definition}
Informally speaking, the {\em storage cost} of an algorithm is the total number of bits stored by the servers. In general, for an algorithm where the state of server node $i \in \{1,2,\ldots,N\}$ can take values from a set $\mathcal{S}_i,$ we define the storage cost of the server to be equal to $\log_{2} |\mathcal{S}_i|$ bits. The \emph{max-storage cost} of the algorithm $A$ is defined to be $$MaxStorage(A)=\displaystyle\max_{i \in \{1,2,\ldots,N\}} \log_{2}|\mathcal{S}_{n}|.$$ The \emph{total-storage cost} of the algorithm is defined to be $$TotalStorage(A)={\sum_{i =1}^{N} \log_{2}|\mathcal{S}_{i}|}.$$

\section{Storage Cost Lower Bound for Algorithms Without Gossip}
\label{sec:secondthm}
In Appendix \ref{sec:firstthm} we provide a simple but non-trivial proof of the storage cost lower bound that is analogous to Singleton bound. Some of the proof techniques there are also applied in this section. The readers can first read Appendix \ref{sec:firstthm} as an warm-up exercise.

Our main result of this section is a storage cost lower bound, assuming that servers do not gossip. Specifically, in this section, we assume that every message is sent from a client to a server, or from a server to a client.  The lower bound is an implication of Theorem \ref{thm:second}, which describes constraints on the cardinalities of the server states that must be satisfied by any atomic shared memory emulation algorithm where servers do no gossip. The lower bounds on the max- and total-storage costs are stated in Corollary \ref{cor:second}. After stating Theorem \ref{thm:second} and Corollary \ref{cor:second}, we provide an informal description of the proof of Theorem \ref{thm:second}, followed by a formal description.

\subsection{Statement of Theorem \ref{thm:second}}
\begin{theorem}
Let $A$ be a single-writer-single-reader shared memory emulation algorithm that implements a {regular} read-write object whose values come from a finite set $\mathcal{V}$.
Suppose that in $A$, every message is sent from a server to a client, or from a client to a server. Also, suppose that every server's state belongs to a set $\mathcal{S}$ in algorithm $A$. 

Suppose that the algorithm $A$ satisfies the following liveness property: 
In a fair execution of $A,$ if the number of server failures is no bigger than $f$, $f \ge 2$, then every operation invoked at a non-failing client terminates.

Then, for every subset $\mathcal{N} \subset \{1,2,\ldots,N\}$ where $|\mathcal{N}| = N-f$, 

$$\sum_{n \in \mathcal{N}} \log_{2}|\mathcal{S}_{i}| + \max_{n \in \mathcal{N}} \log_{2}|\mathcal{S}_{i}|  \geq {\log_{2}|\mathcal{V}| + \log_{2}\left(|\mathcal{V}|-1)\right) - \log_{2}(N-f)}.$$
\label{thm:second}
\end{theorem}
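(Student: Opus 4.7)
Fix $\mathcal{N}\subset\{1,\dots,N\}$ with $|\mathcal{N}|=N-f$ and let $\mathcal{N}^c$ be its complement. The plan is to exhibit, for each ordered pair $(v_1,v_2)\in\mathcal{V}^2$ with $v_1\neq v_2$, a canonical execution of $A$ and an associated triple (state of $\mathcal{N}$, index of a distinguished server, state of that server), and then show that the map from pairs to triples is injective. Writing the bound in the equivalent product form $|\mathcal{V}|(|\mathcal{V}|-1)\le \prod_{n\in\mathcal{N}}|\mathcal{S}_n|\cdot(N-f)\cdot\max_{n\in\mathcal{N}}|\mathcal{S}_n|$, injectivity yields the theorem after taking logarithms.

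As a warm-up I would reuse the Singleton-style construction of Appendix~\ref{sec:firstthm}. For each $v\in\mathcal{V}$ define the baseline execution $\pi_v$ in which messages to and from the $f$ servers in $\mathcal{N}^c$ are indefinitely delayed; the write client invokes $\mathrm{write}(v)$, which by liveness must terminate using acknowledgements from $\mathcal{N}$ alone. Let $\sigma(v)=(s_n(v))_{n\in\mathcal{N}}$ be the resulting state of $\mathcal{N}$. Two distinct values $v,v'$ must produce distinct $\sigma(v),\sigma(v')$, since otherwise a subsequent read querying only $\mathcal{N}$ would return the same value in both $\pi_v$ and $\pi_{v'}$, contradicting regularity. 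So $v\mapsto\sigma(v)$ is injective.

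Next, for each pair $(v_1,v_2)$ with $v_1\neq v_2$, I extend $\pi_{v_1}$ as follows: the writer invokes $\mathrm{write}(v_2)$; among the messages the writer sends to $\mathcal{N}$ for this write, only the one to a specific server $n^*=n^*(v_1,v_2)\in\mathcal{N}$ is processed, and the writer then stalls awaiting further acknowledgements. Call this execution $\pi_{v_1,v_2}$. Its $\mathcal{N}$-state agrees with $\sigma(v_1)$ on all coordinates except $n^*$, where the state is updated from $s_{n^*}(v_1)$ to some $s_{n^*}(v_1\to v_2)\in\mathcal{S}_{n^*}$. Define the encoding map
\[
\phi(v_1,v_2)\;=\;\bigl(\sigma(v_1),\;n^*(v_1,v_2),\;s_{n^*}(v_1\to v_2)\bigr).
\]
Its codomain has cardinality at most $\prod_{n\in\mathcal{N}}|\mathcal{S}_n|\cdot(N-f)\cdot\max_{n\in\mathcal{N}}|\mathcal{S}_n|$, so the inequality in the theorem follows at once if $\phi$ is injective.

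The injectivity argument has two parts. Suppose $\phi(v_1,v_2)=\phi(v_1',v_2')$. From the first coordinate, $\sigma(v_1)=\sigma(v_1')$ forces $v_1=v_1'$ by the Singleton-style step. Using the second coordinate $n^*=n^{*\prime}$ and the third, we are reduced to showing $s_{n^*}(v_1\to v_2)=s_{n^*}(v_1\to v_2')$ implies $v_2=v_2'$. I expect this to be the main obstacle. The naive approach fails because the states of $\mathcal{N}$ at the partial-write points of $\pi_{v_1,v_2}$ and $\pi_{v_1,v_2'}$ could be identical, with the distinguishing information being only the writer state and the different $v_2$- versus $v_2'$-messages in flight; a read invoked at this moment is consistent with returning $v_1$ in both executions. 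To force the contradiction I plan to leverage the hypothesis $f\ge 2$ by continuing each execution using the liveness of the $v_2$-write: the writer eventually delivers its remaining messages to $\mathcal{N}\setminus\{n^*\}$, whose states diverge (since a post-completion read returns $v_2$ vs.\ $v_2'$). By additionally failing server $n^*$ within the remaining failure budget, the post-completion state of $\mathcal{N}\setminus\{n^*\}$ alone must determine the read's return value; indistinguishability at $n^*$ then propagates through the common extensions and forces $v_2=v_2'$. The technical hurdle is making this last step precise in the presence of the differing writer states and in-flight messages, and I anticipate it will need a careful bookkeeping of which scheduling choices are available to the adversary given only $f$ failures.

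Given injectivity of $\phi$, the counting step gives the theorem.
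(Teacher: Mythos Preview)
Your encoding idea matches the paper's, but there is a genuine gap in the injectivity step for $v_2$, and your proposed fix does not close it.

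\textbf{The gap.} Your choice of $n^*$ is arbitrary (``a specific server''), and after delivering a single writer message to $n^*$ there is no reason the resulting state $s_{n^*}(v_1\to v_2)$ carries any information about $v_2$ at all. Concretely, in many algorithms (ABD, for instance) the writer's first phase is a value-independent timestamp query; the first message delivered to $n^*$ is then identical for every $v_2$, so $s_{n^*}(v_1\to v_2)=s_{n^*}(v_1\to v_2')$ for all $v_2,v_2'$ and $\phi$ collapses $|\mathcal{V}|-1$ pairs to one triple. Nothing in your construction rules this out.

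\textbf{Why the fix fails.} You propose to continue the second write and fail $n^*$ ``within the remaining failure budget,'' appealing to $f\ge 2$. But you have already blocked all $f$ servers in $\mathcal{N}^c$; there is no remaining budget, regardless of $f\ge 2$. Even setting that aside, the post-one-step continuations of $\pi_{v_1,v_2}$ and $\pi_{v_1,v_2'}$ diverge via the \emph{writer's} internal state and its undelivered messages to $\mathcal{N}\setminus\{n^*\}$, none of which are recorded by $\phi$. So even if the continuations eventually produce distinguishable server states, that says nothing about the triple $\phi$ you committed to.

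\textbf{What the paper does instead.} The paper lets the second write run to completion and uses a valency argument to locate a pair of \emph{adjacent} points $(Q_1,Q_2)$ along this execution such that: at $Q_1$, a read invoked with the writer stalled \emph{can} return $v_1$ (``$1$-valent''); at $Q_2$, no such read can return $v_1$, hence by regularity it must be able to return $v_2$ (``$2$-valent, not $1$-valent''). Since $Q_1,Q_2$ are adjacent and there is no server gossip, at most one server $s$ changes state between them. The encoding is then $(\text{states of }\mathcal{N}\text{ at }Q_1,\ s,\ \text{state of }s\text{ at }Q_2)$. Injectivity follows because the states at $Q_1$ determine what a stalled-writer read can return there (pinning down $v_1$ up to the case analysis), and the states at $Q_2$ force any such read to return $v_2$. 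The valency step is exactly the missing idea: it locates the single step at which enough of $v_2$ has reached the servers to flip what a read returns, which is what your arbitrary ``first message to $n^*$'' cannot guarantee.
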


\begin{corollary}
Let $A$ be a single-writer-single-reader shared memory emulation algorithm that implements a {regular} read-write object whose values come from a finite set $\mathcal{V}$.
Suppose that in $A$, every message is sent from a server to a client, or from a client to a server. Also, suppose that every server's state belongs to a set $\mathcal{S}$ in algorithm $A$. 

Suppose that the algorithm $A$ satisfies the following liveness property: 
In a fair execution of $A,$ if the number of server failures is no bigger than $f$, $f \ge 2$, then every operation invoked at a non-failing client terminates. Then
$$MaxStorage(A) \geq \frac{\log_{2}|\mathcal{V}| + \log_{2}(|\mathcal{V}|-1) - \log_{2}(N-f)}{N-f+1},$$
and
$$TotalStorage(A) \geq \frac{N(\log_{2}|\mathcal{V}| + \log_{2}(|\mathcal{V}|-1)- \log_{2}(N-f))}{N-f+1}.$$
\label{cor:second}
\end{corollary}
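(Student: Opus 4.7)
The plan is to derive both bounds directly from Theorem \ref{thm:second} by elementary manipulations. Write $s_n := \log_2|\mathcal{S}_n|$ for $n = 1,\ldots,N$ and set $C := \log_2|\mathcal{V}| + \log_2(|\mathcal{V}|-1) - \log_2(N-f)$, so Theorem \ref{thm:second} asserts that $\sum_{n \in \mathcal{N}} s_n + \max_{n \in \mathcal{N}} s_n \geq C$ for every $(N-f)$-subset $\mathcal{N} \subseteq \{1,\ldots,N\}$.

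For the MaxStorage bound, I would fix any such $\mathcal{N}$ and observe that each of the $N-f$ summands in $\sum_{n \in \mathcal{N}} s_n$ and the max term are individually bounded by $MaxStorage(A)$. Hence the left-hand side is at most $(N-f+1)\, MaxStorage(A)$, and rearranging gives $MaxStorage(A) \geq C/(N-f+1)$, which is the first inequality of the corollary.

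For the TotalStorage bound, I would relabel the servers so that $s_{(1)} \leq s_{(2)} \leq \cdots \leq s_{(N)}$ and apply Theorem \ref{thm:second} to the subset $\mathcal{N}^* = \{(1),\ldots,(N-f)\}$ of the $N-f$ servers with smallest storage. The maximum over $\mathcal{N}^*$ is then exactly $s_{(N-f)}$, so the theorem gives $\sum_{j=1}^{N-f} s_{(j)} + s_{(N-f)} \geq C$. Every summand on the left is at most $s_{(N-f)}$, so the same reasoning as in the previous paragraph yields $s_{(N-f)} \geq C/(N-f+1)$. On the other hand, the $f$ servers outside $\mathcal{N}^*$ each satisfy $s_{(j)} \geq s_{(N-f)}$ for $j > N-f$, so $\sum_{j = N-f+1}^{N} s_{(j)} \geq f\, s_{(N-f)}$. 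Adding this to the rearranged inequality $\sum_{j=1}^{N-f} s_{(j)} \geq C - s_{(N-f)}$ gives $TotalStorage(A) \geq C + (f-1)\, s_{(N-f)}$, and since the hypothesis $f \geq 2$ forces $(f-1) \geq 0$, I may substitute the lower bound $s_{(N-f)} \geq C/(N-f+1)$ to obtain $TotalStorage(A) \geq C + (f-1)\, C/(N-f+1) = NC/(N-f+1)$, as required.

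The derivation is essentially routine once Theorem \ref{thm:second} is in hand; the only design choice is picking $\mathcal{N}^*$ to be the $N-f$ servers of smallest storage, so that the max term appearing in Theorem \ref{thm:second} coincides with the largest element of $\mathcal{N}^*$ and hence is simultaneously a lower bound for every server not in $\mathcal{N}^*$. This is what allows me to charge the remaining $f$ servers against $s_{(N-f)}$ and convert the per-subset inequality into a global bound on the sum. The hypothesis $f \geq 2$ enters only at the final substitution step to guarantee the sign of the coefficient $(f-1)$.
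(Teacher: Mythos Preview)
Your proof is correct and follows essentially the same approach as the paper: order the servers by storage, apply Theorem~\ref{thm:second} to the $N-f$ smallest to obtain $s_{(N-f)} \geq C/(N-f+1)$, and then charge the remaining $f$ servers against $s_{(N-f)}$ to bound the total. One minor remark: the hypothesis $f \geq 2$ is actually inherited from Theorem~\ref{thm:second} itself rather than being needed only at your final substitution (where $f \geq 1$ would already suffice to make $f-1 \geq 0$).
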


\begin{proof}[Proof of Corollary \ref{cor:second}]
	We assume, without loss of generality, that $|\mathcal{S}_{1}| \leq |\mathcal{S}_{2} \leq \ldots \leq |\mathcal{S}_{N}|.$ From Theorem \ref{thm:second}, we have 
	$$\sum_{n =1}^{N-f} \log_{2} |\mathcal{S}_n| + \log_{2}|\mathcal{S}_{N-f}| \geq \log_{2}|\mathcal{V}| + \log_{2}(|\mathcal{V}|-1) - \log_{2}(N-f) .$$

	As a consequence, we have $\log_{2}|\mathcal{S}_{N-f}|\geq \frac{\log_{2} |\mathcal{V}| + \log_{2}(|\mathcal{V}|-1) - \log_2(N-f) }{N-f+1}$. Therefore, we have $\displaystyle\max_{n \in \{1,2,\ldots,N\}} \log_{2}|\mathcal{S}_{n}| \geq  \frac{\log_{2} |\mathcal{V}| + \log_{2}(|\mathcal{V}|-1) - \log_2(N-f)}{N-f+1}.$ Furthermore, we have $\log_{2}|\mathcal{S}_{n}| \geq \frac{\log_{2} |\mathcal{V}| + \log_{2}(|\mathcal{V}|-1) - \log_2(N-f) }{N-f+1}$ for every $n \in \{N-f+1,\ldots,N\}$. This implies the following chain of relations.
	\begin{eqnarray*}
		\sum_{n =1}^{N} \log_{2} |\mathcal{S}_n| &\geq& \log_{2}|\mathcal{V}| + \log_{2}(|\mathcal{V}|-1)- \log_{2}(N-f) - \log_{2}|\mathcal{S}_{N-f}| + \sum_{n=N-f+1}^{N} \log_{2}|\mathcal{S}_{n}| \\
		&\geq& \log_{2}|\mathcal{V}| + \log_{2}(|\mathcal{V}|-1)- \log_{2}(N-f) + \sum_{n=N-f+2}^{N} \log_{2}|\mathcal{S}_{n}| \\
		&\geq& \left(\log_{2}|\mathcal{V}| + \log_{2}(|\mathcal{V}|-1)- \log_{2}(N-f)\right) \left(1+ \frac{f-1}{N-f+1}\right)  
		\\&=& N \left(\frac{\log_{2} |\mathcal{V}| + \log_{2}(|\mathcal{V}|-1) - \log_2(N-f) }{N-f+1} \right)
		%\Rightarrow \frac{1}{N} \sum_{n=1}^{N}\log_{2}|\mathcal{S}_{n}| &\geq& \frac{ \log_{2} |\mathcal{V}| + \log_{2}(|\mathcal{V}|-1) - \log_{2}(N-f)}{N-f+1}.
	\end{eqnarray*}
This completes the proof.
\end{proof}

%{
%In Corollary \ref{cor:second}, we show that the assumption that every server's state comes from the same set can be lifted.
%\begin{corollary}
%

\subsection{Informal Proof Sketch of Theorem \ref{thm:second}} 
Informally speaking, our lower bound argument is as follows. For every subset $\mathcal{N} \subset \{1,2,\ldots,N\}$ where $|\mathcal{N}| = N-f$, we construct an execution where the servers in $\{1,2,\ldots,N\}-\mathcal{N}$ fail {at the beginning of the execution. The execution has two write operations for values $v_1$ and $v_2,$ where $v_1 \neq v_2$. The second write operation which writes value $v_2$ begins after the termination of the first write operation, which has value $v_1$.} 

{In this execution, after the point of termination of the first write, a reader can return $v_1$ because of {regularity}. Similarly, after the termination of the second write operation, a reader can return $v_2$. Therefore, the value $v_1$ is returnable from the servers at a point before the invocation of the second write operation and $v_2$ is returnable from the servers at a point after {the completion of} second write operation. Furthermore, at every point in the interval of the second write operation, at least one of $v_1$ or $v_2$ are returnable. This implies that, in the interval of the second write operation, there are two consecutive points $P$ and $P'$ such that $v_1$ must be returnable from the non-failing servers at point $P$ and $v_2$ must be returnable from the non-failing servers at point $P'$. Since $(v_1,v_2)$ can be any ordered pair of distinct values from $\mathcal{V}$,} there must be a one-to-one mapping between the set $\{(v_1, v_2): (v_1, v_2) \in \mathcal{V}\times \mathcal{V}, v_1 \neq v_2\}$ and the set of possible {configurations of server states} at points $P$ and $P'.$ This implies that the {number of possible server states} at points $P$ and $P'$ is at least $(|\mathcal{V}|)(|\mathcal{V}|-1)$. Since $P$ and $P'$ are {consecutive}, at most one non-failing server changes its state between these two points. At least $N-f-1$ non-failing servers have the same state at point $P$ as at point $P'.$ We use this fact to show that the number of elements in the set of possible server states at two {consecutive} points is at most $\prod_{n \in \mathcal{N}} |\cS_i| \times \max_{n \in \mathcal{N}}|\mathcal{S}_{n}| \times (N-f)$. Therefore, we get $\prod_{n \in \mathcal{N}} |\cS_i| \times \max_{n \in \mathcal{N}}|\mathcal{S}_{n}| \times (N-f) \geq (|\mathcal{V}|)(|\mathcal{V}|-1),$ which implies the lower bound. We now present a formal proof of the lower bound.

\subsection{Formal Proof of Theorem \ref{thm:second}}
 
Consider an arbitrary subset $\mathcal{N} \subset \{1,2,\ldots,N\}$ such that $|\mathcal{N}|=N-f$. {We construct} $|\mathcal{V}| \times (|\mathcal{V}|-1)$ executions of the algorithm $A$. In particular, for every tuple $(v_1, v_2) \in \mathcal{V}\times \mathcal{V}$ where $v_1 \neq v_2$, we create an execution $\alpha^{(v_1, v_2)}$ of algorithm $A$. In our proof, we first describe execution $\alpha^{(v_1,v_2)}$ in Section \ref{subsubsec:description}. Then, we describe some properties of execution $\alpha^{(v_1,v_2)}$ in Section \ref{subsubsec:properties}. %Then, in Lemmas \ref{lem:not1valent}, \ref{lem:adjacentpoints} and \ref{lem:criticalpoints}, we describe some properties of execution $\alpha^{(v_1,v_2)}$. These lemmas are presented and proved in Section \ref{subsubsec:properties}. 
We use the results of Section \ref{subsubsec:properties} to prove Theorem \ref{thm:second} in Section \ref{subsubsec:proof}.

\subsubsection{Execution $\alpha^{(v_1,v_2)}$}
\label{subsubsec:description}
In execution $\alpha^{(v_1,v_2)}$ the readers and the channels from and to the readers do not {perform} any actions. Among the set of write clients $\mathcal{C}_{w}$, only one write client takes actions. {The $f$ servers in $\{1,2,\ldots,N\} - \mathcal{N}$ fail at the beginning of the execution}. No further server failures occur in the execution.  A write $\pi_1$ is invoked at a write client with value $v_1$. All the components of the system except the readers, and the channels from and to the readers, take turns in a fair manner until the completion of $\pi_1$. Recall that, in a fair execution where {the} number of server failures is at most $f$, any write that begins eventually terminates irrespective of the number of read client failures. {From the perspective of the servers, write client and the channels between them, }the execution $\alpha^{(v_1,v_2)}$ is indistinguishable from a fair execution where all the read clients fail, we can ensure the execution can be extended until the write operation $\pi_1$ terminates. {Immediately after the termination of $\pi_1$}, a write operation $\pi_2$ with value $v_2$ begins. All the components of the system except the readers and the channels from and to the readers take turns in a fair manner until the completion of $\pi_2$. The execution $\alpha^{(v_1,v_2)}$ ends after the termination of $\pi_2$. 

\begin{figure}
	\begin{center}
		\includegraphics[height=1.4in, width=3in]{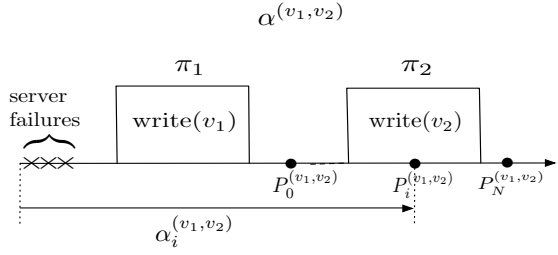}
\end{center}
\caption{Pictorial description of executions $\alpha^{(v_1,v_2)}$ and $\alpha^{(v_1,v_2)}_{i}$.}
\label{fig:executions}
\end{figure}

\subsubsection{Properties of Execution $\alpha^{(v_1,v_2)}$}
\label{subsubsec:properties}

Let $P_0^{(v_1, v_2)}, P_1^{(v_1, v_1)}, P_2^{(v_1,v_2)}, \ldots, P_M^{(v_1,v_2)}$ be the adjacent points (or points after successive steps) of the constructed execution $\alpha^{(v_1,v_2)}$, where $P_0^{(v_1,v_2)}$ is an arbitrary point after the termination of $\pi_1$ and before the invocation of $\pi_2$ and $P_M^{(v_1,v_2)}$ is an arbitrary point after the point of termination of $\pi_2$, and $M$ is a positive integer. For $i \in \{0,1,2,\ldots,M\}$, we denote by $\alpha_i^{(v_1,v_2)},$ the execution between the initial point of $\alpha^{(v_1,v_2)}$ and point $P_i^{(v_1,v_2)}$. The executions $\alpha^{(v_1,v_2)}$ and $\alpha_i^{(v_1,v_2)}$ are {depicted} in Fig. \ref{fig:executions}.

 For an integer $i$ in $\{0,1,\ldots,M\}$, we refer to point $P_{i}^{(v_1,v_2)}$  as a \emph{$k$-valent} point if it satisfies certain properties that are described in Definition \ref{def:1valent}, $k=1,2$. Informally speaking, a point $P_i^{(v_1,v_2)}$ is said to be $k$-valent if there exists an execution that starts at $P_i^{(v_1,v_2)},$ where a reader returns $v_k$. 

{
\begin{definition}[$k$-valent, $k \in \{1,2\}$]
\label{def:1valent}
For $i \in \{0,1,2,\ldots,M\}$, a point $P_i^{(v_1,v_2)}$ in the constructed $\alpha_i^{(v_1,v_2)}$  is said to be $k$-valent if we can extend $\alpha_i^{(v_1,v_2)}$ to an execution $\beta$ as follows: 
After $P_i^{(v_1,v_2)}$ all the messages from and to the writer are delayed indefinitely. A read operation starts at point ${P}_i^{(v_1,v_2)}$ and all the components, except the writer and the channels from and to the writer, {perform actions until the read operation terminates.} The read operation returns $v_k$. 
\end{definition}
}

{It should be noted that a point of an execution can be both 1-valent and 2-valent; thus our definition of valency has a somewhat different structure compared to other definition of valency in other impossibility arguments (e.g. \cite{FLP}).}

\begin{lemma}
For $i \in \{0,1,2,\ldots,M\}$, a point $P_i^{(v_1,v_2)}$ that is not $1$-valent is $2$-valent.
\label{lem:not1valent}
\end{lemma}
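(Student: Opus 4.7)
The plan is to prove the lemma by the following contrapositive-style strategy: I will show that any extension $\beta$ of $\alpha_i^{(v_1,v_2)}$ having the form prescribed in Definition \ref{def:1valent} (i) actually exists, and (ii) terminates with a read return value in $\{v_1,v_2\}$. Once this dichotomy is established, the assumption that $P_i^{(v_1,v_2)}$ is not $1$-valent immediately forces the constructed $\beta$ to return $v_2$, witnessing $2$-valency.

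First, I will build a concrete extension $\beta$. Starting from $\alpha_i^{(v_1,v_2)}$, I indefinitely delay all messages to and from the writer (treating the writer as a failed client, which is allowed since correctness is required to hold irrespective of the number of client failures), invoke a read at the sole reader immediately after $P_i^{(v_1,v_2)}$, and then let all servers in $\mathcal{N}$ together with the read client take turns in a fair manner. Only the $f$ servers in $\{1,2,\ldots,N\}-\mathcal{N}$ have failed, so the liveness hypothesis of Theorem \ref{thm:second} applies to the read and guarantees that it terminates with some return value $v\in\mathcal{V}$.

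Second, I will argue that $v\in\{v_1,v_2\}$ by invoking regularity and performing a case analysis on the location of $P_i^{(v_1,v_2)}$ inside $\alpha^{(v_1,v_2)}$. In $\beta$ the write $\pi_1$ has always completed strictly before the read is invoked (because $P_0^{(v_1,v_2)}$, hence $P_i^{(v_1,v_2)}$, occurs after $\pi_1$ terminates). There are three sub-cases. If $P_i^{(v_1,v_2)}$ precedes the invocation of $\pi_2$, then $\pi_2$ never begins in $\beta$ and $\pi_1$ is the only write that ever took place; regularity forces $v=v_1$. If $P_i^{(v_1,v_2)}$ lies strictly between the invocation and termination of $\pi_2$, then in $\beta$ the write $\pi_2$ is concurrent with the read (it can never complete because writer messages are suppressed), and regularity allows exactly $v\in\{v_1,v_2\}$. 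If $P_i^{(v_1,v_2)}$ is at or after the termination of $\pi_2$, then both writes have completed before the read invocation and regularity forces $v=v_2$.

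Combining the two facts finishes the argument: the extension $\beta$ terminates with $v\in\{v_1,v_2\}$; if $P_i^{(v_1,v_2)}$ is not $1$-valent, then by definition no extension returns $v_1$, so in particular the extension $\beta$ just constructed must return $v_2$, exhibiting $P_i^{(v_1,v_2)}$ as $2$-valent. The main delicate point will be verifying that $\beta$ genuinely qualifies as a fair execution under the paper's model---in particular, that indefinitely suppressing messages to and from the writer is consistent with the stated fairness and failure conventions. This reduces to observing that the writer is treated as a failed client (an arbitrary number of client failures is allowed), the server failure count remains exactly $f$, and fairness is imposed only on non-failed components; under these conventions the liveness hypothesis applies to the read client and justifies termination.
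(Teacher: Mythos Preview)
Your proposal is correct and follows essentially the same approach as the paper: construct an extension $\beta$ by suppressing the writer, invoke a read, use liveness (via indistinguishability from a fair execution with a failed writer) to get termination, use regularity to confine the return value to $\{v_1,v_2\}$, and then use the failure of $1$-valency to force $v_2$. The only structural difference is that the paper factors the ``return value lies in $\{v_1,v_2\}$'' step into a separate Lemma~\ref{lem:returnsv1orv2} (proved via a serialization argument), whereas you inline it with a three-case analysis on the position of $P_i^{(v_1,v_2)}$ relative to $\pi_2$; both arguments are equivalent.
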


To show Lemma \ref{lem:not1valent}, we first prove Lemma \ref{lem:returnsv1orv2} which informally states that a reader that begins after the termination of the write $\pi_1$ should return $v_1$ or $v_2$ because of {regularity} of the algorithm. 

{\begin{lemma}
\label{lem:returnsv1orv2}
Consider an execution $\beta$ which is an extension of $\alpha_{i}^{(v_1, v_2)}$. In $\beta$, after point $P_{i}^{(v_1, v_2)}$, the writer stops taking steps and all messages from and to the writer are delayed indefinitely. A read operation begins at some point after point $P_{i}^{(v_1, v_2)}$ and terminates in $\beta$. 

Then, the read operation returns either $v_1$ or $v_2$.
\end{lemma}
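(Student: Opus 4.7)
The plan is to apply the regularity property of the emulated register directly. The first step is to pin down what write operations can possibly occur in $\beta$. By the construction of $\alpha^{(v_1,v_2)}$ in Section \ref{subsubsec:description}, the only write operations invoked anywhere in the prefix $\alpha_i^{(v_1,v_2)}$ are $\pi_1$ (with value $v_1$) and possibly $\pi_2$ (with value $v_2$, depending on whether $P_i^{(v_1,v_2)}$ lies after the invocation of $\pi_2$). In the extension $\beta$, the writer takes no steps after $P_i^{(v_1,v_2)}$ and all messages to and from the writer are delayed indefinitely, so no new write is ever invoked after $P_i^{(v_1,v_2)}$; in particular the single-writer assumption precludes any additional write from being injected. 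Moreover, since $P_i^{(v_1,v_2)}$ lies weakly after $P_0^{(v_1,v_2)}$, and $P_0^{(v_1,v_2)}$ is by definition placed after the termination of $\pi_1$, operation $\pi_1$ has terminated strictly before the read is invoked in $\beta$.

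The second step is to invoke regularity and perform a case analysis on the status of $\pi_2$ at the moment the read begins. Regularity demands that the read return either the value of the latest write that terminated before its invocation, or the value of some write operation that overlaps the read. Case (i): if $\pi_2$ has not yet been invoked in $\alpha_i^{(v_1,v_2)}$, then $\pi_1$ is the only write in $\beta$, it has terminated, and no write overlaps the read, so the read must return $v_1$. Case (ii): if $\pi_2$ has been invoked but not yet terminated in $\alpha_i^{(v_1,v_2)}$, then because the writer is frozen in $\beta$, $\pi_2$ remains pending throughout $\beta$ and hence overlaps the read; regularity therefore allows the read to return either $v_1$ (from the latest completed write $\pi_1$) or $v_2$ (from the overlapping $\pi_2$). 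Case (iii): if $\pi_2$ has already terminated in $\alpha_i^{(v_1,v_2)}$, then $\pi_2$ is the latest completed write before the read and no write overlaps the read, so regularity forces the return value to be $v_2$. In every case the returned value lies in $\{v_1, v_2\}$.

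I do not anticipate a substantive obstacle: the content of the lemma is essentially a direct consequence of regularity, once we have ruled out the existence of any write in $\beta$ beyond $\pi_1$ and $\pi_2$ via the single-writer assumption and the freezing of the writer. The only care needed is the bookkeeping of which of $\pi_1, \pi_2$ have terminated versus which are still pending at the moment the read is invoked, which is handled cleanly by the three-case split above.
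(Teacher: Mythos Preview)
Your proposal is correct and follows essentially the same approach as the paper: both arguments identify that $\pi_1$ and (possibly) $\pi_2$ are the only writes in $\beta$, note that $\pi_1$ has completed before the read begins, and then invoke regularity to conclude the read must return $v_1$ or $v_2$. The paper phrases the regularity argument via serialization points (the read is serialized either between $\pi_1$ and $\pi_2$ or after $\pi_2$), whereas you use the latest-completed-or-overlapping formulation with a three-case split on the status of $\pi_2$; these are equivalent and the difference is purely cosmetic.
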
}
{The lemma is a natural consequence of the {regularity} of algorithm $A$. We provide a formal proof next.}
\begin{proof}
{The read operation is invoked after the termination of write operation $\pi_1$ in execution $\beta$. It is possible that the write operation $\pi_2$ is invoked before the invocation of the read operation. Because the algorithm is {regular}, we must be able to serialize operations in $\beta$. Because $\pi_2$ is invoked after the completion of $\pi_1$, the operation $\pi_2$ is serialized after operation $\pi_1$. Regarding the serialization point of the read operation, there are only two possibilities: (i) read operation is serialized immediately after $\pi_1$ and before $\pi_2$, and (ii) the read operation is serialized after $\pi_2$. If possibility (i) occurs, that is, if the read operation is serialized after $\pi_1,$ then it returns $v_1,$ which is the value of the write operation $\pi_1$. If possibility (ii) occurs, then the read returns $v_2$, which is the value of the write operation $\pi_2$. Therefore, the read operation returns either $v_1$ or $v_2$. This completes the proof.}
\end{proof}

\begin{proof}[{Proof of Lemma \ref{lem:not1valent}}]
Consider a point $P_i^{(v_1,v_2)}$ that is not $1$-valent. We show that it is $2$-valent by constructing an execution $\beta$ that satisfies Definition \ref{def:1valent} {for $k=2$}. The execution $\beta$ is an extension of $\alpha_i^{(v_1,v_2)}.$ 
In $\beta$, after point $P_i^{(v_1,v_2)}$ all the messages from and to the writer are delayed indefinitely. 

A read operation $\pi$ starts at point ${P}_i^{(v_1,v_2)}$ in $\beta$ and all the components, except the writer and the channels from and to the writer, execute their protocols taking turns in a fair manner. From the perspective of the servers, readers and the channels between the servers and the reader, the execution $\beta$ is indistinguishable from a fair execution of the algorithm where the write client fails before sending or receiving the messages in its channels. Because of the liveness properties satisfied by the algorithm, the read operation terminates. Because the read operation is invoked at point $P_i^{(v_1,v_2)}$ which is after the point of termination of the write operation $\pi_1$, Lemma \ref{lem:returnsv1orv2} implies that the read operation returns $v_1$ or $v_2$. However, because the point $P_{i}^{(v_1, v_2)}$ is not $1$-valent, the read operation cannot return $v_1$. Therefore the read returns $v_2$ at some point $Q$. Let $\beta$ denote the extension of $\alpha_i^{(v_1,v_2)}$ to the point $Q$. The execution $\beta$ satisfies the conditions of Definition \ref{def:1valent}{for $k=2$}. Therefore, the point $P_i^{(v_1,v_2)}$ is $2$-valent. 
\end{proof}

\begin{lemma}
There exists some integer $i \in \{0,2,\ldots,M-1\}$ such that $P_i^{(v_1,v_2)}$ is $1$-valent and $P_{i+1}^{(v_1,v_2)}$ is not $1$-valent.
\label{lem:adjacentpoints}
\end{lemma}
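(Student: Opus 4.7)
The plan is to prove the lemma by a discrete intermediate-value argument along the finite chain $P_0^{(v_1,v_2)}, P_1^{(v_1,v_2)}, \ldots, P_M^{(v_1,v_2)}$. I would first show that the two endpoints have opposite $1$-valency status: $P_0^{(v_1,v_2)}$ is $1$-valent, while $P_M^{(v_1,v_2)}$ is not. Given this, if I let $i$ be the largest index in $\{0,1,\ldots,M\}$ such that $P_i^{(v_1,v_2)}$ is $1$-valent, then necessarily $i \le M-1$ and, by the maximality of $i$, the point $P_{i+1}^{(v_1,v_2)}$ is not $1$-valent, which yields the desired pair.

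To establish that $P_0^{(v_1,v_2)}$ is $1$-valent, I would extend $\alpha_0^{(v_1,v_2)}$ by freezing the writer at $P_0^{(v_1,v_2)}$ (delaying all messages from and to the writer indefinitely), invoking a read operation at $P_0^{(v_1,v_2)}$, and letting all remaining components take fair turns. Since $P_0^{(v_1,v_2)}$ lies after the termination of $\pi_1$ but before the invocation of $\pi_2$, freezing the writer at this point ensures that $\pi_2$ is never invoked in the extension. From the perspective of the servers and the reader, the extension is indistinguishable from a fair execution in which the write client fails immediately after $\pi_1$ terminates, so the liveness hypothesis forces the read to terminate. By Lemma \ref{lem:returnsv1orv2}, the read returns either $v_1$ or $v_2$; but since $\pi_2$ never occurs in the extension, regularity rules out $v_2$, and the read must return $v_1$. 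This matches Definition \ref{def:1valent} with $k=1$.

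To establish that $P_M^{(v_1,v_2)}$ is not $1$-valent, I would consider any extension of $\alpha_M^{(v_1,v_2)}$ of the form prescribed in Definition \ref{def:1valent}: the writer is frozen after $P_M^{(v_1,v_2)}$ and a read operation is invoked. Since $P_M^{(v_1,v_2)}$ is past the termination of $\pi_2$, Lemma \ref{lem:returnsv1orv2} again applies and the read returns $v_1$ or $v_2$. However, the latest completed write strictly before the invocation of the read is $\pi_2$, and no write is concurrent with or subsequent to the read (the writer is frozen), so regularity forces the return value to be $v_2$. Hence no admissible extension can produce a return value of $v_1$, which means $P_M^{(v_1,v_2)}$ fails the $1$-valency condition. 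I do not foresee any deep obstacle in this argument; the only care needed is to ensure that the two indistinguishability arguments cleanly justify termination of the read and the correct application of regularity at each endpoint.
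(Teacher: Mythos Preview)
Your proposal is correct and follows essentially the same approach as the paper: establish that $P_0^{(v_1,v_2)}$ is $1$-valent and $P_M^{(v_1,v_2)}$ is not, then take the largest $1$-valent index. The only cosmetic difference is that for the endpoint arguments you route through Lemma~\ref{lem:returnsv1orv2} and then eliminate the unwanted value, whereas the paper argues directly from regularity at each endpoint; both are equivalent.
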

\begin{proof}
To show the lemma, we argue that the following two statements are true. 
\begin{enumerate}[(i)]
\item Point $P_0^{(v_1,v_2)}$ is $1$-valent.
\item Point $P_M^{(v_1,v_2)}$ is not $1$-valent. 
\end{enumerate}
Among all the numbers in $\{0,1,2,\ldots,M\},$ let $i$ denote the largest number such that $P_{i}^{(v_1,v_2)}$ is $1$-valent. If (i) is true, we note that the number $i$ exists. If (ii) is true, then $i < M$. Since $i$ is the largest number such that $P_{i}^{(v_1,v_2)}$ is $1$-valent, we infer that $P_i^{(v_1,v_2)}$ is $1$-valent, but $P_{i+1}^{(v_1,v_2)}$ is not $1$-valent. The point $P_{i}^{(v_1,v_2)}$ therefore satisfies the statement of the lemma. So, to show the statement of the lemma, it suffices to show (i) and (ii). We show (i) and (ii) formally next.
%We can show both (i) and (ii) easily from the properties of the algorithm. For the sake of completeness, we provide formal proofs of these properties. 

To show (i)  we extend $\alpha_0^{(v_1,v_2)}$ to an execution $\beta$ as per Definition \ref{def:1valent} for $1$-valency.
In $\beta$, after point $P_0^{(v_1,v_2)}$ all the messages from and to the writer are delayed indefinitely. 

Note that at point $P_0^{(v_1, v_2)}$ the write operation $\pi_2$ has not yet begun.
A read operation $\pi$ starts at point ${P}_0^{(v_1,v_2)}$ in execution $\beta$ and all the components, except the writer and the channels from and to the writer, execute their protocols taking turns in a fair manner. Note that the execution is indistinguishable from a fair execution of the algorithm where the write client fails before sending or receiving the messages in its channels. Because of the liveness properties satisfied by the algorithm, the read operation terminates. Note that there is only one write operation $\pi_1$ in execution $\beta$. Because the algorithm is {regular}, and because the read operation is invoked after the termination of $\pi_1,$ it is serialized after $\pi_1$. Therefore the read returns $v_1$ at some point $Q$. Let $\beta$ denote the extension of $\alpha_0^{(v_1,v_2)}$ to the point $Q$. The execution $\beta$ satisfies the conditions of Definition \ref{def:1valent} for $1$-valency.

To show (ii)  we show that we cannot extend $\alpha_M^{(v_1,v_2)}$ to an execution $\tilde{\beta}$ as per Definition \ref{def:1valent}. We provide a proof by contradiction. Suppose we can construct $\tilde{\beta}$ as per Definition \ref{def:1valent}. Note that in $\tilde{\beta},$ the read operation is invoked after point $P^{(v_1,v_2)}_M,$ which is after the point of termination of $\pi_2$. Therefore the read operation is invoked after the point of termination of the write $\pi_2$. Furthermore, $\pi_2$ is invoked after the point of termination of $\pi_1$. Because $\tilde{\beta}$ has {regular} operations, write operation $\pi_2$ is serialized after write operation $\pi_1$ and the read is serialized after the write $\pi_2.$ Therefore the read should return $v_2$, which is the value of write operation $\pi_2$. However, because $\tilde{\beta}$ satisfies Definition \ref{def:1valent}, the read returns $v_1$ which is not equal to $v_2$. Therefore the execution $\tilde{\beta}$ of algorithm $A$ does not have {regular} operations. This is a contradiction to the assumption that the algorithm $A$ is {regular}.  Therefore point $P_M^{(v_1,v_2)}$ cannot be $1$-valent.
This completes the proof.
\end{proof}

Next, we present the definition of a \emph{pair of critical points} of execution $\alpha^{(v_1,v_2)}$.
\begin{definition}[Critical points]
 Let $Q_1, Q_2$ be two points in execution $\alpha^{(v_1,v_2)}$. The pair of points $(Q_1, Q_2)$ is defined to be a pair of critical points if there exists a number $i$ in $\{0,2,\ldots,M-1\}$ such that 
\begin{itemize}
\item $Q_1 = P_{i}^{(v_1, v_2)},$ $Q_{2}=P_{i+1}^{(v_1,v_2)},$
\item $Q_1$ is $1$-valent,
\item $Q_2$ is not $1$-valent.
\end{itemize}
\label{def:criticalpoints}
\end{definition}

Lemma \ref{lem:adjacentpoints} implies that every execution $\alpha^{(v_1, v_2)}$ has at least one pair of critical points.
Lemma \ref{lem:not1valent} implies that if $(Q_1, Q_2)$ is a pair of critical points in $\alpha^{(v_1, v_2)},$ then point $Q_2$ is $2$-valent in $\alpha^{(v_1,v_2)}$. We need the following lemma before proceeding to prove Theorem \ref{thm:second}.

\begin{lemma}
Let $(Q_1,Q_2)$ be a pair of critical points of execution $\alpha^{(v_1,v_2)}$. Then,
\begin{enumerate}[(a)]
\item the readers, and the channels between the readers and the servers, are all in the same state at point $Q_2$ as at point $Q_1$;
\item there is at most one non-failing server $s$ such that its state at $Q_1$ is different from its state at $Q_2$.
\end{enumerate}
\label{lem:criticalpoints}
\end{lemma}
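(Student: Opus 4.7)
The plan is to prove both parts directly from how $\alpha^{(v_1,v_2)}$ was constructed in Section \ref{subsubsec:description}, together with the fact that adjacent points $P_i^{(v_1,v_2)}$ and $P_{i+1}^{(v_1,v_2)}$ differ by exactly one step of the distributed system.

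For part (a), I would simply invoke the construction of $\alpha^{(v_1,v_2)}$: in that execution, the readers and the channels between readers and servers never take any step. Consequently the states of all these components stay equal to their initial values throughout the entire execution, and in particular they coincide at the two points $Q_1$ and $Q_2$.

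For part (b), the key fact is that between the consecutive points $Q_1 = P_i^{(v_1,v_2)}$ and $Q_2 = P_{i+1}^{(v_1,v_2)}$ exactly one step of the system occurs. I would enumerate the possible kinds of steps in a no-gossip algorithm: an internal step of the writer, a send or receive step involving the writer together with one writer-to-server channel, an internal step of a single server, or a send or receive step involving a single server together with one of its writer-server channels. Since server-to-server channels do not exist under the no-gossip assumption, and since readers take no steps in $\alpha^{(v_1,v_2)}$, every one of these step types modifies the state of at most one server, namely the (at most one) server that participates in the step. Therefore the states of all other non-failing servers agree at $Q_1$ and at $Q_2$, which is exactly the claim of (b).

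I do not expect any substantial obstacle: both parts are essentially direct consequences of the step semantics of the underlying I/O-automaton model, combined with the construction of $\alpha^{(v_1,v_2)}$ and the no-gossip assumption. The only point that requires care in the write-up is to state explicitly that a single step changes the state of at most one server automaton (plus possibly a channel and/or the writer); once that is made explicit, the enumeration above is exhaustive and both (a) and (b) follow without further computation.
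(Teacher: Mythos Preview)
Your proposal is correct and follows essentially the same approach as the paper: for (a) you use that readers and reader--server channels take no steps in $\alpha^{(v_1,v_2)}$, and for (b) you do a case analysis on the single step between the adjacent points $Q_1$ and $Q_2$, observing that in the no-gossip setting each such step touches at most one server. The paper's case split is organized slightly differently (by whether the acting component is a channel, a server, or a client) rather than by the type of communication step, but the content is the same.
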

\begin{proof}
In execution $\alpha^{(v_1,v_2)}$, the readers and the channels between readers and servers do not {perform} any actions. So these components are in their initial state at every point of the execution, including points $Q_1$ and $Q_2$. This implies that statement (a) is true. We prove (b) next.
Note that $Q_{1}$ and $Q_{2}$ are adjacent points of the execution $\alpha^{(v_1,v_2)}$. There are three possibilities: (I) {a channel performed an action} between points $Q_1$ and $Q_{2}$, (II) {a server performed an action} between points $Q_{1}$ and $Q_{2}$, or (III) {a client performed an action} between points $Q_1$ and $Q_2$. We study these three possibilities separately.

{\bf{Case I}: A channel action took place between points $Q_1$ and $Q_2$.} Note that the algorithm $A$ does not send any messages on the channels between servers. The channels between readers and servers do not {perform} any actions in $\alpha^{(v_1,v_2)}$. Therefore, we only need to consider the case where a channel between a server and the writer takes an action. If a channel from the writer to server $s$ takes an action, then, for  every server in the set $\mathcal{N} - \{s\},$ there was no input, internal or output action between $Q_1$ and $Q_2$. Therefore every server in the set $\mathcal{N} - \{s\},$ has the same state at $Q_1$ as at $Q_2$. Similarly, if a channel from a server $s$ to the writer takes an action, then, for  every server in the set $\mathcal{N} - \{s\},$ there was no input, internal or output action between $Q_1$ and $Q_2$. Therefore every server in the set $\mathcal{N} - \{s\},$ has the same state at $Q_1$ as at $Q_2$. This completes the proof for Case I.

{\bf{Case II}: A server action took place between points $Q_1$ and $Q_2.$ } Let $s$ be the server that took an action between points $Q_1$ and $Q_2$. This implies that for every server in $\mathcal{N} - \{s\},$ no input, output or internal action was taken between these points. Therefore every server in $\mathcal{N} - \{s\},$ has the same state at $Q_1$ as at $Q_2$. This completes the proof for Case II.

{\bf{Case III:}  A client action took place between points $Q_1$ and $Q_2.$} If a client action takes place between $Q_1$ and $Q_2$, then, for every server in the system, no input, internal or external action was taken between points $Q_1$ and $Q_2.$ Therefore, in this case, every server has the same state at point $Q_1$ as at point $Q_2$. This completes the proof.
\end{proof}

We are now ready to prove Theorem \ref{thm:second}. 

\subsubsection{Proof of Theorem \ref{thm:second}}
\label{subsubsec:proof}
\emph{Proof of Theorem \ref{thm:second}.}
Lemma \ref{lem:adjacentpoints} implies that we can find a pair of critical points $(Q_1^{(v_1, v_2)}, Q_2^{(v_1, v_2)})$ in execution $\alpha^{(v_1, v_2)}$. From Lemma \ref{lem:criticalpoints}, we note that there is at most one non-failing server that changes state between $Q_1^{(v_1,v_2)}$ and $Q_2^{(v_1,v_2)}$. Let $s$ denote {the} server which changes state between points $Q_{1}^{(v_1,v_2)}$ and $Q_{2}^{(v_1,v_2)}$, {if there is one; if not}, let $s$ denote an arbitrary non-failing server. For any $s' \in \mathcal{N},$ if $s' \neq s,$ the state of the server $s'$ is the same at points $Q_{1}^{(v_1,v_2)}$ and $Q_{2}^{(v_1,v_2)}$.  

Let $\vec{S}^{(v_1,v_2)}$ be an element of $\prod_{n \in \mathcal{N}} \mathcal{S}_{n} \times \mathcal{N} \times \cup_{n \in \mathcal{N}} \mathcal{S}_{n}$ as follows. The first $N-f$ components of $\vec{S}^{(v_1,v_2)}$ denote the states of the $N-f$ {servers} in $\mathcal{N}$ at point $Q_{1}^{(v_1,v_2)}.$ The $(N-f+1)$st component of $\vec{S}^{(v_1,v_2)}$  denote the server index $s,$ and the $N-f+2$nd component is the state of server $s$ at point $Q_{2}^{(v_1,v_2)}$. Note that the number of elements in the set $\displaystyle\bigcup_{(v_1, v_2)\in \mathcal{V}\times \mathcal{V}, v_1 \neq v_2}\{\vec{S}^{(v_1, v_2)}\}$ is at most $\prod_{i\in \mathcal{N}} |\cS_i| \times (N-f) \times \max_{i\in \mathcal{N}} |\cS_i|$.

To prove Theorem \ref{thm:second}, we show that, if $(v_1, v_2)$ and $ (v_1',v_2')$ are two \emph{distinct} elements of the set $\{(x,y): (x,y) \in \mathcal{V}\times \mathcal{V}, x \neq y \},$ then $\vec{S}^{(v_1,v_2)} \neq \vec{S}^{(v_1',v_2')}.$ If we show this, {then} it implies that the number of elements in the set  $\displaystyle\bigcup_{(v_1, v_2)\in \mathcal{V}\times \mathcal{V},v_1 \neq v_2 }\{\vec{S}^{(v_1, v_2)}\}$ is at least equal to the number of elements in the set $\{(x,y): (x,y) \in \mathcal{V}\times \mathcal{V}, x \neq y\},$ which is equal to $(|\mathcal{V}|)\times (|\mathcal{V}|-1)$. This leads to the following chain of {inequalities}:
\begin{align*}
&\prod_{n\in \mathcal{N}} |\cS_n| \times (N-f) \times \max_{n\in \mathcal{N}} |\cS_n| \geq (|\mathcal{V}|)\times (|\mathcal{V}|-1) \\  
&\max_{n \in \mathcal{N}} \log_{2}|\mathcal{S}_{i}| + \sum_{n \in \mathcal{N}} \log_{2}|\mathcal{S}_{i}| \geq \log_{2}|\mathcal{V}| + \log_{2}\left(|\mathcal{V}|-1)\right) - \log_{2}(N-f)
\end{align*}
which implies the theorem. Therefore, to prove the theorem, it suffices to show that, if $(v_1, v_2) \neq (v_1',v_2'),$ then $\vec{S}^{(v_1,v_2)} \neq \vec{S}^{(v_1',v_2')}.$  Suppose, for contradiction, there are two {distinct} tuples $(v_1, v_2)$ and $(v_1', v_2')$ in $\{(x,y): (x,y) \in \mathcal{V}\times \mathcal{V}, x \neq y\}$ and $\vec{S}^{(v_1,v_2)} = \vec{S}^{(v_1',v_2')}.$

{Let $i$ denote an integer such that $Q_{1}^{(v_1, v_2)}$ is the point $P_{i}^{(v_1, v_2)}$ in $\alpha^{(v_1,v_2)}.$ We now construct executions $\beta_{1}^{(v_1, v_2)}$ and $\beta_2^{(v_1,v_2)},$ which are extensions of $\alpha_{i}^{(v_1,v_2)}$ and $\alpha_{i+1}^{(v_1, v_2)}$. Because the point $Q_{1}^{(v_1, v_2)}$ is $1$-valent, we know there an execution $\beta_{1}^{(v_1, v_2)}$ that extends $\alpha_i^{(v_1,v_2)}$ such that, after point $Q_{1}^{(v_1, v_2)}$ in $\beta_1^{(v_1, v_2)}$, all the messages from and to the writer are delayed indefinitely, and a read operation begins and returns $v_1$. Similarly, because the point $Q_{2}^{(v_1, v_2)}$ is $2$-valent, we know there an execution $\beta_{2}^{(v_1, v_2)}$ that extends $\alpha_{i+1}^{(v_1,v_2)}$ such that, after point $Q_{2}^{(v_1, v_2)},$ all the messages from and to the writer are delayed indefinitely, and a read operation begins and returns $v_2$.   

The following claim describes a useful property of executions $\beta_1^{(v_1,v_2)}$ and $\beta_2^{(v_1, v_2)}$.
\begin{claim}
Let $\vec{S}^{(v_1, v_2)}= \vec{S}^{(v_1', v_2')}.$  Consider the composite automaton $\mathcal{A}$ formed by the servers, the readers and the channels between the readers and servers. For $k \in \{1,2\}$, every component of the automaton $\mathcal{A}$ has the same state at point $Q_{k}^{(v_1,v_2)}$ in $\beta_k^{(v_1, v_2)}$ as at point $Q_{k}^{(v_1',v_2')}$ in execution $\beta_k^{(v_1',v_2')}.$ 
\label{lem:same_state}
\end{claim}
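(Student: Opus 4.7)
The plan is to verify the state equality component by component across the composite automaton $\mathcal{A}$, which consists of the $N$ servers, the readers, and the channels between readers and servers. The crucial initial observation is that each $\beta_k^{(v_1,v_2)}$ is by construction an extension of $\alpha_i^{(v_1,v_2)}$ (with $\alpha_{i+1}^{(v_1,v_2)}$ in place of $\alpha_i^{(v_1,v_2)}$ when $k=2$), so the prefix of $\beta_k^{(v_1,v_2)}$ up to the critical point $Q_k^{(v_1,v_2)}$ coincides with the corresponding prefix of $\alpha^{(v_1,v_2)}$; the same holds for the primed execution. This reduces the task to computing the state of $\mathcal{A}$ at $Q_k$ purely from data already encoded in $\vec{S}^{(\cdot,\cdot)}$ and in the construction of $\alpha^{(v_1,v_2)}$ in Section~\ref{subsubsec:description}.

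I would then dispose of the easy components first. In $\alpha^{(v_1,v_2)}$ the readers and the channels between readers and servers perform no actions by construction, so at $Q_k^{(v_1,v_2)}$ they sit in their initial states; the identical argument at $Q_k^{(v_1',v_2')}$ matches these components trivially between the two executions. The $f$ servers in $\{1,\ldots,N\}\setminus \mathcal{N}$ fail at the start of both executions and remain in the fail state thereafter, so they too match. The main content concerns the $N-f$ non-failed servers in $\mathcal{N}$. At point $Q_1$, the hypothesis $\vec{S}^{(v_1,v_2)}=\vec{S}^{(v_1',v_2')}$ immediately gives equality via the first $N-f$ coordinates. At point $Q_2$, the hypothesis supplies a common distinguished index $s$ and a common state for server $s$ via the last two coordinates; for every other server $s' \in \mathcal{N}\setminus\{s\}$, Lemma~\ref{lem:criticalpoints}(b) applied separately to each execution forces its state at $Q_2$ to equal its state at $Q_1$, and those $Q_1$-states have already been matched.

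I do not expect any substantial obstacle: the claim is essentially an unpacking of the definition of $\vec{S}^{(v_1,v_2)}$ combined with Lemma~\ref{lem:criticalpoints}, exploiting that $\mathcal{A}$ deliberately omits the writer and the writer-server channels, so the indefinite delay of writer messages imposed by $\beta_k$ after $Q_k$ never needs to be reasoned about here. The one minor subtlety worth flagging is the degenerate case in which no non-failed server changes state between $Q_1$ and $Q_2$: the index $s$ is then chosen arbitrarily, but the hypothesis still forces the two arbitrary choices to coincide, and the recorded $Q_2$-state of $s$ simply equals its unchanged $Q_1$-state, so the argument goes through without modification.
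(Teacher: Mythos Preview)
Your proposal is correct and follows essentially the same approach as the paper: unpack the components of $\mathcal{A}$, note that readers and reader--server channels are in their initial states and that the failed servers match trivially, then read off the states of the non-failed servers at $Q_1$ from the first $N-f$ coordinates of $\vec{S}$ and at $Q_2$ from those coordinates together with the recorded index $s$ and its $Q_2$-state. Your explicit invocation of Lemma~\ref{lem:criticalpoints}(b) for the $k=2$ case and your remark on the degenerate choice of $s$ are slightly more careful than the paper's presentation, but the argument is the same.
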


\emph{Proof of Claim \ref{lem:same_state}.}
	We first consider the case where $k=1$. At points $Q_{1}^{(v_1,v_2)}$ in $\beta_1^{(v_1, v_2)}$ and $Q_1^{(v_1',v_2')}$ in $\beta_1^{(v_1',v_2')}$, all the channels between the readers and servers are empty, the readers are in their initial state and the servers in $\{1,2,\ldots,N\}-\mathcal{N}$ have failed. Denoting $\mathcal{N}=\{a_1, a_2,\ldots,a_{N-f}\}$ where $a_1 < a_2 < \ldots a_{N-f}$, the state of every non-failing server $a_j, j \in \{1,2,\ldots,N-f\}$ at points $Q_{1}^{(v_1, v_2)}$ in $\beta_1^{(v_1,v_2)}$ and $Q_{1}^{(v_1', v_2')}$ in $\beta_1^{(v_1', v_2')}$ is respectively equal to the $j$th component of $\vec{S}^{(v_1,v_2)}$ and $\vec{S}^(v_1',v_2')$. Because $\vec{S}^{(v_1, v_2)} = \vec{S}^(v_1' ,v_2'),$  every non-failing server is at the same state at $Q_{1}^{(v_1, v_2)}$ in $\beta_1^{(v_1,v_2)}$ as at $Q_{1}^{(v_1', v_2')}$ in $\beta_1^{(v_1', v_2')}.$ This completes the proof for the case where $k=1$.

Consider the case where $k=2$. Let $s$ denote the server index determined by the $N-f+1$st component of $\vec{S}^{(v_1, v_2)}$, which is also equal to the server index determined by the $N-f+1$st component of $\vec{S}^{(v_1', v_2')}$. All the channels, readers and failed servers have the same state at points $Q_{2}^{(v_1,v_2)}$ in $\beta_2^{(v_1, v_2)}$ as at $Q_2^{(v_1',v_2')}$ in $\beta_2^{(v_1',v_2')}$.  The state of every non-failing server except server $s$ at points $Q_{2}^{(v_1, v_2)}$ in $\beta_2^{(v_1,v_2)}$ and $Q_{2}^{(v_1', v_2')}$ in $\beta_2^{(v_1', v_2')}$ is respectively determined by the corresponding component of $\vec{S}^{(v_1,v_2)}$ and $\vec{S}^(v_1',v_2')$. The state of server $s$ at points $Q_{2}^{(v_1, v_2)}$ in $\beta_2^{(v_1,v_2)}$ and $Q_{2}^{(v_1', v_2')}$ in $\beta_2^{(v_1', v_2')}$ are respectively determined by the $N-f+2$nd component of $\vec{S}^{(v_1,v_2)}$ and $\vec{S}^(v_1',v_2')$. Because $\vec{S}^{(v_1, v_2)}$ is equal to $\vec{S}^(v_1',v_2')$, every non-failing server is at the same state at $Q_{2}^{(v_1, v_2)}$ in $\beta_2^{(v_1,v_2)}$ as at $Q_{2}^{(v_1', v_2')}$ in $\beta_2^{(v_1', v_2')}.$ This completes the proof of the claim.

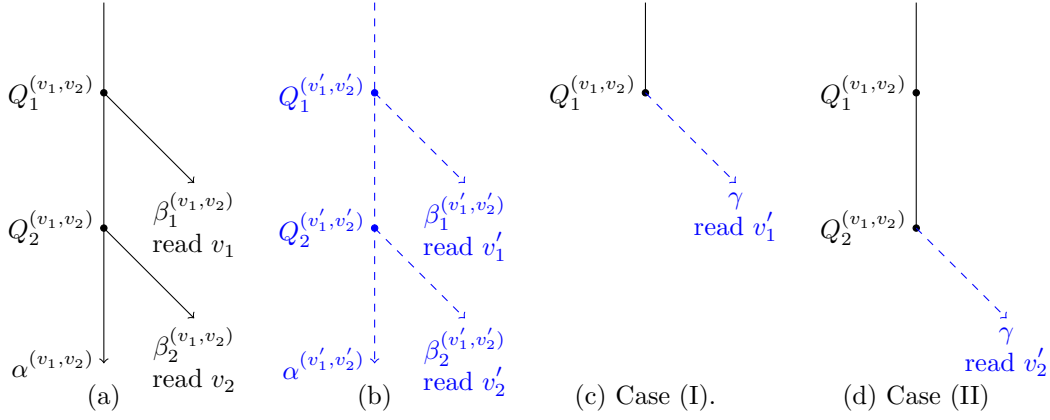
\begin{figure}
\centering
\begin{tikzpicture}[scale=0.6] 
\draw[->] (0,8) -- (0,0) node[left] {$\alpha^{(v_1,v_2)}$};
\draw[->] (0,6) -- (2,4) node[align=center,below] {$\beta_1^{(v_1,v_2)}$\\ read $v_1$};
\draw[->] (0,3) -- (2,1) node[align=center,below] {$\beta_2^{(v_1,v_2)}$\\read $v_2$};
\filldraw 
(0,6) circle (2pt) node[left] {$Q_1^{(v_1,v_2)}$} 
(0,3) circle (2pt) node[left] {$Q_2^{(v_1,v_2)}$};   
\draw (0,0) node[below,align=center]{\\(a)};

\draw[->, dashed, blue] (6,8) -- (6,0) node[left] {$\alpha^{(v_1',v_2')}$};
\draw[->, dashed, blue] (6,6) -- (8,4) node[align=center,below] {$\beta_1^{(v_1',v_2')}$\\ read $v_1'$};
\draw[->, dashed, blue] (6,3) -- (8,1) node[align=center,below] {$\beta_2^{(v_1',v_2')}$\\read $v_2'$};
\filldraw[blue] 
(6,6) circle (2pt) node[left] {$Q_1^{(v_1',v_2')}$} 
(6,3) circle (2pt) node[left] {$Q_2^{(v_1',v_2')}$};    
\draw (6,0) node[below,align=center]{\\(b)};

\draw (12,8) -- (12,6);
\filldraw 
(12,6) circle (2pt) node[left] {$Q_1^{(v_1,v_2)}$};
\draw[->,dashed,blue] (12,6) -- (14,4) node[align=center,below] {$\gamma$\\ read $v_1'$};
\draw (12,0) node[below,align=center]{\\(c) Case (I).};

\draw (18,8) -- (18,3);
\filldraw 
(18,6) circle (2pt) node[left] {$Q_1^{(v_1,v_2)}$}
(18,3) circle (2pt) node[left] {$Q_2^{(v_1,v_2)}$}; 
\draw[->,dashed,blue] (18,3) -- (20,1) node[align=center,below] {$\gamma$\\ read $v_2'$};
\draw (18,0) node[below,align=center]{\\(d) Case (II)};
\end{tikzpicture}
\caption{Depiction of the proof of Theorem \ref{thm:second}. (a) Executions $\beta_k^{(v_1,v_2)}, k=1,2$. (b) Executions $\beta_k^{(v_1',v_2')}, k=1,2$. (c) Constructed execution $\gamma$ for the case that $v_1' \notin \{v_1,v_2\}$. (d) Constructed execution $\gamma$ for the case that $v_2' \neq v_2$.  }
\label{fig:2ndbound}
\end{figure}

\emph{Proof of Theorem \ref{thm:second} continued.}
{We now use Claim \ref{lem:same_state} to obtain a contradiction.} Because $(v_1,v_2)$ and $(v_1', v_2')$ are distinct ordered pairs, there are only two possibilities:  (I) $v_1' \neq v_1, v_1' \neq v_2$, (II)  $v_2' \neq v_1, v_2' \neq v_2,$ or $v_2' = v_1, v_1' = v_2$, both of which imply that $v_2' \neq v_2$. We handle these possibilities separately (See Figure \ref{fig:2ndbound}). 

{\bf Case (I): $v_1' \neq v_1, v_1' \neq v_2$.} 

We create an execution $\gamma$ of the algorithm $A$ which contradicts Lemma \ref{lem:returnsv1orv2}. Let $i$ be an integer such that $Q_{1}^{(v_1,v_2)} = P_{i}^{(v_1,v_2)}$ in execution $\alpha^{(v_1,v_2)}$. The execution $\gamma$ extends execution $\alpha_i^{(v_1,v_2)},$ that is, it follows execution $\alpha^{(v_1,v_2)}$ until point $Q_{1}^{(v_1,v_2)}$. After point $Q_{1}^{(v_1, v_2)},$ the writer, and the channels from and to the writers do not perform any actions. After point $Q_1^{(v_1,v_2)},$ the servers, readers and the channels between the servers and readers in $\gamma$ follow the same steps as the corresponding components in $\beta_1^{(v_1',v_2')}.$ 

{Claim \ref{lem:same_state} implies that $\gamma$ is an execution of algorithm $A$.} {In particular, $\gamma$ is an extension of $\alpha_{i}^{(v_1, v_2)},$ where, after point $P_i^{(v_1, v_2)},$ the writer and channels from and to the writer do not perform any actions. From point $P_i^{(v_1,v_2)}$ onward, since the readers in $\gamma$ follow the steps of $\beta_1^{(v_1',v_2')}$ until completion, a read begins in $\gamma$ after this point and terminates returning $v_1',$ which is not equal to either $v_1$ or $v_2$. However, according to Lemma \ref{lem:returnsv1orv2}, the read operation in $\gamma$ should return either $v_1$ or $v_2,$. This is a contradiction. }Therefore, if $v_1 \neq v_1'$ and $v_1 \neq v_1'$, then $\vec{S}^{(v_1, v_2)} \neq \vec{S}^{(v_1',v_2')}$. 

{\bf Case (II): $v_2' \neq v_2$.} 

We create an execution $\gamma$ of the algorithm $A$ which leads to a contradiction. Let $i$ be an integer such that $Q_{1}^{(v_1,v_2)} = P_{i}^{(v_1,v_2)}$ in execution $\alpha^{(v_1,v_2)}$. The execution $\gamma$ extends execution $\alpha_{i+1}^{(v_1,v_2)},$ that is, it follows execution $\alpha^{(v_1,v_2)}$ until point $Q_{2}^{(v_1,v_2)}$. At point $Q_{2}^{(v_1, v_2)},$ the messages from the writers are delayed indefinitely. After point $Q_2^{(v_1,v_2)},$ the servers, readers and the channels between the servers and readers in $\gamma$ follow the same steps as the corresponding components in $\beta_2^{(v_1',v_2')}.$ 

{Claim \ref{lem:same_state} implies that $\gamma$ is an execution of algorithm $A$.} {In particular, $\gamma$ is an extension of $\alpha_{i+1}^{(v_1, v_2)},$ where, after point $P_{i+1}^{(v_1, v_2)},$ the writer and channels from and to the writer do not perform any actions. From point $P_{i+1}^{(v_1,v_2)}$ onward, since the readers in $\gamma$ follow the steps of $\beta_2^{(v_1',v_2')}$ until completion, a read begins in $\gamma$ after this point and terminates returning $v_2',$ which is not equal to $v_2$. However, according to Lemma \ref{lem:not1valent} and the fact that $Q_{2}^{(v_1,v_2)}$ is not $1$-valent, $Q_{2}^{(v_1, v_2)}$ is $2$-valent, and the read operation in $\gamma$ should return $v_2$. This is a contradiction. }Therefore, $\vec{S}^{(v_1, v_2)} \neq \vec{S}^{(v_1',v_2')}$.

This completes the proof.
\qed

\section{A Universal Storage Cost Lower Bound}
\label{sec:thirdthm}
Our main result of this section is a storage cost lower bound that is applicable to any {regular} shared memory emulation algorithm, even if it uses server gossip. The lower bound is an implication of Theorem \ref{thm:third}, which describes constraints on the cardinalities of the server states that must be satisfied by any atomic shared memory emulation algorithm. The lower bounds on the worst case and {total} storage costs are stated in Corollary \ref{cor:third}. We provide a sketch of the proof of Theorem \ref{thm:third}, highlighting the main differences from the proof of Theorem \ref{thm:second}.

\subsection{Statement of Theorem \ref{thm:third}}
\begin{theorem}
Let $A$ be a single-writer-single-reader shared memory emulation algorithm that implements a {regular} read-write object whose values come from a finite set $\mathcal{V}$.
Suppose that every server's state belongs to a set $\mathcal{S}$ in algorithm $A$. 

Suppose that the algorithm $A$ satisfies the following liveness property: 
In a fair execution of $A,$ if the number of server failures is no bigger than $f$, then every operation invoked at a non-failing client terminates.
Then, 
$$2 \max_{n \in \mathcal{N}} \log_{2}|\mathcal{S}_{i}| + \sum_{n \in \mathcal{N}} \log_{2}|\mathcal{S}_{i}| \geq \log_{2}|\mathcal{V}| + \log_{2}\left(|\mathcal{V}|-1\right) - 2 \log_{2}(N-f)$$
\label{thm:third}
\end{theorem}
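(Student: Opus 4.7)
The plan is to follow the structure of the proof of Theorem~\ref{thm:second} closely, using the same family of executions $\alpha^{(v_1,v_2)}$ indexed by ordered pairs of distinct values in $\mathcal{V}$, the same definitions of $k$-valent points and pairs of critical points, and an analogous injection argument that counts distinct snapshots at the critical points. The analogues of Lemmas~\ref{lem:returnsv1orv2}, \ref{lem:not1valent}, and \ref{lem:adjacentpoints} carry over unchanged, since regularity and the liveness property are insensitive to whether servers gossip; in particular, within each $\alpha^{(v_1,v_2)}$ there is a pair of adjacent critical points $(Q_1^{(v_1,v_2)}, Q_2^{(v_1,v_2)})$, with $Q_1^{(v_1,v_2)}$ being $1$-valent and $Q_2^{(v_1,v_2)}$ being $2$-valent but not $1$-valent.

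The single-step argument used for Lemma~\ref{lem:criticalpoints} still shows that at most one non-failing server changes state between two adjacent points, since a single atomic action modifies the state of at most one automaton. The new difficulty is that the inter-server channels may now carry messages whose contents differ between two executions $\alpha^{(v_1,v_2)}$ and $\alpha^{(v_1',v_2')}$ even if the non-failing server states at the critical points agree. Consequently, the splice of the read tail from $\beta_k^{(v_1',v_2')}$ onto $\alpha^{(v_1,v_2)}$ at a critical point, as performed in the proof of Theorem~\ref{thm:second}, is no longer guaranteed to be a valid execution of $A$, and the direct contradiction argument breaks.

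To repair this I would enlarge the snapshot $\vec{S}^{(v_1,v_2)}$ by one additional (server-index, server-state) pair that captures one further step of the system immediately after the critical pair. Concretely, extend the schedule by exactly one more step (for instance, the delivery and processing of a single pending inter-server gossip message), which changes the state of at most one additional non-failing server $s'$ to some state $\sigma'$, and define
\[
\vec{S}^{(v_1,v_2)} \in \prod_{n \in \mathcal{N}} \mathcal{S}_n \;\times\; \mathcal{N} \times \bigcup_{n \in \mathcal{N}} \mathcal{S}_n \;\times\; \mathcal{N} \times \bigcup_{n \in \mathcal{N}} \mathcal{S}_n,
\]
where the first $N-f$ entries record the server states at $Q_1^{(v_1,v_2)}$, the middle pair records the index and post-step state of the server changed between $Q_1^{(v_1,v_2)}$ and $Q_2^{(v_1,v_2)}$, and the final pair records $(s',\sigma')$. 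The cardinality of the tuple space is $\bigl(\prod_{n \in \mathcal{N}} |\mathcal{S}_n|\bigr) \cdot (N-f)^2 \cdot \bigl(\max_{n \in \mathcal{N}} |\mathcal{S}_n|\bigr)^2$; forcing this count to be at least $|\mathcal{V}|(|\mathcal{V}|-1)$ via the injection argument and taking logarithms yields exactly the claimed inequality.

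The main obstacle I expect is proving that a single extra scheduled step suffices to align the two extensions in the injection argument. The intended approach is a lockstep-and-first-divergence one: assuming $\vec{S}^{(v_1,v_2)} = \vec{S}^{(v_1',v_2')}$, schedule the same sequence of pending gossip deliveries and processing actions from the critical point onward in both executions, so that the non-failing server states remain equal through the additional step and the recorded $(s',\sigma')$ pair agrees on both sides. From this common configuration the read tail of $\beta_k^{(v_1',v_2')}$ can be spliced onto $\alpha^{(v_1,v_2)}$ exactly as in Theorem~\ref{thm:second}, producing a read return that violates regularity via the analogue of Lemma~\ref{lem:returnsv1orv2} and yielding the desired contradiction. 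Making this lockstep-scheduling argument rigorous, while respecting channel liveness and the algorithm's freedom to choose any gossip protocol, is the technical crux that accounts for the factor-of-two strengthening over Theorem~\ref{thm:second}.
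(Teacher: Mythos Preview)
Your overall architecture is correct and matches the paper: the same family of executions $\alpha^{(v_1,v_2)}$, the same critical-point argument, and the enlarged snapshot with one extra $(\text{index},\text{state})$ pair producing the factor $(N-f)^2 (\max_n |\mathcal{S}_n|)^2$. You also correctly identify the obstacle: inter-server channels may carry different messages in $\alpha^{(v_1,v_2)}$ and $\alpha^{(v_1',v_2')}$, so the read tail cannot simply be spliced at $Q_k$.

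However, your proposed fix---take \emph{one additional scheduled step} after the critical pair and record the resulting server change---does not work, and your ``lockstep-and-first-divergence'' sketch does not repair it. The difficulty is not a single pending message but an arbitrary backlog of gossip in transit at $Q_1$ and $Q_2$. After one extra delivery the channels are still nonempty, and the contents of those channels can still depend on $(v_1,v_2)$; so the splice of the read tail from $\beta_k^{(v_1',v_2')}$ onto your extended $\alpha^{(v_1,v_2)}$ is still not a valid execution of $A$. Moreover, if you keep running in lockstep until the first divergence, that divergence may cascade: once one server receives a different message, it may emit different gossip, causing further servers to diverge---so one extra $(\text{index},\text{state})$ pair is not enough to pin down the resulting configuration.

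The paper's fix is different in a crucial way. It \emph{modifies the definition of $k$-valency} so that, starting from the critical point, one first \emph{drains all inter-server channels completely} and only then invokes the read. Call the drained points $R_1$ and $R_2$. Because between $Q_1$ and $Q_2$ at most one server $s$ and at most one inter-server channel (say into server $s'$) change state, draining the gossip in the \emph{same delivery order} for both sides forces every server in $\mathcal{N}\setminus\{s,s'\}$ to receive identical messages from identical states, hence to reach the same state at $R_1$ and $R_2$. Now all channels are empty at $R_k$, so the splice of the read tail from $\beta_k^{(v_1',v_2')}$ at $R_k^{(v_1,v_2)}$ is legitimate. The snapshot $\vec{S}^{(v_1,v_2)}$ records the server states at $R_1$ (not at $Q_1$) together with the two indices $s,s'$ and their states at $R_2$, which is exactly the tuple space you wrote down but anchored at the drained points. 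This ``drain everything first'' step is the idea your proposal is missing.
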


\begin{corollary}
	Let $A$ be a single-writer-single-reader shared memory emulation algorithm that implements a {regular} read-write object whose values come from a finite set $\mathcal{V}$.
Suppose that every server's state belongs to a set $\mathcal{S}$ in algorithm $A$.

Suppose that the algorithm $A$ satisfies the following liveness property:
In a fair execution of $A,$ if the number of server failures is no bigger than $f$, then every operation invoked at a non-failing client terminates.
Then,
$$ MaxStorage(A) \geq \frac{\log_2 |\mathcal{V}|+ \log_2 |\mathcal{V}-1| - 2 \log_{2}(N-f)}{N-f+2},$$
$$ TotalStorage(A) \geq  \frac{N(\log_2 |\mathcal{V}|+ \log_2 |\mathcal{V}-1| - 2 \log_{2}(N-f))}{N-f+2}.$$ 
\label{cor:third}

\end{corollary}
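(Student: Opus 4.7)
The plan is to mirror the derivation of Corollary \ref{cor:second} from Theorem \ref{thm:second}, with the only change being that the max-term now appears with coefficient $2$ rather than $1$, which is what turns the denominator $N-f+1$ into $N-f+2$. As in the previous corollary, I read Theorem \ref{thm:third} as holding for every subset $\mathcal{N} \subset \{1,\ldots,N\}$ with $|\mathcal{N}|=N-f$ (the quantifier is implicit, just as in Theorem \ref{thm:second}).

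First, I would relabel servers so that, without loss of generality, $|\mathcal{S}_1| \le |\mathcal{S}_2| \le \cdots \le |\mathcal{S}_N|$, and instantiate Theorem \ref{thm:third} at $\mathcal{N} = \{1,2,\ldots,N-f\}$, for which $\max_{n \in \mathcal{N}} \log_2|\mathcal{S}_n| = \log_2|\mathcal{S}_{N-f}|$. Writing $B := \log_2|\mathcal{V}| + \log_2(|\mathcal{V}|-1) - 2\log_2(N-f)$, the theorem yields
\[
2\log_2|\mathcal{S}_{N-f}| + \sum_{n=1}^{N-f}\log_2|\mathcal{S}_n| \;\geq\; B. \qquad (\star)
\]

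To obtain the $\emph{MaxStorage}$ bound, I would upper-bound the left-hand side of $(\star)$ using the ordering: each of the $N-f$ summands is at most $\log_2|\mathcal{S}_{N-f}|$, so the LHS is at most $(N-f+2)\log_2|\mathcal{S}_{N-f}|$. Hence $\log_2|\mathcal{S}_{N-f}| \ge B/(N-f+2)$, and since $\log_2|\mathcal{S}_N| \ge \log_2|\mathcal{S}_{N-f}|$, the required bound on $\emph{MaxStorage}(A)$ follows immediately.

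For $\emph{TotalStorage}$, I would rearrange $(\star)$ as $\sum_{n=1}^{N-f}\log_2|\mathcal{S}_n| \ge B - 2\log_2|\mathcal{S}_{N-f}|$ and then add the tail $\sum_{n=N-f+1}^N \log_2|\mathcal{S}_n|$. Using the ordering to absorb the $-2\log_2|\mathcal{S}_{N-f}|$ into two of the tail terms (each of which is at least $\log_2|\mathcal{S}_{N-f}|$), I obtain
\[
\emph{TotalStorage}(A) \;\ge\; B + (f-2)\log_2|\mathcal{S}_{N-f}| \;\ge\; B + (f-2)\,\frac{B}{N-f+2} \;=\; \frac{N B}{N-f+2},
\]
which is the claimed bound. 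The only subtle point, and the main potential obstacle, is the case $f<2$: the step $(f-2)\log_2|\mathcal{S}_{N-f}| \ge (f-2) B/(N-f+2)$ uses the $\emph{MaxStorage}$ bound in the wrong direction when $f<2$. For those edge cases one can either invoke the trivial estimate $\emph{TotalStorage}(A) \ge (f+1)\log_2|\mathcal{S}_{N-f}| \ge (f+1)B/(N-f+2)$ or, as in Theorem \ref{thm:second}, restrict to $f \ge 2$, which is the regime of interest for the paper's main comparisons.
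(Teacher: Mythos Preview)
Your argument is correct and is exactly the intended one: the paper omits the proof of Corollary~\ref{cor:third} and simply points to the proof of Corollary~\ref{cor:second}, which is precisely the template you follow, with the coefficient on the max term changed from $1$ to $2$ and hence the denominator from $N-f+1$ to $N-f+2$.

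One small caveat on your $f<2$ discussion: the ``trivial estimate'' $\emph{TotalStorage}(A)\ge (f+1)\log_2|\mathcal{S}_{N-f}|\ge (f+1)B/(N-f+2)$ does not recover the claimed bound $NB/(N-f+2)$ unless $f+1\ge N$, so that branch does not actually close the gap; your second option (restrict to $f\ge 2$, matching the explicit hypothesis in Theorem~\ref{thm:second}) is the right one.
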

The proof of Corollary \ref{cor:third} is similar to the proofs of Corollary \ref{cor:second} in Section \ref{sec:secondthm} and Corollary \ref{cor:first} in Appendix \ref{sec:firstthm}, and is omitted.

%{
%In Corollary \ref{cor:third}, we show that the assumption that every server's state comes from the same set can be lifted.
%\begin{corollary}

\subsection{Informal Proof Sketch of Theorem \ref{thm:third}}

{The proof of Theorem \ref{thm:third} shares many common elements with the proof of Theorem \ref{thm:second}. The main difference  is that now we need to carefully handle the actions performed by the channels between servers. An aspect that distinguishes the proof of Theorem \ref{thm:third} from the proof of Theorem \ref{thm:second} is that their definitions of $k$-valent points. For ease of readability, we inherit the lemmas and definitions from the proof of Theorem \ref{thm:third} into this section, so that the proofs can be compared easily. We begin with a proof sketch of Theorem \ref{thm:third}.

As in our proof of Theorem \ref{thm:second}, for every subset $\mathcal{N} \subset \{1,2,\ldots,N\}$ where $|\mathcal{N}| = N-f$, we construct an execution $\alpha^{(v_1,v_2)}$ where the servers in $\{1,2,\ldots,N\}-\mathcal{N}$ fail {at the beginning of the execution. The execution has two write operations for values $v_1$ and $v_2,$ where $v_1 \neq v_2$. The second write operation with value $v_2$ begins after the termination of the first write operation with value $v_1$.} 

In this execution, after the point of termination of the first write, if we let the channels between servers deliver all the gossip messages, and then begin a read operation after the delivery of these messages, a reader can return $v_1$ because of {regularity}. Similarly, after the termination of the second write operation,  if we let the channels between servers deliver all the gossip messages, a reader can return $v_2$. This implies that, in the interval of the second write operation, there are two consecutive points $P$ and $P'$ as follows: \begin{itemize}
\item If at point $P$ we stop the writers and the channels from the writers and let the channels between servers deliver all the gossip messages to arrive at point $Q$, then $v_1$ can be returned by a read operation that begins at point $Q$.
\item If at point $P'$ we stop the writers and the channels from the writers and let the channels between servers deliver all the gossip messages to arrive at point $Q'$, then $v_2$ can be returned by a read operation that begins at point $Q'$. 
\end{itemize}
By constructing the executions such that gossip messages are delivered in the same order, we can ensure that after the delivery of the messages, at most $2$ servers differ in their states between points $Q$ and $Q'$. 
We use this fact to show that the number of elements in the set of possible server states at points $Q$ and $Q'$ points is at most $\prod_{n \in \mathcal{N}} |\cS_i| \times \max_{n \in \mathcal{N}}|\mathcal{S}_{n}| \times \max_{n \in \mathcal{N}}|\mathcal{S}_{n}| \times (N-f)^2$. Therefore, we get $\prod_{n \in \mathcal{N}} |\cS_i| \times (\max_{n \in \mathcal{N}}|\mathcal{S}_{n}|)^2 \times (N-f)^2 \geq (|\mathcal{V}|)(|\mathcal{V}|-1),$ which implies the lower bound. 
}

\subsection{Formal Proof  of Theorem \ref{thm:third}}
\subsubsection{Execution $\alpha^{(v_1,v_2)}$ and Its Properties}
Let $\mathcal{N}$ be an arbitrary subset of $\{1,2,\ldots,N\}$ with $N-f$ elements. Like the proof of Theorem \ref{thm:second}, we construct $|\mathcal{V}| \times (|\mathcal{V}|-1)$ executions of the algorithm $A$. In particular, for every tuple $(v_1, v_2) \in \mathcal{V}\times \mathcal{V}$ where $v_1 \neq v_2$, we create an execution $\alpha^{(v_1, v_2)}$ of algorithm $A$. The execution $\alpha^{(v_1, v_2)}$ is constructed in a manner that is essentially the same as Section \ref{subsubsec:description}. The $f$ servers in $\{1,2,\ldots,N\}-\mathcal{N}$ fail at the beginning of $\alpha^{(v_1, v_2)}.$ The execution  $\alpha^{(v_1, v_2)}$ has two complete write operations $\pi_1$ and $\pi_2$ with values $v_1$ and $v_2$, with $\pi_2$ being invoked after the termination of $\pi_1$. 

Similar to the proof of Theorem \ref{thm:second}, we let $P_0^{(v_1, v_2)}, P_1^{(v_1, v_1)}, P_2^{(v_1,v_2)}, \ldots, P_M^{(v_1,v_2)}$ be a sequence of consecutive points in execution $\alpha^{(v_1,v_2)}$, where $P_0^{(v_1,v_2)}$ is an arbitrary point after the termination of $\pi_1$ and before the invocation of $\pi_2$, and $P_M^{(v_1,v_2)}$ is an arbitrary point after the point of termination of $\pi_2$.  We denote by $\alpha_i^{(v_1,v_2)},$ the execution between the initial point of $\alpha^{(v_1,v_2)}$ and point $P_i^{(v_1,v_2)}$. 

The definition of $1$-valent and $2$-valent points are similar to Definitions \ref{def:1valent}, with the exception that we allow the channels the servers to deliver their messages before the invocation of the read operation. We provide a formal definition of $1$-valent and $2$-valent points next. %For the sake of brevity, we combine the definition of $1$-valent and $2$-valent points.

\begin{definition}[$k$-valent, $k \in \{1,2\}$]
\label{def:1valent_gossip}
For $i \in \{0,1,2,\ldots,M\}$, a point $P_i^{(v_1,v_2)}$ in the constructed $\alpha_i^{(v_1,v_2)}$ is said to be $k$-valent if we can extend $\alpha_i^{(v_1,v_2)}$ to an execution $\beta$ as follows: 
After $P_i^{(v_1,v_2)}$ all the messages from and to the writer are delayed indefinitely. At $P_{i}^{(v_1, v_2)}$ all the channels between the servers act, delivering all their messages. After the delivery of the messages in the channels between the servers, a read operation starts and all the components, except the writer and the channels from and to the writer, {execute their protocols until the read operation terminates.} The read operation returns $v_k$. 
\end{definition}

Results analogous to Lemmas \ref{lem:not1valent},  \ref{lem:returnsv1orv2} and \ref{lem:adjacentpoints} in the Section \ref{sec:secondthm} hold, with the modified definition of $k$-valent points. We simply restate these lemmas without proofs here for the sake of completeness. The proofs are essentially identical to the proofs in Section \ref{sec:secondthm}.%\rd{Let us call the modified lemmas Lemmas 6.4, 6.5, and 6.6.} For the sake of brevity, we do not explicitly provide the definition of critical points, or restate Lemmas. The proofs of Lemmas 6.4 and 6.5 follow along the same lines as the previous section. For the proof of Lemma 6.6, we only need to note that even with the modified definitions of valency, the point $P_{0}^{(v_1,v_2)}$ is $1$-valent, and the point $P_{M}^{(v_1, v_2)}$ is a point that is $2$-valent and not $1$-valent. With this observation, the proof of Lemma 6.6 can be constructed along the same lines as in Section \ref{sec:secondthm}. We omit the proofs of these lemmas for the sake of brevity.

\begin{lemma}
For $i \in \{0,1,2,\ldots,M\}$, a point $P_i^{(v_1,v_2)}$ that is not $1$-valent is $2$-valent.
\label{lem:not1valent_gossip}
\end{lemma}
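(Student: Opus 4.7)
The plan is to adapt the proof of Lemma \ref{lem:not1valent} almost verbatim, with the single modification dictated by Definition \ref{def:1valent_gossip}: before invoking the read operation at point $P_i^{(v_1,v_2)}$, we first let every channel between servers deliver all of its in-flight messages, and only then invoke the read.

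Concretely, I would take an arbitrary point $P_i^{(v_1,v_2)}$ that is not $1$-valent and construct an extension $\beta$ of $\alpha_i^{(v_1,v_2)}$ witnessing $2$-valency, as follows. After $P_i^{(v_1,v_2)}$ delay all messages from and to the writer indefinitely, then let every server-to-server channel deliver all of its currently queued messages, and then invoke a read operation; from that point on, let the servers, readers, inter-server channels, and reader-server channels take steps in a fair manner until the read terminates. From the perspective of the servers, readers, and all channels except those involving the writer, this is indistinguishable from a fair execution of $A$ in which the write client fails between sending/receiving; thus, by the liveness hypothesis (at most $f$ server failures occurred, namely the servers in $\{1,\ldots,N\}-\mathcal{N}$), the read terminates.

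Next I would invoke an obvious gossip-version analog of Lemma \ref{lem:returnsv1orv2}: because the read is invoked after $\pi_1$ has terminated, and the only other write that is ever invoked in $\beta$ is $\pi_2$ (which is either not yet invoked, still pending, or already complete, depending on $i$), regularity of $A$ forces the read to return either $v_1$ or $v_2$. The proof of this analog is the same as the proof of Lemma \ref{lem:returnsv1orv2}; regularity is a semantic property about the external behavior of $A$ and is insensitive to whether or not server gossip is permitted, and to whether we flushed the inter-server channels before the read. Since, by hypothesis, $P_i^{(v_1,v_2)}$ is not $1$-valent, the read cannot return $v_1$, so it must return $v_2$. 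Truncating $\beta$ at the response point of this read yields an execution satisfying Definition \ref{def:1valent_gossip} for $k=2$, proving that $P_i^{(v_1,v_2)}$ is $2$-valent.

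The only subtlety, and the one place the proof differs from Lemma \ref{lem:not1valent}, is the explicit flush of inter-server channels inserted between $P_i^{(v_1,v_2)}$ and the invocation of the read. One might worry that this flush cannot be carried out in a single ``burst'' if the servers react to received gossip by sending new gossip, but this is harmless: Definition \ref{def:1valent_gossip} only requires that we deliver the messages already in transit at $P_i^{(v_1,v_2)}$; subsequently generated server-to-server messages can be absorbed into the fair schedule that follows the invocation of the read. So I do not anticipate this step to be a real obstacle; the argument is essentially a mechanical translation of the non-gossip proof.
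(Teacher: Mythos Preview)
Your proposal is correct and takes essentially the same approach as the paper. In fact, the paper omits the proof of this lemma entirely, stating that it is ``essentially identical'' to the proof of Lemma~\ref{lem:not1valent} with the modified Definition~\ref{def:1valent_gossip}; you have written out precisely that adaptation, including the flush of inter-server channels before the read and the appeal to the gossip analog of Lemma~\ref{lem:returnsv1orv2} (which the paper states as Lemma~\ref{lem:returnsv1orv2_gossip}, also without proof).
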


{\begin{lemma}
\label{lem:returnsv1orv2_gossip}
Consider an execution $\beta$ which is an extension of $\alpha_{i}^{(v_1, v_2)}$. In $\beta$, after point $P_{i}^{(v_1, v_2)}$, the writer stops taking steps and all messages from and to the writer are delayed indefinitely. A read operation begins at some point after point $P_{i}^{(v_1, v_2)}$ and terminates in $\beta$. 

Then, the read operation returns either $v_1$ or $v_2$.
\end{lemma}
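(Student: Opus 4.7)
The plan is to argue this exactly as in the proof of Lemma \ref{lem:returnsv1orv2}, since the addition of server gossip does not introduce any new write operations into the execution. By construction of $\alpha^{(v_1,v_2)}$, the only write operations present are $\pi_1$ and $\pi_2$, and every point $P_i^{(v_1,v_2)}$ in the sequence lies after the termination of $\pi_1$; consequently the read operation in $\beta$ is invoked after $\pi_1$ has already completed. Since the algorithm $A$ is regular, there must be a valid serialization of operations in $\beta$.

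First I would note that in the serialization, $\pi_2$ (if it has been invoked in $\beta$) must be ordered after $\pi_1$, because $\pi_2$ is invoked only after $\pi_1$ has terminated. Then I would examine the possible serialization positions of the read operation $\pi$. Since $\pi$ is invoked after $\pi_1$ terminates, regularity forces $\pi$ to be serialized after $\pi_1$. The only remaining freedom is whether $\pi$ is serialized between $\pi_1$ and $\pi_2$ or after $\pi_2$; in the former case the read returns $v_1$, and in the latter case it returns $v_2$. If $\pi_2$ has not been invoked at all in $\beta$ before $\pi$, then the read trivially returns $v_1$.

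The only conceptual wrinkle, compared to Lemma \ref{lem:returnsv1orv2}, is that between point $P_i^{(v_1,v_2)}$ and the invocation of the read, the inter-server channels are permitted to deliver gossip messages. This is not an obstacle: those deliveries are internal actions that do not invoke any new client operation, so the set of write operations in $\beta$ remains exactly $\{\pi_1,\pi_2\}$, and the regularity-based argument above applies verbatim. The only step requiring minor care is to confirm that $\beta$ is indeed a legal execution of $A$ in which the read terminates, which follows from the liveness hypothesis applied to the sub-execution indistinguishable (from the perspective of the servers, the reader, and their channels) from a fair execution in which the write client fails after $P_i^{(v_1,v_2)}$; this is the same indistinguishability observation already used in the proof of Lemma \ref{lem:not1valent}. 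Hence the read returns $v_1$ or $v_2$, completing the proof.
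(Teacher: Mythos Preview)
Your argument is correct and matches the paper's approach: the paper explicitly omits a proof of this lemma, stating that it is ``essentially identical'' to the proof of Lemma~\ref{lem:returnsv1orv2}, and your write-up reproduces that serialization argument faithfully. One small remark: you need not appeal to liveness to ensure the read terminates, since termination of the read is already part of the hypothesis of the lemma; but this does no harm.
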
}

\begin{lemma}
There exists some integer $i \in \{0,2,\ldots,M-1\}$ such that $P_i^{(v_1,v_2)}$ is $1$-valent and $P_{i+1}^{(v_1,v_2)}$ is not $1$-valent.
\label{lem:adjacentpoints_gossip}
\end{lemma}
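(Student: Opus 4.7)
The plan is to mirror the structure of the proof of Lemma \ref{lem:adjacentpoints}, showing (i) that $P_0^{(v_1,v_2)}$ is $1$-valent, and (ii) that $P_M^{(v_1,v_2)}$ is not $1$-valent, and then choosing $i$ to be the largest index for which $P_i^{(v_1,v_2)}$ is $1$-valent. By (i), such an $i$ exists and is at least $0$; by (ii), $i < M$; and by maximality $P_{i+1}^{(v_1,v_2)}$ is not $1$-valent, which is precisely the conclusion of the lemma. The only subtle differences from Section \ref{sec:secondthm} come from the modified notion of $k$-valency in Definition \ref{def:1valent_gossip}, where we now allow all inter-server channels to deliver their pending messages before the read operation is invoked. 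I will verify that this enlargement of the witnessing extension does not interfere with the regularity arguments.

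For step (i), I would extend $\alpha_0^{(v_1,v_2)}$ as follows: stop the writer and delay all messages from and to the writer indefinitely at $P_0^{(v_1,v_2)}$; then let every inter-server channel deliver its queued messages; then invoke a read and schedule the servers, reader, and the client–server and server–server channels fairly. From the viewpoint of the servers, the reader, and the channels connecting them, this is indistinguishable from a fair execution in which the write client crashes immediately after $\pi_1$ terminates. Liveness therefore forces the read to terminate. Since $\pi_1$ is the only write that is ever invoked in this extension and it completed before the read was invoked, regularity forces the read to return $v_1$. The execution constructed this way witnesses that $P_0^{(v_1,v_2)}$ is $1$-valent.

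For step (ii), I would proceed by contradiction. Suppose $P_M^{(v_1,v_2)}$ is $1$-valent and let $\tilde\beta$ be the corresponding extension of $\alpha_M^{(v_1,v_2)}$. In $\tilde\beta$ both $\pi_1$ and $\pi_2$ have completed before the read is invoked (because $P_M^{(v_1,v_2)}$ lies after the termination of $\pi_2$), with $\pi_2$ following $\pi_1$. By regularity the read must be serialized either between $\pi_1$ and $\pi_2$ or after $\pi_2$; since the read is invoked strictly after $\pi_2$ terminates, only the latter is possible, so the read must return $v_2 \neq v_1$. But by hypothesis of $1$-valency the read in $\tilde\beta$ returns $v_1$, a contradiction. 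Note that the intermediate step in which the server–server channels drain their messages before the read is invoked does not interfere with this argument, because that step is still a legal schedule of algorithm $A$ and the regularity argument depends only on the invocation/response ordering of the client operations.

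The main thing to be careful about is exactly this last point: allowing all inter-server gossip messages to be delivered before the read is invoked should not disturb the reasoning. It is harmless for (ii), since regularity constrains the returned value via the client-visible serialization only, and letting the gossip finish only concentrates any ``staleness'' in server state that could have been used to justify a non-$v_2$ return. For (i), the only mild worry is that processing gossip messages triggered during $\pi_1$ does not unilaterally install some third value at the servers; but since the writer has stopped and only $v_1$ was ever written, any gossip is about $v_1$, and the regularity argument proceeds unchanged. Once (i) and (ii) are established, the maximality argument on $i$ yields the lemma immediately, exactly as in the proof of Lemma \ref{lem:adjacentpoints}.
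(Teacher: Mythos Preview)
Your proposal is correct and follows exactly the approach the paper intends: the paper explicitly states that the proof of Lemma~\ref{lem:adjacentpoints_gossip} is essentially identical to that of Lemma~\ref{lem:adjacentpoints}, and your argument reproduces that proof with the appropriate adjustment for the modified valency of Definition~\ref{def:1valent_gossip}. The extra care you take in noting that the inter-server message delivery step is a legal schedule that does not affect the regularity-based serialization argument is exactly the minor addition needed, and no further ideas are required.
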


Lemma \ref{lem:criticalpoints} requires some minor modifications to include the possibility that, between a pair of critical points, a channel between servers may perform an action. We state and prove the analogous lemma next.

We inherit the definition of critical points from Definition \ref{def:criticalpoints}. The only change is that the terms $1$-valent and $2$-valent points use Definition \ref{def:1valent_gossip}. We restate the definition here for the sake of completeness.
\begin{definition}[Critical points]
 Let $Q_1, Q_2$ be two points in execution $\alpha^{(v_1,v_2)}$. The pair of points $(Q_1, Q_2)$ is defined to be a pair of critical points if there exists a number $i$ in $\{0,2,\ldots,M-1\}$ such that 
\begin{itemize}
\item $Q_1 = P_{i}^{(v_1, v_2)},$ $Q_{2}=P_{i+1}^{(v_1,v_2)}$,
\item $Q_1$ is $1$-valent,
\item $Q_2$ is not $1$-valent.
\end{itemize}
\label{def:criticalpoints_gossip}
\end{definition}

\begin{lemma}
Let $(Q_1,Q_2)$ be a pair of critical points of execution $\alpha^{(v_1,v_2)}$. Then,
\begin{enumerate}[(a)]
\item the readers, and the channels between the readers and the servers, are all in the same state at point $Q_2$ as at point $Q_1$;
\item there is at most one non-failing server such that its state at $Q_1$ is different from its state at $Q_2$.

\item among all the channels between the servers, there is at most one channel whose state at $Q_1$ is different from its state at $Q_2$. 
\end{enumerate}
\label{lem:criticalpoints_gossip}
\end{lemma}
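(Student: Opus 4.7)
The plan is to imitate the case analysis in the proof of Lemma \ref{lem:criticalpoints}, with one additional case to handle the actions on channels between servers that are made possible by gossip.

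Part (a) is immediate from the construction of $\alpha^{(v_1,v_2)}$: by definition neither the readers nor the channels between readers and servers perform any action in $\alpha^{(v_1,v_2)}$, so these components remain in their initial states throughout the execution and are in particular identical at $Q_1$ and $Q_2$.

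For parts (b) and (c), the essential observation is that $(Q_1,Q_2)$ are consecutive points of $\alpha^{(v_1,v_2)}$, so exactly one action of exactly one component of the composite automaton occurs between them. I would split into four cases based on which component takes that step: (I) a channel between the writer and some server $s$ takes an action, (II) a server $s$ takes an internal or output step, (III) the write client takes a step, and (IV) a channel between two servers $s_1$ and $s_2$ takes an action. In cases (I)--(III) no inter-server channel is touched, so (c) is immediate; and in each of these cases at most one server state can change (namely $s$ in (I)--(II), and no server in (III)), giving (b). These three sub-arguments are essentially copied verbatim from the proof of Lemma \ref{lem:criticalpoints}.

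The only genuinely new case is (IV). Here the single action is either a send by $s_1$ into the channel or a delivery by the channel to $s_2$. In the first sub-case, only the state of $s_1$ and the state of that one inter-server channel can change; in the second, only the state of $s_2$ and that same channel can change. Either way, at most one non-failing server and at most one inter-server channel differ in state at $Q_1$ versus $Q_2$, establishing (b) and (c). The main point to be careful about is the I/O automaton modelling convention: we must take \emph{send} and \emph{deliver} on an inter-server channel to be separate atomic steps, so that no single step can alter two server states simultaneously. Once this convention is adopted---as is standard---case (IV) is no harder than the other three, and I do not anticipate any real obstacle in the proof beyond this bookkeeping.
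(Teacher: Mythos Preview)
Your approach is essentially the paper's: part (a) by construction, parts (b)--(c) by case analysis on which component takes the single step between the adjacent points $Q_1,Q_2$. The only wrinkle is that your cases (II) and (IV) overlap---a server's output action onto an inter-server channel is simultaneously ``a server output step'' and ``a send by $s_1$ into the channel''---so your claim that in cases (I)--(III) no inter-server channel is touched is not literally true as stated; the paper handles this by keeping the server-output case under its Case II and explicitly noting there that if the output goes onto an inter-server channel then exactly that one channel changes state. Once you either restrict your Case II to outputs not targeting inter-server channels or add that sub-case to it, your argument is correct and matches the paper's.
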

\begin{proof}
In execution $\alpha^{(v_1,v_2)}$, the readers and the channels between readers and servers do not perform any actions. So these components are in their initial state at every point of the execution, including points $Q_1$ and $Q_2$. This implies that statement $(a)$ is true. We prove $(b)$ and $(c)$ next.
Note that $Q_{1}$ and $Q_{2}$ are adjacent points of the execution $\alpha^{(v_1,v_2)}$. There are three possibilities: (I) {a channel performed an action} between points $Q_1$ and $Q_{2}$, (II) {a server performed an action} between points $Q_{1}$ and $Q_{2}$, or (III) {a client performed an action} between points $Q_1$ and $Q_2$. 

{\bf{Case I}:} The channels between readers and servers do not perform any actions in $\alpha^{(v_1,v_2)}$. We consider two sub-cases: (i) a channel between a server and the writer takes an action, or (ii) a channel between a server and another server takes an action. 

\emph{Case I (i):} If a channel from the writer to server $s$ takes an action, then, for  every server in the set $\mathcal{N} - \{s\},$ there was no input, internal or output action between $Q_1$ and $Q_2$. Therefore every server in the set $\mathcal{N} - \{s\},$ has the same state at $Q_1$ as at $Q_2$. If a channel from a server $s$ to the writer takes an action, then, for every server, there was no input, internal or output action between $Q_1$ and $Q_2$. Therefore every server in the set $\mathcal{N} - \{s\},$ has the same state at $Q_1$ as at $Q_2$. Furthermore, all channels between the servers are in the same state at $Q_1$ as at $Q_2$. This completes the proof in Case I (i).

\emph{Case I (ii):} If a channel from a server, say $s'$ to a server, say $s$ takes an action, between $Q_1$ and $Q_2,$ then, every server in the set $\mathcal{N} - \{s\},$ has the same state at $Q_1$ as at $Q_2$. Furthermore, all the channels between the servers except the channel from $s'$ to $s$ have the same state at $Q_2$ as at $Q_1$. Therefore $(b)$ and $(c)$ are satisfied in Case I.

\textbf{Case II:} The proof of statement $(b)$ is the same as the proof of Lemma \ref{lem:criticalpoints_gossip}. We show statement $(c)$ here. Suppose server $s$ takes an action between $Q_1$ and $Q_2$. If the action is not an output action, or if the server outputs a message onto a channel to the writer, then the states of all the channels between servers are the same at $Q_1$ and $Q_2$. If the action outputs a message on to the channel to another server, say $s'$, then all the channels between the servers except the channel from $s$ to $s'$ have the same state at $Q_2$ as at $Q_1$. Therefore statement (c) is true for Case II.

\textbf{Case III:} The proof of statement $(b)$ is the same as the proof of Lemma \ref{lem:criticalpoints_gossip}. Statement (c) holds because all the channels between the servers have the same state at $Q_1$ as at $Q_2$. 
\end{proof}

We are now ready to prove Theorem \ref{thm:third}. 

\subsubsection{Proof of Theorem \ref{thm:third}}
\emph{{Proof of Theorem \ref{thm:third}}.}
Lemma \ref{lem:adjacentpoints_gossip} implies that we can find a pair of critical points $(Q_1^{(v_1, v_2)}, Q_2^{(v_1, v_2)})$ in execution $\alpha^{(v_1, v_2)}$. From Lemma \ref{lem:criticalpoints_gossip}, we note that there is at most one non-failing server and one channel that changes its state between $Q_1^{(v_1,v_2)}$ and $Q_2^{(v_1,v_2)}$. Let $s$ denote {the} non-failing server which changes state between points $Q_{1}^{(v_1,v_2)}$ and $Q_{2}^{(v_1,v_2)}$, {if there is one; if not}, let $s$ denote an arbitrary non-failing server. Let $s'$ be a non-failing server such that the channel from $s''$ to $s'$ change its state between $Q_1^{(v_1, v_2)}$ and $Q_2^{(v_1, v_2)},$ for some server $s'',$  if there is such a server \footnote{Between $Q_1^{(v_1, v_2)}, Q_2^{(v_1, v_2)}$, if there is a server $s$ that changes its state, {and} a channel between two servers $s''$ and $s'$ that changes its state, it is easy to show that $s \in \{s', s''\}.$ We nonetheless use distinct notation for servers $s, s', s''$ since it simplifies presentation.}; let $s'$ be an arbitrary non-failing server if there is not. 

We know that $Q_{1}^{(v_1, v_2)}$ is the point $P_{i}^{(v_1, v_2)}$ for some $i \in \{0,1,\ldots,M-1\}$. We create executions $\beta_1^{(v_1, v_2)}$ and $\beta_2^{(v_1, v_2)},$ which are respectively extensions of $\alpha_{i}^{(v_1, v_2)}$ and $\alpha_{i+1}^{(v_1, v_2)}$ next. The construction of executions $\beta_{1}^{(v_1, v_2)}$ and $\beta_2^{(v_1,v_2)}$ has subtle, but important differences from corresponding constructions in the proof of Theorem \ref{thm:second} because, here, we carefully handle the actions of the channels between the servers. After presenting the constructions of $\beta_{1}^{(v_1, v_2)}$ and $\beta_2^{(v_1,v_2)},$ we state Claim $\ref{lem:same_state_gossip},$ which is analogous to Claim \ref{lem:same_state}.

Because $Q_{1}^{(v_1, v_2)}$ is a $1$-valent point, there exists an execution $\beta_{1}^{(v_1, v_2)}$ which is an extension of $\alpha_{i}^{(v_1, v_2)}$ such that, at point $P_{i}^{(v_1, v_2)},$ the writer and channels from the writer stop performing actions, the channels between the servers deliver all their messages, a read operation begins after the delivery of these messages and returns $v_{1}$. We denote by ${R}_{1}^{(v_1, v_2)}$ a point in $\beta_1^{(v_1, v_2)}$ after $P_{i}^{(v_1, v_2)},$ after the channels between the servers deliver all their messages, but before the read operation is invoked. 

We now create execution $\beta_{2}^{(v_1, v_2)}.$ The execution $\beta_{2}^{(v_1, v_2)}$ follows $\alpha^{(v_1, v_2)}$ until point $Q_{2}^{(v_1, v_2)}.$ At point $Q_2^{(v_1, v_2)},$ all the channels between the servers act delivering all their messages. For a server $j$ in $\{1,2,\ldots,N-f\}-\{s,s'\},$ the channels with destination $j$ are at the same state at $Q_2^{(v_1, v_2)}$ as they are at $Q_1^{(v_1, v_2)}$; these channels act and deliver their messages in the same order as they do after point $Q_1^{(v_1, v_2)}$ in $\beta_{1}(v_1, v_2).$ At point $Q_{2}^{(v_1, v_2)}$, server $j \in \mathcal{N}-\{s, s''\}$ is at the same state as it was at point $Q_1^{(v_1, v_2)}$. Also, at point $Q_{2}^{(v_1, v_2)},$ server $j$ receives messages in the same order as it does at point $Q_1^{(v_1, v_2)}$ in $\beta_1^{(v_1, v_2)}$; on receiving each message, server $j$ takes the same action in $\beta_2^{(v_1, v_2)}$ as it does in $\beta_1^{(v_1, v_2)}$. 
The channels with destinations $s$ or $s'$ deliver messages in some arbitrary order, and servers $s$ and $s'$ perform actions based on the protocol specified by algorithm $A$. We denote this point as $R_2^{(v_1, v_2)}$. It is worth noting that at point $R_2^{(v_1, v_2)},$ all the channels are empty, and every server in $\mathcal{N}-\{s,s'\}$ is at the same    state as it is at point $R_1^{(v_1, v_2)}$ in $\beta_1^{(v_1, v_2)}$. After point $R_2^{(v_1, v_2)}$, the writer and the channels from and to the writer do not perform any actions. At $R_{2}^{(v_1, v_2)}$, a read operation begins, all the components except the writer and the channels from and to the writer act in a fair manner until the read returns. Because the point $Q_{2}^{(v_1, v_2)}$ is $2$-valent but not $1$-valent, the read returns $v_2$ in $\beta_2^{(v_1, v_2)}$.

We now derive a lower bound on the storage cost by showing some properties on server states at points ${R}_{1}^{(v_1, v_2)}$ and ${R}_{2}^{(v_1, v_2)}$ in executions $\beta_{1}^{(v_1, v_2)}$ and $\beta_{2}^{(v_1, v_2)}$ respectively.

Let $\vec{S}^{(v_1,v_2)}$ be an element of $ \prod_{n \in \mathcal{N}} \mathcal{S}_{n} \times \mathcal{N} \times \displaystyle\cup_{n \in\mathcal{N}}\mathcal{S}_{n} \times \mathcal{N} \times \displaystyle\cup_{n \in\mathcal{N}}\mathcal{S}_{n}$ as follows. The first $N-f$ components of $\vec{S}^{(v_1,v_2)}$ denote the states of the $N-f$ {servers} in $\mathcal{N}$ at point $R_{1}^{(v_1,v_2)}.$ The $(N-f+1)$st component of $\vec{S}^{(v_1,v_2)}$ denotes the server index $s$ and $(N-f+2)$nd component denotes the state of server $s$ at point $R_{2}^{(v_1,v_2)}$ in $\alpha^{(v_1, v_2)}$. 
The $(N-f+3)$nd component denotes the server index $s'$ and the $(N-f+4)$th component denotes the state of server $s'$ at point $R_2^{(v_1, v_2)}$ in execution $\beta_2^{(v_1, v_2)}$. Note that the number of elements in the set $\displaystyle\bigcup_{(v_1, v_2)\in \mathcal{V}\times \mathcal{V}, v_1 \neq v_2 }\{\vec{S}^{(v_1, v_2)}\}$ is at most $\prod_{i\in \mathcal{N}} |\cS_i| \times (N-f)^2 \times (\max_{i\in \mathcal{N}} |\cS_i|)^2$.

To prove Theorem \ref{thm:third}, we show that, if $(v_1, v_2)$ and $ (v_1',v_2')$ are two \emph{distinct} elements of the set $\{(x,y): (x,y) \in \mathcal{V}\times \mathcal{V}, x \neq y \},$ then $\vec{S}^{(v_1,v_2)} \neq \vec{S}^{(v_1',v_2')}.$ If we show this, {then} it implies that the number of elements in the set  $\displaystyle\bigcup_{(v_1, v_2)\in \mathcal{V}\times \mathcal{V}, v_1 \neq v_2 }\{\vec{S}^{(v_1, v_2)}\}$ is at least equal to the number of elements in the set $\{(x,y): (x,y) \in \mathcal{V}\times \mathcal{V}, x \neq y\},$ which is equal to $(|\mathcal{V}|)\times (|\mathcal{V}|-1)$. This leads to the following chain of {inequalities}:
\begin{align*}
&\prod_{n\in \mathcal{N}} |\cS_n| \times (N-f)^2 \times (\max_{n\in \mathcal{N}} |\cS_n|)^2 \geq (|\mathcal{V}|)\times (|\mathcal{V}|-1) \\  
&2 \max_{n \in \mathcal{N}} \log_{2}|\mathcal{S}_{i}| + \sum_{n \in \mathcal{N}} \log_{2}|\mathcal{S}_{i}| \geq \log_{2}|\mathcal{V}| + \log_{2}\left(|\mathcal{V}|-1\right) - 2 \log_{2}(N-f)
\end{align*}
which implies the theorem. Therefore, to prove the theorem, it suffices to show that, if $(v_1, v_2) \neq (v_1',v_2'),$ then $\vec{S}^{(v_1,v_2)} \neq \vec{S}^{(v_1',v_2')}.$ The remainder of our proof is similar to Theorem \ref{thm:second}. We highlight the main differences.

Suppose, for contradiction, there are two {distinct} tuples $(v_1, v_2)$ and $(v_1', v_2')$ in $\{(x,y): (x,y) \in \mathcal{V}\times \mathcal{V}, x \neq y\}$ and $\vec{S}^{(v_1,v_2)} = \vec{S}^{(v_1',v_2')}.$ We now state a lemma that is analogous to Claim \ref{lem:same_state}.

\begin{claim}
Let $\vec{S}^{(v_1, v_2)}= \vec{S}^{(v_1', v_2')}.$  Consider the composite automaton formed by the servers, the readers, the channels between the servers, and the channels between the readers and servers. For $k \in \{1,2\}$, every component of this system at point $R_{k}^{(v_1,v_2)}$ in $\beta_k^{(v_1, v_2)}$ is identical to the state of the corresponding component at point $R_{k}^{(v_1',v_2')}$ in execution $\beta_k^{(v_1',v_2')}.$ 
\label{lem:same_state_gossip}
\end{claim}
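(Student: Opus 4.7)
The plan is to mirror the two-case proof of Claim \ref{lem:same_state}, splitting into $k=1$ and $k=2$ and verifying coordinate by coordinate that every component of the composite automaton at $R_k^{(v_1,v_2)}$ equals the corresponding component at $R_k^{(v_1',v_2')}$. Relative to Claim \ref{lem:same_state}, the composite automaton here is enlarged by the inter-server channels, but the structural facts needed to control these channels are already built into the constructions of $\beta_1^{(v_1,v_2)}$ and $\beta_2^{(v_1,v_2)}$ given in Section \ref{sec:thirdthm}; the rest is bookkeeping.

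For $k=1$, recall that $R_1^{(v_1,v_2)}$ is defined as a point in $\beta_1^{(v_1,v_2)}$ reached after every inter-server channel has delivered its pending messages but before the read is invoked. Thus at $R_1^{(v_1,v_2)}$ the inter-server channels are empty, the reader-server channels and the readers remain in their initial states (no reader ever took a step), the servers in $\{1,\ldots,N\}\setminus\mathcal{N}$ are in the failed state, and the states of the servers in $\mathcal{N}$ are, by definition, the first $N-f$ coordinates of $\vec{S}^{(v_1,v_2)}$. The analogous description holds at $R_1^{(v_1',v_2')}$, and the hypothesis $\vec{S}^{(v_1,v_2)}=\vec{S}^{(v_1',v_2')}$ immediately closes the case.

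For $k=2$, the construction of $\beta_2^{(v_1,v_2)}$ explicitly guarantees that at $R_2^{(v_1,v_2)}$ all channels are empty and that every server $j\in\mathcal{N}\setminus\{s,s'\}$ is in the same state it has at $R_1^{(v_1,v_2)}$ in $\beta_1^{(v_1,v_2)}$, namely the $j$th coordinate of $\vec{S}^{(v_1,v_2)}$; the $(N-f+2)$nd and $(N-f+4)$th coordinates of $\vec{S}^{(v_1,v_2)}$ are by definition the states of $s$ and $s'$ at $R_2^{(v_1,v_2)}$, and the $(N-f+1)$st and $(N-f+3)$rd coordinates pin down the identities of $s$ and $s'$. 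Combined with the fact that the readers and reader-server channels are still in their initial states at $R_2^{(v_1,v_2)}$, every component is thus determined by $\vec{S}^{(v_1,v_2)}$. The same description applies to $R_2^{(v_1',v_2')}$, so equality of the two vectors closes the case.

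The one delicate point is the appeal, in the $k=2$ case, to the assertion that every server $j\in\mathcal{N}\setminus\{s,s'\}$ has the same state at $R_2^{(v_1,v_2)}$ as at $R_1^{(v_1,v_2)}$. This rests on Lemma \ref{lem:criticalpoints_gossip}: at $Q_2^{(v_1,v_2)}$ all such servers and all of their incoming inter-server channels have the same states as at $Q_1^{(v_1,v_2)}$, and the construction forces the common delivery order on these channels; hence each such $j$ sees the same sequence of incoming messages in $\beta_2^{(v_1,v_2)}$ between $Q_2^{(v_1,v_2)}$ and $R_2^{(v_1,v_2)}$ as in $\beta_1^{(v_1,v_2)}$ between $Q_1^{(v_1,v_2)}$ and $R_1^{(v_1,v_2)}$, and deterministically lands in the same state. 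Once this is noted, the rest of the proof is a mechanical unwinding of the definition of $\vec{S}^{(v_1,v_2)}$.
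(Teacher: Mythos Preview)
Your proof is correct and follows essentially the same approach as the paper's own proof: split on $k\in\{1,2\}$, observe that at $R_k$ all inter-server channels are empty and readers/reader-server channels are in their initial states, and then read the non-failed server states directly off the coordinates of $\vec{S}^{(v_1,v_2)}$. Your additional paragraph justifying why servers in $\mathcal{N}\setminus\{s,s'\}$ inherit their $R_1$-state at $R_2$ is a careful unpacking of what the paper asserts directly from the construction of $\beta_2^{(v_1,v_2)}$.
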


\emph{Proof of Claim \ref{lem:same_state_gossip}.}
	We first consider the case where $k=1$. At points $R_{1}^{(v_1,v_2)}$ in $\beta_1^{(v_1, v_2)}$ and $R_1^{(v_1',v_2')}$ in $\beta_1^{(v_1',v_2')}$, all the channels between the servers, and the channels between the readers and servers are empty, the readers are in their initial state and the servers in $\{1,2,\ldots,N\}-\mathcal{N}$ have failed. Denoting $\mathcal{N}=\{a_1, a_2,\ldots,a_{N-f}\}$, The state of every non-failing server $s$ at points $R_{1}^{(v_1, v_2)}$ in $\beta_1^{(v_1,v_2)}$ and $R_{1}^{(v_1', v_2')}$ in $\beta_1^{(v_1', v_2')}$ is respectively equal to the $s$th component of $\vec{S}^{(v_1,v_2)}$ and $\vec{S}^(v_1',v_2')$. Because $\vec{S}^{(v_1, v_2)} = \vec{S}^(v_1' ,v_2'),$  every non-failing server is at the same state at $R_{1}^{(v_1, v_2)}$ in $\beta_1^{(v_1,v_2)}$ as at $Q_{1}^{(v_1', v_2')}$ in $\beta_1^{(v_1', v_2')}.$ This completes the proof for the case where $k=1$.

Now consider the case where $k=2$. Let $s$ and $s'$ respectively denote server indices determined by the $N-f+1$st and $N-f+3$rd components of $\vec{S}^{(v_1, v_2)}$. All the channels, readers and failed servers have the same state at points $R_{2}^{(v_1,v_2)}$ in $\beta_2^{(v_1, v_2)}$ as at $R_2^{(v_1',v_2')}$ in $\beta_2^{(v_1',v_2')}$.  
Denoting $\mathcal{N} = \{a_1,a_2,\ldots,a_{N-f}\}$, where $a_1 < a_2, \ldots < a_{N-f}$, the state of a non-failing server $a_j \in \{1,2,\ldots,N-f\}-\{s, s'\}$ at points $Q_{2}^{(v_1, v_2)}$ in $\beta_2^{(v_1,v_2)}$ and $Q_{2}^{(v_1', v_2')}$ in $\beta_2^{(v_1', v_2')}$ is respectively equal to the $j$th component of $\vec{S}^{(v_1,v_2)}$ and $\vec{S}^(v_1',v_2')$. The states of servers $s$ and $s'$ at point $Q_{2}^{(v_1, v_2)}$ in $\beta_2^{(v_1,v_2)}$ are respectively determined by the $N-f+2$nd and $N-f+4$th components of $\vec{S}^{(v_1,v_2)};$  the states of $s$ and $s'$ at $Q_{2}^{(v_1', v_2')}$ in $\beta_2^{(v_1', v_2')}$ are respectively determined by the $N-f+2$nd and $N-f+4$th components of $\vec{S}^{(v_1',v_2')}.$ Because $\vec{S}^{(v_1, v_2)}$ is equal to $\vec{S}^(v_1',v_2')$, every non-failing server has the same state  at $R_{2}^{(v_1, v_2)}$ in $\beta_2^{(v_1,v_2)}$ as at $R_{2}^{(v_1', v_2')}$ in $\beta_2^{(v_1', v_2')}.$ This completes the proof of the claim.

\emph{Proof of Theorem \ref{thm:third} continued.}
We now use Claim \ref{lem:same_state_gossip} to obtain a contradiction.
Because $(v_1, v_2)$ and $(v_1', v_2')$ are distinct, there are only two possibilities:  (I) $v_1' \neq v_1, v_1' \neq v_2$, (II)  $v_2' \neq v_1, v_2' \neq v_2$, or $v_2' = v_1, v_1' = v_2$, both of which imply that $v_2' \neq v_2$. We handle these possibilities separately.

{\bf Case (I): $v_1' \neq v_1, v_1' \neq v_2$.}
We create an execution $\gamma$ as follows. The execution $\gamma$ follows $\beta_1^{(v_1, v_2)}$ until point $R_1^{(v_1, v_2)}.$ From point $R_1^{(v_1, v_2)}$ in $\gamma$, every component except the writer and the channels from and to the writer follows the same steps of the corresponding component in $\beta_1^{(v_1', v_2')}$ from point $R_1^{(v_1', v_2')}$. Claim \ref{lem:same_state_gossip} implies that $\gamma$ is an execution of algorithm $A.$ In particular $\gamma$ extends $\alpha_i^{(v_1, v_2)}$ such that, the writer stops performing actions, messages from and to the writer are delayed indefinitely and a reader returns $v_1',$ which is not equal to $v_2$ or $v_1$. Therefore $\gamma$ violates Lemma \ref{lem:returnsv1orv2_gossip} and results in a contradiction. Therefore, if $v_1' \neq v_1, v_1' \neq v_2$, then $\vec{S}^{(v_1, v_2)} \neq \vec{S}^{(v_1', v_2')}$.

{\bf Case (II): $v_2' \neq  v_2$.} 
{We show that the point $Q_{2}^{(v_1, v_2)}$ is not $2$-valent by constructing an execution $\gamma$ that satisfies Definition 6.8 for $k=2$.} The execution $\gamma$ follows $\beta_2^{(v_1, v_2)}$ until point $R_2^{(v_1, v_2)}.$ From point $R_2^{(v_1, v_2)}$ in $\gamma$, every component except the writer and the channels from and to the writer follows the same steps of the corresponding component in $\beta_2^{(v_1', v_2')}$ from point $R_2^{(v_1', v_2')}$. Claim \ref{lem:same_state_gossip} implies that $\gamma$ is an execution of algorithm $A.$ In particular, $\gamma$ extends $\alpha_{i+1}^{(v_1, v_2)}$ such that, after point $Q_{2}^{(v_1, v_2)}$, the messages from and to the writer are delayed indefinitely, the channels between the servers deliver all their messages, and then, a reader begins returning $v_2',$ which is not $v_2$. However, by Lemma \ref{lem:not1valent_gossip} and that $Q_{2}^{(v_1, v_2)}$ is not $1$-valent, $Q_{2}^{(v_1, v_2)}$ is $2$-valent, and the read of $\gamma$ should return $v_2$, which is a contradiction.

This completes the proof.
\qed

\vspace{-5pt}
\section{Storage Cost Lower Bound for a Restricted Class of Algorithms}
\vspace{-5pt}
\label{sec:fourth}
In this section, we study a restricted class of algorithms where the write protocols have specific structure. In our restricted class, the write protocols consist of a fixed number of phases. In the protocols that we study, there is only one phase where a message containing information about the actual value is sent to the servers. The formal statement of our assumptions on the write protocol in Section \ref{sec:assumptions} is somewhat technically involved. However the write protocols of most previous algorithms \cite{DGL, AJX, CT, FAB, dobre_powerstore, Cadambe_Lynch_Medard_Musial_new,  Cadambe_Lynch_Medard_Musial_NCA} satisfy our assumptions. After stating our assumptions, we state in Theorem \ref{thm:fourth} in Section \ref{sec:fourththm}, a storage cost lower bound that applies to the class of algorithms that we study. The lower bound of Theorem \ref{thm:fourth} is much larger than the bound of Theorems \ref{thm:second} and \ref{thm:third}, and is close to the costs of previously developed algorithms. 

\vspace{-5pt}
\subsection{Protocol Assumptions}
\label{sec:assumptions}
\vspace{-5pt}
We now state three assumptions on the write protocol, Assumptions 1, 2 and 3.

%Write protocols typically have a query phase to get some information about the tags present in the system.  
%Finalize phases are typically used for garbage collection at the servers to indicate that sufficient number of codeword symbols corresponding the value have been propagated. Protocols which use separate servers for metadata may have multiple query or finalize phases. Although depending on the protocol structure, the query phases and finalize phases can be quite complex and differ vastly between different protocols, a unifying theme in all the algorithms of \cite{ DGL, AJX, CT, FAB, dobre_powerstore, Cadambe_Lynch_Medard_Musial_new,  Cadambe_Lynch_Medard_Musial_NCA} is that, \emph{the actions performed by the writer in the query and finalize phases do not depend on the value of the write operation}. Next, we describe two assumptions, Assumption 1 and 2, which describe the structure of the write protocol formally.}

\textbf{Assumption 1:} 
\emph{The state of a write client during a write operation is of the form $(v, m, h(v,m))$ where 
	%\begin{itemize}[itemsep = -0.5 ex]
		 $v \in \mathcal{V}$ is {the value of the write operation}, 
		 $m$ is an element of a set $\mathcal{M}$, and
		 $h(m,v)$ is the value of function $h$ whose domain is $\mathcal{V}\times \mathcal{M}$ and range is a finite set.
%\end{itemize}
	} 
	
	The set $\mathcal{M}$ is referred to as the {metadata set of the write protocol} of the algorithm. The function $h(m,v)$ can contain components of the send buffers that depend on the value, and hashed values used for verification to handle Byzantine adversaries \cite{androulaki2014_separate_metadata, HGR, dobre_powerstore}. To describe Assumption $2$, we first define the notion of a quorum system and a phase. A quorum system $\mathcal{Q}$ is a collection of subsets of $\{1,2,\ldots,N\}$. 
\begin{definition}[Phase]
For an arbitrary subset $\mathcal{N} \subseteq \{1,2,\ldots,N\}$  and a quorum system $\mathcal{Q},$ a $(\mathcal{N},\mathcal{Q})$-\emph{phase} consists of a sequence of actions at a write client as follows:
%\begin{itemize}[itemsep = -0.5 ex]
(i) Send message $m_{n}$ to server node $n$ for \emph{every} $n \in \mathcal{N}$.
(ii) Wait for responses from least one subset of servers in the collection $\mathcal{Q}.$
(iii) Perform internal actions, and finish the phase.
%\end{itemize}
\vspace{-5pt}
\end{definition}
\begin{definition}[Decomposable into phases]
A write protocol is said to be \emph{decomposable into phases} if, 
%\vspace{-5pt}
%\begin{itemize} [itemsep = -0.5 ex]\item 
on the invocation of a write operation, it invokes a phase, and  on the termination of a phase, it either invokes another phase, or terminates the write operation.
%	\end{itemize}
\vspace{-5pt}
\end{definition}

We are now ready to state Assumption $2$.  

\textbf{Assumption 2:} \emph{The write protocol is decomposable into phases.}

Before we state Assumption 3, we state the notion of the black-box action. Informally, a write client action is said to be a black-box action if every internal and output action of the client handles the data values as a black box, that is, actions treat the data object obliviously without regard to the actual value of the object. We state our assumption more formally next.
Recall that the write protocol is specified as a set of transitions $(\textrm{old-state, action, new-state}).$ 
\begin{definition}[Black-box Action]\label{def:black_box_action}
An internal or output action $\sigma$ performed by a write client is said to be a black-box action if the following holds: if, for some value $v \in \mathcal{V}$, 
\begin{itemize}
\item the action $\sigma$ is enabled when the client's state is $(m,v, h(m,v))$ for some $m \in \mathcal{M},$ and
\item the action $\sigma$ can result in the transition of the client's state from $(m,v,h(m,v))$ to $(m',v,h(m,v))$ for some $m' \in \mathcal{M},$ 
\end{itemize}
then, for every value $v' \in \mathcal{V}$, 
\begin{itemize}
 \item the action $\sigma$ is enabled when the client's state is $(m,v', h(m,v')),$ and
\item the action $\sigma$ can result in the transition of the client's state from $(m,v',h(m,v'))$ to $(m', v', h(m',v')).$ 
\end{itemize}
\end{definition}
For example, in the ABD algorithm \cite{ABD}, all actions are black-box actions. In particular, if the action of sending of a value is enabled when the metadata is $m$ for particular value $v$, then the send action is enabled for every value $v' \in \mathcal{V}$. Similarly, in erasure coding based algorithms, if the action of sending a codeword symbol is enabled in at one state, it is enabled at every state. 

Note that a write client's output actions are send and return. Send actions of a write client are categorized as value-dependent and value-independent actions.

\begin{definition}[Value-dependent and value-independent send actions]
A black-box send action $\sigma$ that is enabled during a write operation is said to be value-independent if the message sent does not depend on the value of the operation.  
A send action that is not a value-independent send action is referred to as a value-dependent send action.\end{definition}
For example, in the ABD algorithm, value-independent send actions involve sending query messages to the servers. Messages sent by value-dependent and value-independent send actions are respectively referred to as value-dependent and value-independent messages. We are now ready to state Assumption 3.

\textbf{Assumption 3:} \textbf{(a)} \emph{All write client actions are black-box actions,} and \textbf{(b)} \emph{in a write operation $\pi$ in an execution $\alpha$, if there is a phase where at least one value-dependent send action is performed, then every send action in every subsequent phase of the write operation $\pi$ is a value-independent send action.}

	In particular, Assumption 3(b) implies that there is at most one phase where the writer sends value-dependent messages on behalf of a write operation in any execution. We next state our main result.

\subsection{Statement of Theorem \ref{thm:fourth}}
\label{sec:fourththm}

We state our theorem for \emph{weakly regular} MWMR registers \cite{shao2011multiwriter}. Informally  a weakly regular shared memory object is one that supports concurrent write and read operations where, in every execution, for every terminating read operation $\pi_r$,  there is a subset $\Phi$ of the non-terminating write operations such that the operations in $\{\pi_r\}\cup\Phi \cup \Pi$ look like the execution of a serial variable, where $\Pi$ is the set of all terminating write operations in the execution. 

An atomic register is also a weakly regular register. Therefore, the storage cost of Theorem \ref{thm:fourth} applies for atomic registers as well. 

In an execution $\alpha$, a write operation $\pi$ is said to be active at point $P$ if the point $P$ is after the point of invocation but before the point of termination of $\pi$.

\begin{theorem}\label{thm:fourth}
Let $A$ be a multi-writer-single-reader shared memory emulation algorithm that implements a weakly regular read-write object whose values come from a finite set $\mathcal{V}$. Suppose algorithm $A$ satisfies Assumptions $1, 2$ and $3$ stated in Section \ref{sec:assumptions}, and following liveness property: 
In a fair execution of $A,$ if the number of server failures is no bigger than $f$ and the number of active write operations is no bigger than $\nu$, then every operation invoked at a non-failing client terminates.

Then, for every subset $\mathcal{N} \subseteq \{1,2,\ldots,N\}, |\mathcal{N}|=\min(N-f+\nu-1,N)$
{$$ \sum_{n\in\mathcal{N}} \log_{2} |\cS_n|\ge \log_{2} \binom{|\cV|-1}{\nu^*} - \nu^{*} \log_{2} (N-f+\nu^*-1) -\log_2(\nu^*!) $$}
where {$\nu^{*} = \min(\nu, f+1)$.}
\end{theorem}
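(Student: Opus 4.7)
The plan is to extend the critical-point technique of Theorems \ref{thm:second} and \ref{thm:third} to a setting with $\nu^*$ concurrent writes, using Assumptions 1--3 to localize each write's value-dependent information to a single quorum.

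First I fix a baseline value $v_0\in\mathcal{V}$ and, for each unordered subset $V\subseteq\mathcal{V}\setminus\{v_0\}$ of size $\nu^*$, construct an execution $\alpha^V$ as follows. The $f-\nu^*+1$ servers outside $\mathcal{N}$ fail at the start; a write of $v_0$ runs to completion; and then $\nu^*$ writes $\pi_1,\ldots,\pi_{\nu^*}$ carrying the elements of $V$ (in a canonical order determined by $V$) are invoked sequentially but left concurrent. By Assumption 2 each $\pi_i$ proceeds in phases, and by Assumption 3(b) it has at most one value-dependent phase; I schedule $\pi_i$ so that its value-dependent messages are delivered only to a chosen quorum $\mathcal{Q}_i\subseteq\mathcal{N}$ of size $N-f$ while the remaining value-dependent messages are delayed indefinitely. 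The writer then advances because the $f$-failure liveness requirement forces the writer to make progress once any $N-f$ servers respond. All subsequent send actions of $\pi_i$ are value-independent by Assumption 3(b), so $\pi_i$'s remaining phases can be driven without leaking $v_i$ to any server outside $\mathcal{Q}_i$. Taking the $\mathcal{Q}_i$ as $\nu^*$ rotated $(N-f)$-subsets of $\mathcal{N}$ (for instance $\mathcal{Q}_i=\{i,\ldots,i+N-f-1\}$ under a suitable labeling of $\mathcal{N}$) distributes the value footprints across the $N-f+\nu^*-1$ servers of $\mathcal{N}$.

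At an observation point $P^V$ just after all value-dependent phases have finished, I extract the state vector $\vec S^V\in\prod_{n\in\mathcal{N}}\mathcal{S}_n$ together with an auxiliary label: an ordered tuple $(s_1,\ldots,s_{\nu^*})\in\mathcal{N}^{\nu^*}$ of ``representative'' servers, one per value-dependent phase, and a permutation $\sigma\in S_{\nu^*}$ that matches these representatives to positions in the canonical ordering of $V$. The central claim, proved by an argument paralleling Claims \ref{lem:same_state} and \ref{lem:same_state_gossip}, is that distinct $V$'s yield distinct triples $(\vec S^V,(s_i),\sigma)$: if two subsets $V\ne V'$ shared such a triple, then by splicing the read-side continuation of $\alpha^{V'}$ that elicits some $v\in V'\setminus V$ onto the state of $\alpha^V$, one obtains a legal execution of $A$ in which a terminating read returns a value outside $\{v_0\}\cup V$, contradicting weak regularity (which forces the read to be serialized with writes whose values lie in $\{v_0\}\cup V$). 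Counting then gives $\prod_{n\in\mathcal{N}}|\mathcal{S}_n|\cdot(N-f+\nu^*-1)^{\nu^*}\cdot\nu^*!\ge\binom{|\mathcal{V}|-1}{\nu^*}$, from which the stated bound follows by taking $\log_2$.

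The main obstacle is establishing the splicing step under weakly regular (rather than regular) semantics with $\nu^*-1$ writes still active at $P^V$: one must argue both that every $v_i\in V$ can be elicited as a return value in some appropriate continuation of $\alpha^V$, and conversely that any terminating read inserted after $P^V$ must return a value in $\{v_0\}\cup V$. A secondary challenge is singling out a canonical representative server $s_i$ per write, because the value-independent phases of the $\nu^*$ writes may continue to shuffle metadata across many servers after $P^V$; however, Assumption 3(b) together with the black-box condition of Assumption 3(a) guarantees that only servers in $\mathcal{Q}_i$ can ever carry any $v_i$-specific information, and this is precisely the property that pins the auxiliary information down to size $(N-f+\nu^*-1)^{\nu^*}\cdot\nu^*!$ and produces the $\nu^*!$ factor in the bound.
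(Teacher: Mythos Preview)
Your high-level outline shares the right counting structure with the paper, but the construction you sketch has a genuine gap: fixing the quorums $\mathcal{Q}_i$ in advance as rotated $(N-f)$-subsets does not give you the properties needed for the injectivity step. The paper's proof is \emph{adaptive}: it does not pre-assign quorums but instead, for each value vector $\vec v$, discovers integers $a_1<\cdots<a_{\nu^*}$ and a permutation $\sigma$ greedily through a valency argument (their Lemma~\ref{lem2}). Specifically, $a_i$ is chosen as the \emph{smallest} index such that, after delivering the remaining clients' value-dependent messages to servers $1,\ldots,a_i$, some value becomes returnable; $\sigma(i)$ then records which value (breaking ties by a fixed order $\prec$ on $\mathcal V$). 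The minimality of $a_i$ is exactly what yields the non-valency condition at $a_i-1$ (their property~(ii)), and this is what makes the splicing argument go through: when you splice a continuation from $\alpha^{\vec u}$ onto $\alpha^{\vec v}$, property~(ii) forces the returned value's index $j_0$ to satisfy $j_0\ge i_0$, and property~(iii) (the $\prec$-ordering tie-break) rules out $j_0>i_0$, pinning $j_0=i_0$ and producing the contradiction.

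With your fixed rotated quorums, you have no analogue of these minimality/non-valency conditions. You identify the obstacle yourself (``one must argue both that every $v_i\in V$ can be elicited as a return value\ldots'') but offer no mechanism for it: nothing prevents a spliced read from returning a value in $V\cap V'$, which gives no contradiction. Your ``representative servers $s_i$'' are also left undefined; in the paper the $a_i$'s are determined by the valency structure of the execution and hence carry genuine information about $\vec v$, whereas a fixed choice of $\mathcal{Q}_i$ yields fixed $s_i$'s that contribute nothing to the injectivity count. Also, a minor point: the paper does not run a completed write of $v_0$ first; $v_0$ is simply the default initial value, and all $\nu^*$ writes are invoked and advanced just to the point of placing their value-dependent messages in channels before any are delivered.
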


\begin{corollary}\label{cor:fourth}
Let $A$ be a multi-writer-single-reader shared memory emulation algorithm that implements a weakly regular read-write object whose values come from a finite set $\mathcal{V}$. Suppose algorithm $A$ satisfies Assumptions $1, 2$ and $3$ stated in Section \ref{sec:assumptions}, and following liveness property: 
In a fair execution of $A,$ if the number of server failures is no bigger than $f$ and the number of write operations is no bigger than $\nu$, then every operation invoked at a non-failing client terminates. Then \begin{align*}
MaxStorage(A) &\ge  \frac{\nu^* }{N-f+\nu^*-1} \log |\cV| - o(\log|\cV|),\\
TotalStorage(A) &\ge  \frac{\nu^* N }{N-f+\nu^*-1} \log |\cV| - o(\log|\cV|),
\end{align*}
where $\nu^* = \min(\nu,f+1)$
\end{corollary}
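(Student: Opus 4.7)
My plan is to derive Corollary \ref{cor:fourth} from Theorem \ref{thm:fourth} by a route analogous to the proof of Corollary \ref{cor:second}, with two preliminary observations. First, I note that for the fixed quantity $\nu^{*} = \min(\nu,f+1)$, the right-hand side of Theorem \ref{thm:fourth} is $\nu^{*}\log_{2}|\mathcal{V}| - o(\log_{2}|\mathcal{V}|)$: using the estimate $\binom{|\mathcal{V}|-1}{\nu^{*}} \ge (|\mathcal{V}|-1-\nu^{*})^{\nu^{*}}/\nu^{*}!$, we have $\log_{2}\binom{|\mathcal{V}|-1}{\nu^{*}} = \nu^{*}\log_{2}|\mathcal{V}| - o(\log_{2}|\mathcal{V}|)$, and the subtracted terms $\nu^{*}\log_{2}(N-f+\nu^{*}-1)$ and $\log_{2}(\nu^{*}!)$ are $O(1)$ with respect to $|\mathcal{V}|$, hence also $o(\log_{2}|\mathcal{V}|)$. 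Second, I observe that the cardinality $k := \min(N-f+\nu-1,N)$ of the subset $\mathcal{N}$ prescribed by the theorem equals $\min(N-f+\nu^{*}-1,N)$: if $\nu \le f+1$ then $\nu^{*} = \nu$ and $k = N-f+\nu^{*}-1 \le N$, while if $\nu > f+1$ then $\nu^{*} = f+1$ and $k = N = N-f+\nu^{*}-1$. In either case $k \le N-f+\nu^{*}-1$ so the denominator appearing in Corollary \ref{cor:fourth} is at least $k$.

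Next, I order the servers so that $|\mathcal{S}_{1}| \le |\mathcal{S}_{2}| \le \cdots \le |\mathcal{S}_{N}|$ and apply Theorem \ref{thm:fourth} to the subset $\mathcal{N} = \{1,2,\ldots,k\}$ of the $k$ servers with smallest state cardinality. This yields
\[
\sum_{n=1}^{k}\log_{2}|\mathcal{S}_{n}| \;\ge\; \nu^{*}\log_{2}|\mathcal{V}| - o(\log_{2}|\mathcal{V}|).
\]
Since $|\mathcal{S}_{k}|$ is the largest among $\{|\mathcal{S}_{1}|,\ldots,|\mathcal{S}_{k}|\}$, the sum is at most $k\log_{2}|\mathcal{S}_{k}|$, so
\[
\log_{2}|\mathcal{S}_{k}| \;\ge\; \frac{\nu^{*}}{k}\log_{2}|\mathcal{V}| - o(\log_{2}|\mathcal{V}|) \;\ge\; \frac{\nu^{*}}{N-f+\nu^{*}-1}\log_{2}|\mathcal{V}| - o(\log_{2}|\mathcal{V}|).
\]
Because $|\mathcal{S}_{n}| \ge |\mathcal{S}_{k}|$ for all $n \ge k$, the same inequality holds for $|\mathcal{S}_{N}| = \max_{n}|\mathcal{S}_{n}|$, which gives the stated lower bound on $\mathit{MaxStorage}(A)$.

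For $\mathit{TotalStorage}(A)$, I split the sum over all $N$ servers into the first $k$ and the last $N-k$, using the displayed inequality for the first $k$ and the pointwise bound $\log_{2}|\mathcal{S}_{n}| \ge \log_{2}|\mathcal{S}_{k}| \ge \tfrac{\nu^{*}}{k}\log_{2}|\mathcal{V}| - o(\log_{2}|\mathcal{V}|)$ for $n = k+1,\ldots,N$:
\[
\sum_{n=1}^{N}\log_{2}|\mathcal{S}_{n}| \;\ge\; \nu^{*}\log_{2}|\mathcal{V}| + (N-k)\cdot\frac{\nu^{*}}{k}\log_{2}|\mathcal{V}| - o(\log_{2}|\mathcal{V}|) \;=\; \frac{N\nu^{*}}{k}\log_{2}|\mathcal{V}| - o(\log_{2}|\mathcal{V}|),
\]
and since $k \le N-f+\nu^{*}-1$, this yields $\mathit{TotalStorage}(A) \ge \tfrac{N\nu^{*}}{N-f+\nu^{*}-1}\log_{2}|\mathcal{V}| - o(\log_{2}|\mathcal{V}|)$, completing the proof.

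There is no genuine obstacle here: the argument is a purely algebraic consequence of Theorem \ref{thm:fourth} together with the elementary asymptotic $\log_{2}\binom{|\mathcal{V}|-1}{\nu^{*}} = \nu^{*}\log_{2}|\mathcal{V}| - o(\log_{2}|\mathcal{V}|)$. The only mild subtlety is to verify that $k = \min(N-f+\nu-1,N)$ coincides with $\min(N-f+\nu^{*}-1,N)$ so that the denominator in the corollary matches the one implicitly produced by the theorem; this is handled by the case analysis on whether $\nu \le f+1$ above.
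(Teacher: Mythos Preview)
Your proposal is correct and follows exactly the approach the paper indicates: the paper omits the proof of Corollary~\ref{cor:fourth}, stating only that it is similar to the proofs of Corollaries~\ref{cor:second} and~\ref{cor:first}, and your argument is precisely that analogue applied to Theorem~\ref{thm:fourth}. Your handling of the case analysis showing $k = \min(N-f+\nu-1,N) = N-f+\nu^{*}-1$ and the asymptotic estimate for $\log_{2}\binom{|\mathcal{V}|-1}{\nu^{*}}$ fills in the details the paper leaves implicit.
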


The proof of Corollary \ref{cor:fourth} is similar to the proofs of Corollary \ref{cor:second} in Section \ref{sec:secondthm} and Corollary \ref{cor:first} in Appendix \ref{sec:firstthm}, and is omitted.

%\rd{The bound in Corollary \ref{cor:fourth} is at least $\nu^*/2$ times better than the Singleton-like bound in Corollary \ref{cor:first}.
%If $N-f$ is much larger than $\nu^*$, the total-storage lower bound in Corollary \ref{cor:fourth} is approximately $\frac{\nu^* N }{N-f} \log_2 |\cV|$, which is $\nu^*$ times larger than our first bound in Corollary \ref{cor:first}, and $\nu^*/2$ times larger than the bounds in Corollaries \ref{cor:second} and \ref{cor:third}. When $\nu>f+1$, the total-storage lower bound is $(f+1) \log_2 |\cV|$, which coincides with the bound of \cite{spiegelman2015space}. In this scenario, the replication-based scheme \cite{ABD} is  optimal if $N=2f+1$.
%}

\subsection{Informal, Intuitive Proof Sketch for Theorem \ref{thm:fourth}}
\begin{figure}
\centering
\includegraphics[width=0.95\textwidth]{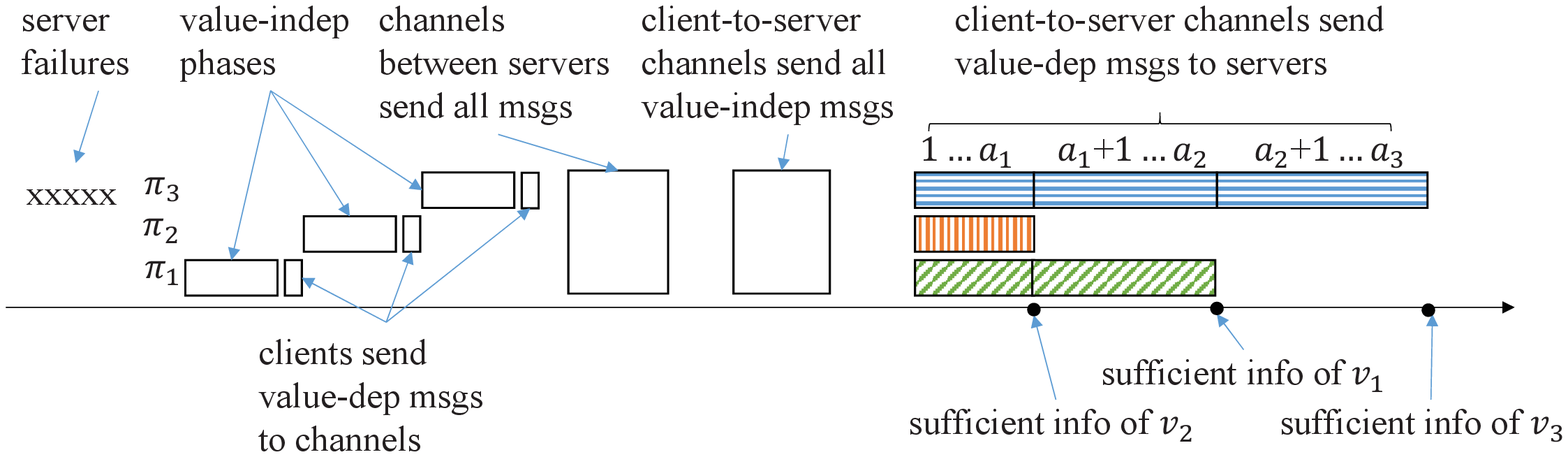}
\caption{Pictorial description of the execution for $\nu=3$. $b_1=2, b_2=1,b_3=3$.}
\label{fig:4thbound}
\end{figure}
For simplicity of exposition, assume $\mathcal{N}=\{1,2,\dots,N-f+2\}$. For our informal description, we set the parameter $\nu=3$, and $f \ge \nu-1=2$. Our proof of Theorem \ref{thm:fourth} constructs an execution $\alpha$ where the servers in $\{N-f+3, N-f+4, \ldots,N\}$ fail at the beginning of the execution. The execution has $\nu=3$ write operations $\pi_1, \pi_2, \pi_3$ with distinct values $v_1, v_2, v_3$ respectively invoked at distinct clients $ C_1, C_2, C_3$. We assume that at the beginning of the execution before the invocation of any write operation, a default initial value $v_0$ can be returned by any read operation, and that values $v_1, v_2, v_3$ are distinct from the default initial value $v_0$.

Writes $\pi_1, \pi_2,\pi_{3}$ are invoked respectively at clients $C_1,C_2, C_3.$  Recall that, as per Assumption $3$, there is at most one phase where the clients send value-dependent messages. Operations $\pi_1, \pi_2,\pi_3$ execute their protocols in a fair manner until they reach their respective phases where they send the value-dependent messages. The clients send the value-dependent messages onto the channels, but the channels do not yet deliver these value-dependent messages. Consider the point $P$ after all three clients send there value-dependent messages. At point $P$, the channels from the clients to the servers carry all the value-dependent messages that can be sent in the execution $\alpha$, and none of them are delivered to any of the servers.

Now we construct an execution $\alpha'$ which extends the execution $\alpha$ beyond point $P$ to a point $P'$ by allowing the channels from the clients to the servers act to deliver all the value-dependent messages to the first $N-f$ servers at point $P$. After the delivery of the messages, it must be the case that the first $N-f$ servers store ``sufficient information'' to return at least one of the values $v_1, v_2$ or $v_3$. This is because there are no additional phases where value-dependent messages are sent in $\alpha'$, and so, even if we extend the execution beyond point $P'$, the servers cannot receive any additional information related to $v_1, v_2$ or $v_3$. Furthermore, if we extend the execution $\alpha'$ beyond $P'$ by letting at least one of the operations $\pi_1, \pi_2, \pi_3$ complete by performing the remaining phases, then, because of weak regularity, at least one of the values  $v_1, v_2$ or $v_3$ must be returnable from the first $N-f$ servers after the completion of the operation. So it must be the case that at point $P'$ in $\alpha'$, the servers store sufficient information to return one of $v_1, v_2$ or $v_3$. Let $a_1$ be the smallest number such that, if the channels between the clients and the first $a_1$ servers act after point $P$ by delivering all their messages, then the first $a_1$ servers store sufficient information of value $v_{b_1}$ for some $b_1 \in\{1,2,3\}$. Note that $1\leq a_1 \leq N-f$. In our execution $\alpha$, at point $P$, we let all the channels deliver all their value-dependent messages to the first $a_1$ servers.  Denote the point after the delivery of the messages as $P_1$. Since we chose $a_1$ to be the smallest number of servers that contain sufficient information of any one of $v_1, v_2, v_3$, sufficient information of any \emph{one} of $v_1, v_2$ or $v_3$ is not contained from any of the first $a_1-1$ servers at point $P_1$.

Now we construct an execution $\alpha''$ as an extension of execution $\alpha$ beyond $P_1$ by allowing the channels from clients $\{C_1,C_2,C_3\}-\{C_{b_1}\}$ deliver their value-dependent messages to servers in $\{a_1+1, a_1+2,\ldots,N-f+1\}$. After the delivery of the messages, sufficient information of one of the values $v_{b_2}$ must be available in the first $N-f+1$ servers for some $b_2 \in \{1,2,3\} - \{b_1\}.$ This is because, if, after the delivery of the value-dependent messages in $\alpha''$, server $a_1$ stops taking actions, and clients $\{C_1,C_2,C_3\}-\{C_{b_1}\},$ and the first $N-f+1$ servers apart from server $a_1$ take actions in a fair manner, then one of the operations $\pi_1, \pi_2,\pi_3$ completes; weak regularity implies that one of the values $v_1, v_2, v_3$ is returnable from the first $N-f+1$ servers. However, note that sufficient information related to $v_{b_1}$ is not contained in the first $a_1-1$ servers. As a consequence, $v_{b_1}$ cannot be returnable from the first $N-f+1$ servers in $\alpha''$ if server $a_1$ does not take actions and we do not allow the value-dependent messages from client $C_{b_1}$ to be delivered to any one of the servers $\{a_1+1, a_1+2,\ldots, N-f+1\}$; therefore a value $v_{b_2} \neq v_{b_1}$ must be returnable from the first $N-f+1$ servers. Let $a_2$ be a number with $a_1 < a_2 \leq N-f+1$ such that, if all the channels deliver their value-dependent messages to the first $a_1$ servers and the channels from clients in $\{C_1, C_2, C_3\}-\{C_{b_1}\}$ deliver their value-dependent messages to the servers in $\{a_1+1, a_1+2,\ldots, a_2\},$ then sufficient information about value $v_{b_2}$ is contained in the first $a_2$ servers for some $b_2 \neq b_1, b_2 \in \{1,2,3\}$. In $\alpha$, at point $P_1$, we let  clients in $\{C_1, C_2, C_3\}-\{C_{b_1}\}$ deliver their value-dependent messages to the servers in $\{a_1+1, a_1+2,\ldots, a_2\}.$ Denote the point after the delivery of the messages as $P_2$.

 Similarly, if  we let the channels from remaining client in $\{C_1,C_2,C_3\}-\{C_{b_1},C_{b_2}\}$ deliver their value-dependent messages at point $P_2$ in $\alpha$ to the servers in $\{a_2+1, a_2+2,\ldots,N-f+2\}$, then sufficient information about the value in $\{v_1, v_2, v_3\}-\{v_{b_1},v_{b_2}\}$ is contained from the first $N-f+2$ servers after the delivery of the messages. At this point, sufficient information about all $3$ values is contained in the first $N-f+2$ servers. We can show that this implies that there is a one-to-one mapping from the states of the first $N-f+2$ servers to the values in $(\mathcal{V}-\{v_0\})^{3}$, where $v_0$ is the initial value. This implies that the storage cost must be at least $\frac{3}{N-f+2}\log_2|\mathcal{V}|+o(\log_{2}|\mathcal{V}|).$

{Our proof involves developing an appropriate notion of \emph{sufficient information of a value} that is applicable even when each server stores some arbitrary function of the values of the different versions it receives. In particular, we cannot directly borrow from other work \cite{spiegelman2015space}, whose notion of sufficient information of a value is tied to the storage scheme imposed by the model studied. Our notion of sufficient information is crystallized in a notion of valency that is more general as compared with Section \ref{sec:secondthm}. %$We refer the reader to the extended version of the paper for details.} 

\subsection{Proof}
\label{sec:fourthproof}
To avoid cumbersome notation, we assume that $\nu \leq f+1$ and we prove Theorem \ref{thm:fourth} for $\mathcal{N}=\{1,2,\ldots,N-f+\nu-1\}$. The proof for the general case readily follows from the proof provided here.

To prove the lower bound, we construct a set of executions as follows. Every element of the set is parametrized by:
\begin{itemize}
	\item an arbitrary permutation $\sigma:\{1,2,\ldots,\nu\}\rightarrow \{1,2,\ldots,\nu\}$, 
	\item an arbitrary collection of numbers $a_1, a_2, \ldots, a_{\nu} \in \{0, 1,\ldots,N-f+\nu-1\},$ where $a_1 \leq a_2 \leq \ldots \leq a_{\nu},$ and 
	\item an arbitrary collection of distinct values $v_0, v_1, v_2,\ldots v_{\nu} \in \mathcal{V}$.
\end{itemize}

An execution parametrized by permutation $\sigma$, numbers $a_1, \ldots, a_{\nu}$ and values $v_0, v_1 \ldots, v_{\nu}$ is denoted by $\alpha^{(v_0, v_1,\ldots, v_{\nu})}$ $(\sigma, a_1, \ldots, a_{\nu}).$ We assume that $v_0$ indicates the default initial value that should be returned by a read operation in an execution where there is no write operation.

The proof is split into four parts. In the first part, we describe our construction of execution $\alpha^{(v_0, v_1,\ldots, v_{\nu})}(\sigma, a_1, \ldots, a_{\nu})$ in Section \ref{sec:thm4_construction}. We also prove a property, Lemma \ref{lem:metadata}, regarding the states of the components in execution $\alpha^{(v_0, v_1,\ldots, v_{\nu})}(\sigma, a_1, \ldots, a_{\nu}).$ In the second part, we define the notion of valency tailored to our class of executions in Section \ref{sec:thm4_valence}. In the third part, we prove a key property in Lemma \ref{lem2} in Section \ref{sec:thm4_lemma}. In the fourth and final part, we use Lemma \ref{lem2} to prove Theorem \ref{thm:fourth}. 

\textbf{Notation:} In the sequel, we denote $\vec{v} = {(v_0, v_1,\ldots, v_{\nu})},$ where  $v_0, v_1, v_2,\ldots v_{\nu} \in \mathcal{V}.$ We use the term \emph{value-vector} to refer to the $\vec{v}$. We assume that the components of a value vector are distinct, and $v_0$ is the default initial value.

\subsubsection{Description of Execution $\alpha^{\vec{v}}(\sigma, a_1, \ldots, a_{\nu})$}
\label{sec:thm4_construction}

We next describe the execution $\alpha^{\vec{v}}(\sigma, a_1, \ldots, a_{\nu})$ { for an arbitrary value vector $\vec{v}$ .} In the execution, only $\nu$ distinct write clients ${C}_{1},\ldots,{C}_{\nu} \in \mathcal{C}_{w}$ act. In particular, for $i \in \{1,2,\ldots, \nu\},$ one write operation $\pi_i$ is invoked at client $C_{i}$ with value $v_{i}$. For every collection of integers $a_1, a_2, \ldots, a_{\nu} \in \{0,1,\ldots,N-f+\nu-1\},$ where $a_1 \leq a_2 \leq \ldots \leq a_{\nu},$ and for every permutation $\sigma,$ the execution  $\alpha^{\vec{v}}(\sigma, a_1, \ldots, a_{\nu})$ is an extension of an execution $\alpha^{\vec{v}}_0$. The final point of $\alpha^{\vec{v}}_{0}$ is denoted as $P^{\vec{v}}_{0}$. We first describe $\alpha^{\vec{v}}_0$ and later describe $\alpha^{\vec{v}}(\sigma, a_1, \ldots, a_{\nu}).$ 

We choose arbitrarily a reference value vector $\vec{v}_{\textrm{ref}} \in \{v_0\}\times \mathcal{V}^{\nu}.$ The components of $\vec{v}_{\textrm{ref}}$ are denoted as $v_0, v_{1,\textrm{ref}}, v_{2,\textrm{ref}}, \ldots, v_{\nu,\textrm{ref}}.$ Next, we construct execution $\alpha^{\vec{v}_{\textrm{ref}}}_0.$ After that, we use our construction of $\alpha^{\vec{v}_{\textrm{ref}}}_0$ to describe the execution $\alpha^{\vec{v}}_0$ for an arbitrary value vector $\vec{v}$.

\begin{algorithm}
Execution $\alpha^{\vec{v}_{\textrm{ref}}}_0$\\
\noindent\rule{\textwidth}{0.1pt}
\begin{algorithmic}[1]
\State Initial point: all components are at their initial states. \label{l1}
\State The last $f+1-\nu$ servers fail \label{l2}
\For{$i=1$ to $\nu$} \label{l3}
\State Operation $\pi_i$ is invoked at client $C_i$ with value $v_{i,\textrm{ref}}.$ 
	\State Client $C_i,$ the non-failed servers and the channels take steps in a fair manner until the beginning of a phase $R_i$, where at least one value-dependent send action is enabled, or until operation $\pi_i$ terminates without sending any value-dependent message.
	\State If operation $\pi_i$ is not terminated, client $C_i$ performs the send actions corresponding to phase $R_i,$ sending all value-dependent message to the channels. (The channels from the client to the servers do not yet deliver the value-dependent messages. The client $C_i$ does not perform any more actions.)
	\EndFor

\State The channels between the servers deliver all their messages. 
\State The channels from the clients to the servers deliver all the value-independent messages. 
\label{l12}
\end{algorithmic}
\end{algorithm}

Note that in execution $\alpha^{\vec{v}_\textrm{ref}}_{0},$ until the beginning of phase $R_i$ at client $C_i$ for any $i \in \{1,2,\ldots,\nu\}$, every action is a value-independent action. The only value-dependent send actions in the execution are the send actions performed by clients $C_i$ in their corresponding phases $R_i$, for $i \in \{1,2,\ldots,\nu\}.$ Note that these value-dependent messages are not delivered in the execution $\alpha^{\vec{v}}_{0}$. Therefore, from the perspective of the servers, all the received messages by the final point $P_0^{\vec{v}_{\textrm{ref}}}$ are value-independent messages. 

We now describe execution $\alpha^{\vec{v}}_{0}$ for an arbitrary value vector $\vec{v}$. In execution $\alpha^{\vec{v}}_{0},$ the behavior of the environment, servers, clients, and is the same as in execution $\alpha.$ That is, for every point $P^{\textrm{ref}}$  in $\alpha^{\vec{v}_{\textrm{ref}}}_0,$ there is a corresponding point $P$ in $\alpha^{\vec{v}}_0.$ If at $P^{\textrm{ref}}$ there is a write invoked at client $C_i$ with value $v_{i,\textrm{ref}}$ for some $i \in\{1,2,\ldots\nu\}$, then at point $P$ in $\alpha^{\vec{v}},$ there is write invoked at client $C_i$ with value $v_i$. If at $P^{\textrm{ref}}$ a channel or a server performs an action, then at point $P$, the same channel or server performs the same action.  If at $P^{\textrm{ref}}$ a client performs a black-box action $\sigma$, then at point $P$, the same client performs the action $\sigma$ such that it takes the corresponding internal transition as in Definition \ref{def:black_box_action}. We next argue that $\alpha^{\vec{v}}_{0}$ is a valid execution of the algorithm. Since read clients do not act in execution $\alpha^{\vec{v}},$ we only argue that servers and write clients conform to their protocol specifications in the execution. 

\begin{lemma}
The execution $\alpha_0^{\vec{v}}$ is a valid execution of the algorithm for every value vector $\vec{v}.$ 
\end{lemma}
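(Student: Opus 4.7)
The plan is to proceed by induction on the length of $\alpha_0^{\vec{v}_{\mathrm{ref}}}$, showing that the point-by-point mimicking construction yields a valid execution of $A$ for every value vector $\vec{v}$. The inductive invariant I will maintain is that, at each pair of corresponding points $(P^{\mathrm{ref}}, P)$, (i) every server, every inter-server channel, and every channel between a client and a server is in the same state in both executions and carries the same in-flight messages, with the sole exception that a channel from $C_i$ to a server may carry the value-dependent message generated in $\alpha_0^{\vec{v}_{\mathrm{ref}}}$ in one execution and its $\alpha_0^{\vec{v}}$ counterpart in the other; and (ii) every invoked write client $C_i$ is in state $(m_i, v_{i,\mathrm{ref}}, h(m_i, v_{i,\mathrm{ref}}))$ in $\alpha_0^{\vec{v}_{\mathrm{ref}}}$ and in state $(m_i, v_i, h(m_i, v_i))$ in $\alpha_0^{\vec{v}}$ for the same metadata component $m_i \in \mathcal{M}$. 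The base case is immediate since both executions begin at the same initial state.

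For the inductive step, I would case-split on the type of step performed in $\alpha_0^{\vec{v}_{\mathrm{ref}}}$. Server failures and write invocations mimic trivially; a newly invoked $C_i$ enters the state $(m_{\mathrm{init}}, v_i, h(m_{\mathrm{init}}, v_i))$, preserving the invariant. For any internal or output step $\sigma$ of a write client, Assumption $3(a)$ guarantees $\sigma$ is a black-box action, so by Definition \ref{def:black_box_action} it is enabled at the $\alpha_0^{\vec{v}}$ state $(m_i, v_i, h(m_i, v_i))$ whenever it is enabled at the $\alpha_0^{\vec{v}_{\mathrm{ref}}}$ state $(m_i, v_{i,\mathrm{ref}}, h(m_i, v_{i,\mathrm{ref}}))$, and the metadata component transitions identically. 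A value-independent send emits an identical message in both executions, while a value-dependent send in phase $R_i$ may emit a differing message; the invariant is designed to accommodate exactly this controlled discrepancy on the client-to-server channels. For channel and server steps, the key observation is that in $\alpha_0^{\vec{v}_{\mathrm{ref}}}$ no value-dependent message is ever delivered to a server: each $C_i$ halts immediately after queuing its value-dependent messages, and the subsequent channel activity delivers only inter-server gossip and value-independent client messages. By the invariant, the messages delivered to servers are bit-for-bit identical across the two executions, so each server takes the same internal transitions and emits the same outputs; inter-server channels therefore carry identical messages and evolve through identical states. This preserves the invariant and closes the induction.

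The main obstacle in this plan is to ensure that no value-dependent message can leak into a server's input in $\alpha_0^{\vec{v}_{\mathrm{ref}}}$: were one delivered, two servers would receive different inputs across the two executions, and from that point on the invariant could fail, since the black-box assumption provides no guarantee about how a server processes messages whose payloads differ. The construction of $\alpha_0^{\vec{v}_{\mathrm{ref}}}$ sidesteps this by halting each $C_i$ immediately after its value-dependent send in $R_i$ and by scheduling only value-independent client-to-server deliveries and inter-server gossip in the remainder of the execution; Assumption $3(a)$ guarantees that halting a client in this way is consistent with the write protocol, since halting simply amounts to never scheduling any further black-box actions at the client. A secondary point to verify carefully is that the inter-server messages exchanged in the final gossip round are themselves value-independent in both executions — but this follows because, up to that round, no server has ever received any value-dependent input, so any message a server emits is a deterministic function of its value-independent inputs alone.
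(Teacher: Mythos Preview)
Your proposal is correct and rests on the same two observations the paper uses: (i) no value-dependent message is ever delivered to a server in $\alpha_0^{\vec{v}_{\mathrm{ref}}}$, so servers and inter-server channels see identical inputs in both executions; and (ii) Assumption~3(a) makes every write-client action black-box, so each client step enabled at $(m_i,v_{i,\mathrm{ref}},h(m_i,v_{i,\mathrm{ref}}))$ is enabled at $(m_i,v_i,h(m_i,v_i))$ with the same metadata transition. The paper packages this more compactly by decomposing the system into a composite server automaton $\mathcal{A}_s$ and a composite client automaton $\mathcal{A}_c$ and arguing once about the inputs to each, whereas you unroll the same reasoning into an explicit step-by-step induction with a stated invariant. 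Your presentation is more detailed (in particular you explicitly track the controlled discrepancy on client-to-server channels, which the paper leaves implicit), but the underlying argument is the same.
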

\begin{proof}

The servers, channels and clients in the system are I/O automata. Let $\mathcal{A}_{s}$ denote the automaton formed from the composition of all the servers and the channels between servers. Let $\mathcal{A}_{c}$ denote the automaton formed by the composition of all the write clients. 

The input to the automaton $\mathcal{A}_{s}$ are the messages delivered in the channels from the clients to the servers. Since only value-independent messages are delivered to the servers in executions $\alpha^{\vec{v}_{\textrm{ref}}}_0$ and $\alpha^{\vec{v}}_0,$ the inputs to $\mathcal{A}_{s}$ in the two executions are the same. Since the components of $\mathcal{A}_{s}$ follow their protocol specifications in execution $\alpha^{\vec{v}_{\textrm{ref}}}_0,$ and the steps of the components in  $\alpha^{\vec{v}}_0,$ are the same as in execution $\alpha^{\vec{v}_{\textrm{ref}}}_0$ the components of $\mathcal{A}_{s}$ follow their protocol specification in execution $\alpha^{\vec{v}}_{0}$ as well. 

The inputs to components of $\mathcal{A}_{c}$ are the messages sent from the servers to the write clients, and the write invocations. Because the write invocations and the server actions in execution $\alpha^{\vec{v}}_0,$ are the same as in execution $\alpha^{\vec{v}_{\textrm{ref}}}_0,$ the inputs to the automaton $\mathcal{A}_{c}$ are the same. Since all write client actions are black-box actions, and since write client steps follow their protocol specifications in $\alpha^{\vec{v}_\textrm{ref}}_{0},$ the clients follow their protocol specifications in execution $\alpha^{\vec{v}}_{0}$ as well. Therefore, execution $\alpha^{\vec{v}}_{0}$ is a valid execution of the algorithm.
\end{proof}

It is useful to note that, at the final point $P_0^{\vec{v}}$ of $\alpha^{\vec{v}}_0,$ the states of the servers, the channels amongst the servers, the channels from the servers to the clients, and the metadata components of the client states are all independent of the value vector. The only components whose state may depend on the value vector are the write clients, and the channels from the write clients to the servers. The following lemma describes this property more formally.

\begin{lemma}
\label{lem:schedule0}
For any two value vectors $\vec{v}, \vec{v}',$ the state of every server, every channel from a server to a server or client, and the metadata components of the state of any writer $C\in\{C_1, C_2, \ldots, C_{\nu}\}$ at the final point $P_0^{\vec{v}}$ of execution $\alpha_0^{\vec{v}}$ is the same as at the final point $P_0^{\vec{v}'}$ of $\alpha_0^{\vec{v}'}.$ 
\end{lemma}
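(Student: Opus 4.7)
The plan is to prove the lemma by induction on the step index of the executions $\alpha_0^{\vec{v}}$ and $\alpha_0^{\vec{v}'}$, using the fact that these executions are constructed to follow corresponding steps (same servers/channels performing the same actions, write clients performing the same black-box actions). The invariant I would maintain is: after the $k$th step of each execution, every server is in the same state, every server-to-server channel is in the same state, every server-to-client channel is in the same state, and the metadata component $m$ of each write client is the same in both executions. (The value component $v$ stored by client $C_i$ will of course differ, as will the messages sitting undelivered in the client-to-server channels carrying the value-dependent send-buffer contents, but these are excluded from the lemma's claim.)

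For the inductive step I would case-split on the kind of action taken. If a server performs an internal or output action, its new state and any outgoing message depend only on its own state (which is identical by the inductive hypothesis), so the invariant is preserved. If a server-to-server or server-to-client channel delivers a message, that message is identical in both executions (by the inductive hypothesis applied to the sending server's send action), so the recipient's new state agrees. If a client-to-server channel delivers a message, the key observation is that in $\alpha_0^{\vec{v}}$ only value-\emph{independent} messages are ever delivered from clients to servers (lines~\ref{l3}--\ref{l12} ensure value-dependent messages are placed in the channels but not delivered before $P_0^{\vec{v}}$). By Assumption~3 and the construction of $\alpha_0^{\vec{v}}$ mirroring $\alpha_0^{\vec{v}_{\textrm{ref}}}$, the value-independent messages produced by the clients are determined by the metadata state and the action $\sigma$ alone, independent of $v$; hence the delivered message is identical in both executions. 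Finally, for a write client action, Definition~\ref{def:black_box_action} guarantees that if the action is enabled at state $(m,v_{\textrm{ref},i},h(m,v_{\textrm{ref},i}))$ and transitions the metadata component to $m'$, then the same action is enabled at $(m,v_i,h(m,v_i))$ and produces the same new metadata $m'$; value-independent messages emitted are the same, and value-dependent messages (which exist only in phase $R_i$) are placed in client-to-server channels that are excluded from the invariant.

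The main obstacle, and the place where I would be most careful, is making the black-box argument airtight for value-dependent send actions: I must verify that even though client $C_i$ emits a different message in $\alpha_0^{\vec{v}}$ than in $\alpha_0^{\vec{v}'}$ during phase $R_i$, the message lands in a client-to-server channel that (a) is not mentioned in the invariant, and (b) is not drained before the terminal point $P_0^{\vec{v}}$. Step~6 of Algorithm~1 explicitly postpones delivery of these value-dependent messages; only the value-independent messages are drained in the final steps, so no server ever sees a value-dependent message before $P_0^{\vec{v}}$. Combining this with the black-box property closes the induction.

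Once the invariant is established up to the last step of $\alpha_0^{\vec{v}}$, applying it at the final point yields precisely the claim of the lemma.
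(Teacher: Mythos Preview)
Your proposal is correct and follows essentially the same idea as the paper's proof: only value-independent messages reach the servers before $P_0^{\vec{v}}$, so the servers (and outgoing channels) evolve identically, while the black-box property forces the clients' metadata components to coincide. The paper presents this as a terse case-by-case argument at the final point (e.g., noting that server-to-server channels are simply empty at $P_0^{\vec{v}}$, and that each server ``takes the same steps'' by construction from $\alpha_0^{\vec{v}_{\textrm{ref}}}$), whereas you unpack the same reasoning into an explicit step-by-step induction; the underlying content is the same.
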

\begin{proof}
Consider a component automaton $c$ the system, which is a server, or a channel from a servers to a server or a client. We consider four possibilities separately.

\emph{$c$ is channel from a server to another server}: At point $P_0^{\vec{v}}$ in $\alpha_0^{\vec{v}},$ as at point $P_0^{\vec{v}'}$ in $\alpha_0^{\vec{v}}$ the channel $c$ is empty. Therefore, for a component that is a channel between two servers, the lemma statement holds. 

\emph{$c$ is a server:} Server $c$ takes the same steps in $\alpha_0^{\vec{v}},$ as in $\alpha_0^{\vec{v}'}.$ Therefore the state of $c$ at point $P_0^{\vec{v}}$ in $\alpha_0^{\vec{v}}$ is the same as its state $P_0^{\vec{v}'}$ in $\alpha_0^{\vec{v}'}.$  

\emph{$c$ is channel from a server to a client}: If $c$ is a channel from a server $s$ to a client, note that the server $s$ takes the same steps in $\alpha_0^{\vec{v}},$ as in $\alpha_0^{\vec{v}'}.$ Therefore, the inputs to channel $c$ in the two executions are the same. The steps of the channel are the same in the two executions as well. Therefore, for a component that is a channel from a server to a client, the lemma statement holds. 

\emph{$c$ is a client $C_i$ for some $i \in \{1,2,\ldots,\nu\}$:} The client $C_i$ takes the same steps in $\alpha_0^{\vec{v}}$ as in $\alpha_0^{\vec{v}'},$ except for its final action in the execution. The final action of the execution is either a value-dependent send action, or an operation termination action. In either case, because all actions of client $C_i$ are black-box actions, and the client $C_i$ performs the same action that is performed in $\alpha_0^{\vec{v}_{\textrm{ref}}},$ the metadata component of the client $C_i$ after the final action is the same at point  $P_0^{\vec{v}}$ in $\alpha_0^{\vec{v}},$  as at point $P_0^{\vec{v}'}$ in $\alpha_0^{\vec{v}'}.$ 
\end{proof}

We describe $\alpha^{\vec{v}}(\sigma,a_1,\dots,a_{\nu})$ as an extension of $\alpha_0^{\vec{v}}$. If $a_1 \geq 1$, then at point $P_0^{\vec{v}}$ of execution $\alpha^{\vec{v}}(\sigma,a_1,\dots,a_{\nu})$, for every server node $n$ in $\{1,2,\ldots,a_1\}$, the channels from all the writers to server $n$ deliver their messages. We denote the point after the delivery of all messages as $P_1^{\vec{v}}(\sigma, a_1, a_2,\ldots, a_{\nu})$.  After the delivery of the messages, the following actions take place:

\begin{algorithmic}
	\For {$i=1$ to $i=\nu-1$}
	\If {$a_{i+1} > a_{i}$}
		For every server node $n$ in {$\{a_{i}+1, a_{i}+2,\ldots, a_{i+1}\}$} the channels from every writer in $\{C_1, C_2,\ldots,C_{\nu}\} - \{C_{\sigma(1)}, C_{\sigma(2)},\ldots, C_{\sigma(i)}\}$ to server $n$ deliver all their messages. We denote the point after the delivery of all the messages as $P_{i+1}^{\vec{v}}(\sigma, a_1, a_2,\ldots,a_{\nu})$.
		\EndIf
	\EndFor
\end{algorithmic}
Based on the above procedure, note that a server $n$ that belongs to $\{a_{i}+1, a_{i}+2,\ldots a_{i+1}\}$ receives value-dependent messages from all clients except the clients in $C_{\sigma(1)}, C_{\sigma(2)},\ldots, C_{\sigma(i)}$. For $i \in \{0,1,2,\ldots,\nu\}$, the portion of the execution $\alpha^{\vec{v}}(\sigma, a_1, a_2,\ldots,a_{\nu})$ from the initial point until point $P_{i}^{\vec{v}}(\sigma, a_1, a_2,\ldots,a_{\nu})$ is denoted as $\alpha_{i}^{\vec{v}}(\sigma, a_1, a_2,\ldots,a_{\nu}).$

The following property is due to Lemma \ref{lem:schedule0} and the fact that the only client-to-server channels act after point $P_0^{\vec{v}}$.

\begin{lemma}
%Consider two permutations $\sigma, \sigma'$ of $\{1,2,\ldots,\nu\},$ tuples $(a_1, a_2, \ldots, a_\nu)$ and $(a_1', a_2', \ldots, a_\nu')$ where $1 \leq a_1 < a_2 < a_3 \ldots < a_{\nu} \leq N-f+\nu-1$ and  $1 \leq a_1' < a_2' < a_3' \ldots < a_{\nu}' \leq N-f+\nu-1.$ 
%\begin{enumerate}[(i)]
%\item If the state of client $C_i$ for $i \in\{0,1,2,\ldots,\nu\}$ at point $P_0^{\vec{v}}(\sigma, a_1, a_2,\ldots,a_{\nu})$ in execution $$\alpha^{\vec{v}}(\sigma, a_1, a_2,\ldots,a_{\nu})$$ is $(m, v_i, h(m,v_i)),$ then its state at point $P_0^{(v_0,v_1',v_2',\ldots,v_{\nu}')}(\sigma', a_1',a_2',\ldots,a_{\nu}')$ in execution $$\alpha^{(v_0,v_1',v_2',\ldots,v_{\nu}')}(\sigma', a_1',a_2',\ldots,a_{\nu}')$$ is $(m, v_i', h(m,v_i')).$ 
%\item \rd{The server states at point$P_0^{\vec{v}}(\sigma, a_1, a_2,\ldots,a_{\nu})$ in execution $$\alpha^{\vec{v}}(\sigma, a_1, a_2,\ldots,a_{\nu})$$ and at point $P_0^{(v_0,v_1',v_2',\ldots,v_{\nu}')}(\sigma', a_1',a_2',\ldots,a_{\nu}')$ in execution $$\alpha^{(v_0,v_1',v_2',\ldots,v_{\nu}')}(\sigma', a_1',a_2',\ldots,a_{\nu}')$$ are identical.}
	For two value vectors $\vec{v},\vec{v}'$,  for any two permutations $\sigma, \overline{\sigma}$ and integers $0\leq a_1 \leq a_2 \leq \ldots a_{\nu} \leq N-f+\nu-1$ and $0\leq \overline{a}_1 \leq \overline{a}_2 \leq \ldots \overline{a}_{\nu} \leq N-f+\nu-1$ and integers $0 \leq i_0, j_0 \leq \nu$, the state of every channel from a server to a server or a client, and the metadata components of the state of every writer in $\{C_1, C_2,\ldots,C_{\nu}\}$ is the same at point 
	$$P_{i_0}^{\vec{v}}(\sigma, a_1, a_2,\ldots,a_{\nu})$$ in execution $\alpha^{\vec{v}}(\sigma, a_1, a_2,\ldots,a_{\nu})$, as at point $$P_{j_0}^{\vec{v}'}(\overline{\sigma}, \overline{a}_1, \overline{a}_2,\ldots,\overline{a}_{\nu})$$ in execution $\alpha^{\vec{v}'}(\overline{\sigma}, \overline{a}_1, \overline{a}_2,\ldots,\overline{a}_{\nu})$.

%Consider a channel $c$ between one server and another. For any $i \in \{0,1,\ldots,\nu\}$, the state of channel $c$ at any point $$P_i^{\vec{v}}(\sigma, a_1, a_2,\ldots,a_{\nu})$$ 
%is the same as its state at point $P_{0}^{(v_0)}$ for every collection of values $v_1, v_2,\ldots, v_{\nu} \in \mathcal{V},$ permutation $\sigma$ and collection of distinct integers $a_1, a_2, \ldots, a_\nu$ where $1 \leq a_1 < a_2 < a_3 \ldots < a_{\nu} \leq N-f+\nu-1.$
%\end{enumerate}
\label{lem:metadata}
\end{lemma}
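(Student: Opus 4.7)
The plan is to reduce the lemma to Lemma~\ref{lem:schedule0} by checking that the listed components remain frozen during the portion of the constructed execution that follows $P_0^{\vec{v}}$. Concretely, I would show two things: (i) the stated components are in identical states at $P_0^{\vec{v}}$ in $\alpha_0^{\vec{v}}$ and at $P_0^{\vec{v}'}$ in $\alpha_0^{\vec{v}'}$, and (ii) for every value vector and every choice of parameters $\sigma, a_1, \ldots, a_\nu$, these components undergo no state changes between $P_0^{\vec{v}}$ and $P_{i_0}^{\vec{v}}(\sigma, a_1, \ldots, a_\nu)$, for any $i_0 \in \{0, 1, \ldots, \nu\}$. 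Statement (i) is exactly the content of Lemma~\ref{lem:schedule0}, so the real work lies in (ii).

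To establish (ii), I would appeal directly to the construction of $\alpha^{\vec{v}}(\sigma, a_1, \ldots, a_\nu)$ in Section~\ref{sec:thm4_construction}: the only actions occurring after $P_0^{\vec{v}}$ are deliveries of messages along certain channels from writers to servers. Each such delivery is an output action of the corresponding writer-to-server channel, paired with an input action of the receiving server. By the locality of I/O automaton transitions, such an action can only modify the state of that channel and of that server, and in particular cannot alter any server-to-server channel, any server-to-client channel, or any writer automaton. Since the writers themselves take no steps at all between $P_0^{\vec{v}}$ and $P_{\nu}^{\vec{v}}(\sigma, a_1, \ldots, a_\nu)$, their metadata components remain fixed throughout this portion of the execution.

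Combining (i) and (ii), applied to both executions, each component listed in the lemma has the same state at $P_{i_0}^{\vec{v}}(\sigma, a_1, \ldots, a_\nu)$ as at $P_0^{\vec{v}}$, and the same state at $P_{j_0}^{\vec{v}'}(\overline{\sigma}, \overline{a}_1, \ldots, \overline{a}_\nu)$ as at $P_0^{\vec{v}'}$; Lemma~\ref{lem:schedule0} identifies these two $P_0$-states, which completes the argument. I do not foresee a substantive obstacle: the only care needed is to verify from the I/O automaton definitions that a writer-to-server channel delivery cannot trigger a transition in any server-to-server or server-to-client channel, or in any writer, which is immediate from the signatures of the participating automata.
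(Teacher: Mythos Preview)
Your proposal is correct and mirrors the paper's proof exactly: the paper also argues that each listed component takes no steps between $P_0^{\vec{v}}$ and $P_{i_0}^{\vec{v}}(\sigma,a_1,\ldots,a_\nu)$ (since only writer-to-server channel deliveries occur there), and then invokes Lemma~\ref{lem:schedule0} to identify the states at the two $P_0$ points.
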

\begin{proof}
Consider a component $c$ which is a channel between two servers, or a channel from a server to a client, or a write client. The component $c$ has the same state at point   
$$P_{i_0}^{\vec{v}}(\sigma, a_1, a_2,\ldots,a_{\nu})$$ in execution $\alpha^{\vec{v}}(\sigma, a_1, a_2,\ldots,a_{\nu})$ as at point $P_0^{\vec{v}}$ of the execution, since it does not take any steps between $P_0^{\vec{v}}$ and $P_{i_0}^{\vec{v}}(\sigma, a_1, a_2,\ldots,a_{\nu}).$   Similarly, the component $c$ has the same state at point 
$$P_{j_0}^{\vec{v}'}(\overline{\sigma}, \overline{a}_1, \overline{a}_2,\ldots,\overline{a}_{\nu})$$ 
as at point $P_0^{\vec{v}'}$ in execution $\alpha^{\vec{v}'}(\overline{\sigma}, \overline{a}_1, \overline{a}_2,\ldots,\overline{a}_{\nu})$.

Lemma \ref{lem:schedule0} implies that if $c$ is a channel between servers, or a channel from a server to a client, then it has the same state at point $P_0^{\vec{v}}$ in execution $\alpha^{\vec{v}}(\sigma, a_1, a_2,\ldots,a_{\nu})$ as at point $P_0^{\vec{v}'}$ in execution $\alpha^{\vec{v}'}(\overline{\sigma}, \overline{a}_1, \overline{a}_2,\ldots,\overline{a}_{\nu}).$ This implies that $c$ has the same state at point 
	$$P_{i_0}^{\vec{v}}(\sigma, a_1, a_2,\ldots,a_{\nu})$$ in execution $\alpha^{\vec{v}}(\sigma, a_1, a_2,\ldots,a_{\nu})$, as at point $$P_{j_0}^{\vec{v}'}(\overline{\sigma}, \overline{a}_1, \overline{a}_2,\ldots,\overline{a}_{\nu})$$ in execution $\alpha^{\vec{v}'}(\overline{\sigma}, \overline{a}_1, \overline{a}_2,\ldots,\overline{a}_{\nu})$.

Similarly, Lemma \ref{lem:schedule0} implies that if $c$ is a client, then its metadata component is the same at point 
	$$P_{i_0}^{\vec{v}}(\sigma, a_1, a_2,\ldots,a_{\nu})$$ in execution $\alpha^{\vec{v}}(\sigma, a_1, a_2,\ldots,a_{\nu})$, as at point $$P_{j_0}^{\vec{v}'}(\overline{\sigma}, \overline{a}_1, \overline{a}_2,\ldots,\overline{a}_{\nu})$$ in execution $\alpha^{\vec{v}'}(\overline{\sigma}, \overline{a}_1, \overline{a}_2,\ldots,\overline{a}_{\nu})$. This completes the proof.
\end{proof}

\subsubsection{Definition of Valency}
\label{sec:thm4_valence}
Consider a collection of distinct values $v_0, v_1,\ldots, v_{\nu} \in \mathcal{V},$ a permutation $\sigma$ and a collection of numbers $a_1, a_2, \ldots, a_{\nu} \in \{0,1,2,\ldots,N-f+\nu-1\},$ where $0 \leq a_1 \leq a_2 \ldots \leq a_{\nu}\leq N-f+\nu-1.$
For a subset of write clients $\mathcal{C}_{0}\subseteq \{C_1, C_2,\ldots, C_{\nu}\}$,
and $1 \le j \le \nu$,
the point {P} %$P_{i}^{\vec{v}}(\sigma, a_1, a_2,\ldots,a_{\nu})$ 
in execution $\alpha^{\vec{v}}(\sigma, a_1, a_2,\ldots,a_{\nu})$ is said to be $(j,\mathcal{C}_{0})$-valent if there exists some execution $\beta$ which is an extension of $\alpha_i^{\vec{v}}(\sigma, a_1, a_2,\ldots,a_{\nu})$ such that, after 
%$P_{i}^{\vec{v}}(\sigma, a_1, a_2,\ldots,a_{\nu})$ 
$P$ in $\beta$,
\begin{itemize}
\item the writers in $\mathcal{C}_{w}-\mathcal{C}_{0}$ do not send any value-dependent messages, the channels from the writers in $\mathcal{C}_{w}-\mathcal{C}_{0}$ do not deliver any value-dependent messages, and 
\item there is a read operation that begins at a reader and completes returning value $v_j$. 
\end{itemize} 

A $(j,\{\})$-valent point is simply referred to as a $j$-valent point. It is instructive to note that a point that is $j$-valent is also $(j, \mathcal{C}_{0})$-valent for any subset $\mathcal{C}_{0} \subseteq \{C_1, C_2,\ldots, C_{\nu}\}$. 

\emph{Intuition for the definition of valency:} Before proceeding, we provide an intuitive explanation of the notion of valency. Consider a point  $P_{i}^{\vec{v}}(\sigma, a_1, a_2,\ldots,a_{\nu})$ which is $(1,C_2)$-valent. Intuitively, at such a point, the servers already have ``sufficient information'' to retrieve value $v_1.$ However, to recover $v_1$, value-dependent actions from client $C_2$ or the channels from $C_2$ could be necessary. To see this,  consider the following scenario: in the execution some algorithm, at point $P$, every server stores $v_1 + v_2,$ where the set $\mathcal{V}$ is interpreted to be some finite field, and $+$ indicates the addition operator over the field. In this case, in general, neither value $v_1$ nor value $v_2$ is retrievable from the system. However, \emph{given value $v_2$}, it can be subtracted off and the value $v_1$ would be retrievable. A clever protocol could ensure that the value-dependent messages of client $C_{2}$ will be used to subtract $v_2$ from a sufficient number of servers to ensure that $v_1$ is returnable, even if client $C_1$ did not take any value-dependent actions. In this case the point  $P_{i}^{\vec{v}}(\sigma, a_1, a_2,\ldots,a_{\nu})$ would be considered $(1,C_2)$-valent.

\subsubsection{A Key Lemma}
The following lemma is a key component of our proof of Theorem \ref{thm:fourth}.
\label{sec:thm4_lemma}
\begin{lemma} \label{lem2}
Let $\prec$ be a total ordering on $\mathcal{V}$. Given a collection of distinct values $v_1, v_2,\ldots v_{\nu} \in \mathcal{V}$, there is a permutation $\sigma:\{1,2,\ldots,\nu\}\rightarrow \{1,2,\ldots,\nu\}$, and a collection of distinct numbers $a_1, a_2, \ldots, a_{\nu} \in \{1,2,\ldots,N-f+\nu-1\},$ where $0 < a_1 < a_2 \ldots < a_{\nu} \leq N-f+\nu-1,$ such that for every $i \in \{1,2,\ldots,\nu\}$,
\begin{enumerate}[(i)]
\item $P_i^{\vec{v}}(\sigma,a_1-1, a_2-1,\dots,a_{i-1}-1, a_i, a_{i+1},\ldots,a_{\nu})$ is $(\sigma(i), \mathcal{C}_{w} - \{C_{\sigma(1)}, C_{\sigma(2)}, \ldots, C_{\sigma(i)}\})$-valent;
\item  $P_i^{\vec{v}}(\sigma,a_1-1, a_2-1,\dots,a_{i-1}-1, a_i, a_{i+1},\ldots, a_{\nu})$ is not $(\sigma(j), \mathcal{C}_{w} - \{C_{\sigma(1)}, C_{\sigma(2)}, \ldots, C_{\sigma(i)}\})$-valent, for any $1 \le j<i$; 
\item if $P_i^{\vec{v}}(\sigma,a_1-1, a_2-1,\dots,a_{i-1}-1, a_i,a_{i+1},\ldots,a_{\nu})$ is $(\sigma(j), \mathcal{C}_{w} - \{C_{\sigma(1)}, C_{\sigma(2)}, \ldots, C_{\sigma(i-1)},C_{\sigma(j)}\})$-valent for some $i < j \le \nu$, then $v_{\sigma(i)} \prec v_{\sigma(j)}$.
\end{enumerate}
\end{lemma}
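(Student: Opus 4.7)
The plan is to proceed by induction on $i$ from $1$ to $\nu$, building up $\sigma$ and $a_1, \ldots, a_\nu$ one step at a time. Throughout, denote by $R_i := \{1,\ldots,\nu\} \setminus \{\sigma(1),\ldots,\sigma(i-1)\}$ the pool of indices not yet assigned, and, for $a \in \{a_{i-1}, a_{i-1}+1, \ldots, N-f+\nu-1\}$, set
$$Q_i(a) := P_i^{\vec v}(\sigma, a_1-1,\ldots, a_{i-1}-1, a, a, \ldots, a).$$
A crucial observation is that the state at the point $P_i$ depends only on the first $i$ parameters of the execution, so the valency of $Q_i(a)$ is the same as the valency of $P_i^{\vec v}(\sigma, a_1-1, \ldots, a_{i-1}-1, a, a_{i+1}, \ldots, a_\nu)$ for any admissible choice of the tail; in particular, conditions (i)--(iii) for index $i$ are really statements about $Q_i(a_i)$.

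Given $\sigma(1),\ldots,\sigma(i-1)$ and $a_1 < a_2 < \cdots < a_{i-1}$ from the inductive hypothesis, I would define, for each $k \in R_i$,
$$a^*(k) := \min\Bigl\{a : Q_i(a) \text{ is } \bigl(k, \mathcal{C}_w - \{C_{\sigma(1)},\ldots, C_{\sigma(i-1)}, C_k\}\bigr)\text{-valent}\Bigr\},$$
with the convention $\min \emptyset = \infty$. Using the liveness hypothesis together with weak regularity and an indistinguishability argument for black-box actions, I would first show that $a^*(k) \leq N-f+\nu-1 < \infty$ for at least one $k \in R_i$: at $Q_i(N-f+\nu-1)$ all alive servers have received every value-dependent message from $C_{\sigma(k)}$ with $k \geq i$, so letting those clients finish their value-independent phases terminates at least one write $\pi_{\sigma(k)}$ by liveness, and a subsequent read must return a value in $\{v_{\sigma(i)}, \ldots, v_{\sigma(\nu)}\}$ by weak regularity (the values $v_{\sigma(j)}$ with $j<i$ are ruled out by an indistinguishability argument, since $C_{\sigma(j)}$'s value-dependent messages never reach servers beyond index $a_j-1$). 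Then I would take $a_i := \min_{k \in R_i} a^*(k)$ and let $\sigma(i)$ be the index among $\{k \in R_i : a^*(k) = a_i\}$ minimizing $v_k$ under $\prec$.

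Condition (i) is then immediate from $a^*(\sigma(i)) = a_i$. Condition (iii) follows from the tie-breaking rule: if $Q_i(a_i)$ is $(\sigma(j), \mathcal{C}_w - \{C_{\sigma(1)},\ldots, C_{\sigma(i-1)}, C_{\sigma(j)}\})$-valent for $j > i$, then $\sigma(j) \in R_i \setminus \{\sigma(i)\}$ and $a^*(\sigma(j)) \leq a_i$; combined with $a_i = \min a^*$ this forces $a^*(\sigma(j)) = a_i$, and the tie-breaking rule gives $v_{\sigma(i)} \prec v_{\sigma(j)}$.

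The main obstacle is condition (ii), which I plan to prove by contradiction. Suppose $Q_i(a_i)$ is $(\sigma(j), \mathcal{C}_w - \{C_{\sigma(1)},\ldots, C_{\sigma(i)}\})$-valent for some $j < i$, witnessed by an extension $\beta$. Because the first $j$ parameters of $Q_i(a_i)$ agree with the first $j$ parameters of $Q_j(a_j-1)$, the states of the systems at $P_j$ in the two executions are identical, by Lemma~\ref{lem:metadata}. I would extend $Q_j(a_j-1)$ by first simulating stages $j+1, j+2, \ldots, i$ of the construction of $Q_i(a_i)$: each such stage delivers messages from clients in $\{C_{\sigma(k)}, C_{\sigma(k+1)}, \ldots, C_{\sigma(\nu)}\}$ for some $k>j$, so none of the delivering clients belongs to $\{C_{\sigma(1)},\ldots, C_{\sigma(j)}\}$. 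After this simulation the state matches $Q_i(a_i)$, and I can continue by following $\beta$. The composite extension respects the restriction that $\{C_{\sigma(1)},\ldots, C_{\sigma(j)}\}$ perform no value-dependent deliveries (the tighter restriction from $\beta$, which also forbids $\{C_{\sigma(j+1)},\ldots, C_{\sigma(i)}\}$, is a strengthening that is trivially preserved in the composite since those clients are only needed to act during the simulation phase), and the read still returns $v_{\sigma(j)}$. Thus $Q_j(a_j-1)$ would be $(\sigma(j), \mathcal{C}_w - \{C_{\sigma(1)},\ldots, C_{\sigma(j)}\})$-valent, contradicting the minimality $a^*(\sigma(j)) = a_j$ from step $j$ of the induction.
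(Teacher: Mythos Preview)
Your approach is essentially the paper's: both build $\sigma$ and $a_1,\ldots,a_\nu$ inductively, defining at step $i$ a candidate set (your $\{(a,k): a^*(k)\le a\}$ is the paper's $\mathcal{A}_i$), picking $a_i$ minimal and then $\sigma(i)$ by the $\prec$-tie-break. Your simulation argument for (ii) is exactly the paper's Lemma~\ref{lem5} (combined with Lemma~\ref{lem4}), and your derivation of (i) and (iii) matches the paper's parts (b) and (d).

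Two points need tightening. First, you never prove the strict inequality $a_i>a_{i-1}$ (nor $a_1\ge 1$); your domain $\{a_{i-1},\ldots,N-f+\nu-1\}$ allows $a_i=a_{i-1}$, and without ruling this out the lemma's conclusion fails. The paper handles this explicitly (its part~(a) and Lemma~\ref{lem6}), and the argument is again your own (ii) machinery: if $Q_i(a_{i-1})$ were $(k,\mathcal{C}_w-\{C_{\sigma(1)},\ldots,C_{\sigma(i-1)},C_k\})$-valent, then since stage $i$ with parameter $a_{i-1}$ only delivers to server $a_{i-1}$, pushing back to $P_{i-1}$ (with parameter $a_{i-1}-1$) and invoking minimality at step $i-1$ yields a contradiction.

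Second, your ``indistinguishability argument'' for ruling out return values $v_{\sigma(j)}$ with $j<i$ in the non-emptiness step is not correct as stated: servers $1,\ldots,a_j-1$ \emph{did} receive $C_{\sigma(j)}$'s value-dependent messages, and the reader may well query them, so swapping $v_{\sigma(j)}$ does not leave the extension unchanged. The correct argument---which the paper uses and which you already have in hand---is again minimality: if the read returned $v_{\sigma(j)}$, the point $Q_i(N-f+\nu-1)$ would be $(\sigma(j),\mathcal{C}_w-\{C_{\sigma(1)},\ldots,C_{\sigma(i-1)}\})$-valent, and your (ii) simulation then gives $Q_j(a_j-1)$ is $(\sigma(j),\mathcal{C}_w-\{C_{\sigma(1)},\ldots,C_{\sigma(j)}\})$-valent, contradicting $a^*(\sigma(j))=a_j$.
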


In our proof of Lemma \ref{lem2}, we use Lemmas \ref{lem3}, \ref{lem6}, \ref{lem4} and \ref{lem5} which are stated and proved below. The next two lemmas state properties of the point  $P_1^{\vec{v}}(\sigma, a_1, a_2,\ldots, a_{\nu})$.
\begin{lemma} \label{lem3}
	If $a_1=N-f$, then, for every permutation $\sigma$ and integers $a_2, a_3,\ldots,a_\nu \in\{N-f, N-f+1, \ldots N-f+\nu-1\}$ where $a_1 \leq a_2 \leq \ldots a_{\nu}$,  the point $P_1^{\vec{v}}(\sigma, a_1, a_2,\ldots, a_{\nu})$ is $(i, \mathcal{C}_{w} - \{C_i\})$-valent for some $i \in \{1,2,\ldots,\nu\}$.
\end{lemma}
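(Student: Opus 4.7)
The plan is to build an extension $\beta$ of $\alpha_1^{\vec{v}}(\sigma,N-f,a_2,\ldots,a_\nu)$ in which all $\nu$ write operations terminate, after which a read returns some value $v_i$ with $i\in\{1,\ldots,\nu\}$, while no channel from any writer delivers a value-dependent message after the point $P_1^{\vec{v}}(\sigma,N-f,a_2,\ldots,a_\nu)$. Such a $\beta$ witnesses that this point is $i$-valent; as noted immediately after the definition of valency, $i$-valency implies $(i,\mathcal{C}_w-\{C_i\})$-valency, which yields the conclusion.

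The extension $\beta$ I would construct proceeds as follows. Immediately after $P_1^{\vec{v}}(\sigma,N-f,a_2,\ldots,a_\nu)$, I would fail the non-failed servers in $\{N-f+1,\ldots,N-f+\nu-1\}$, bringing the total number of failed servers to $(f+1-\nu)+(\nu-1)=f$. Because $a_1=N-f$, the first $N-f$ servers received the value-dependent message of every writer during the first delivery step of the execution, so they already carry the full value-dependent state and suffice as a quorum for every subsequent phase of every write. By Assumption~3(b), once a writer has executed its value-dependent phase $R_j$---which occurred inside $\alpha_0^{\vec{v}}$, strictly before $P_0^{\vec{v}}$---it performs no further value-dependent send actions; combined with the fact that the channels carrying any undelivered value-dependent messages now have only failed destinations, no value-dependent message is delivered after $P_1^{\vec{v}}(\sigma,N-f,a_2,\ldots,a_\nu)$ in $\beta$. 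Letting all non-failed components take steps fairly, the liveness hypothesis (at most $f$ failed servers and at most $\nu$ concurrent writes) guarantees that each $\pi_j$ terminates; I would then invoke a read at a non-failing reader and wait for it to terminate, which it must by liveness.

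Finally, I need to verify the returned value lies in $\{v_1,\ldots,v_\nu\}$. For this I would invoke weak regularity: each $\pi_j$ terminates strictly before the invocation of the read in $\beta$, so the real-time-consistent serialization demanded of the read must place every $\pi_j$ before it, and the read therefore returns the value of the last write in that serialization, which is some $v_i$ with $i\in\{1,\ldots,\nu\}$, distinct from the default initial value $v_0$. The main subtlety I anticipate is formally justifying the valency clause ``no channel from a writer delivers a value-dependent message after $P$'' once the destination server has failed; I would dispatch this by adopting the convention used throughout Sections~\ref{sec:secondthm} and \ref{sec:thirdthm}, namely that messages to a crashed component may be delayed indefinitely in the constructed execution without affecting the liveness of the live components, which depend only on responses from the $N-f$ servers that already hold the value-dependent information of every writer.
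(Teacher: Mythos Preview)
Your proposal is correct and follows essentially the same approach as the paper: extend the execution so that, with only the first $N-f$ servers active, at least one write completes, then invoke a read and use weak regularity to force the returned value into $\{v_1,\ldots,v_\nu\}$, yielding $i$-valency and hence $(i,\mathcal{C}_w-\{C_i\})$-valency. The only cosmetic differences are that the paper has the last $f$ servers stop taking steps rather than explicitly failing them, and it waits for a single write $\pi_j$ to complete before invoking the read rather than all $\nu$ writes; neither change affects the argument.
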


\begin{proof}
Let $a_1=N-f$. Consider an execution $\beta$ that extends $\alpha_{1}^{\vec{v}}(\sigma, a_1, a_2,\ldots, a_{\nu})$ as follows. Note that at point $P_{1}^{\vec{v}}(\sigma, a_1, a_2,\ldots, a_{\nu}),$ all the value-dependent messages from the clients have been delivered to the first $N-f$ servers. Therefore, for every client $C_i, i \in \{1,2,\ldots,\nu\}$, its write operation $\pi_i$ has sent its value-dependent messages, and the channels from the client to the servers have delivered all the value-dependent messages in the execution; all the send actions enabled on behalf of operation $\pi_i$ are value-independent actions. In the remainder of the execution $\beta$, the last $f$ servers do not take any steps. The clients $C_1, C_2,\ldots, C_{\nu},$ their channels and the first $N-f$ servers perform their actions in a fair manner. Since algorithm $A$ ensures that every write operation terminates in a fair execution so long as the number of server failures is no bigger than $f$ and the number of active write clients is no bigger than $\nu$, we know that a write operation $\pi_j$ completes in $\beta$ for some $j \in \{1,2,\ldots,\nu\}$. After the completion of a write operation $\pi_j$, the write clients and the channels from the write clients stop performing actions. A read operation $\pi_{r}$ begins at a reader. The reader and the first $N-f$ servers perform actions in a fair manner until read operation $\pi_r$ terminates. Because read operation $\pi_r$ begins after the termination of $\pi_{j},$ and because the algorithm satisfies weak regularity, the operation returns $v_i$ for some $i \in \{1,2,\ldots, \nu\}$. The execution $\beta$ finishes after the termination of read $\pi_r$. Thus we have created an execution $\beta,$ which is an extension of  $\alpha_{1}^{\vec{v}}(\sigma, a_1, a_2,\ldots, a_{\nu})$ such that all the clients and their channels only take value-independent output actions, and a read operation returns $v_i$. Therefore the point $P_{1}^{\vec{v}}(\sigma, a_1, a_2,\ldots, a_{\nu})$ is $i$-valent, and thus is also $(i, \mathcal{C}_{w} - \{C_i\})$-valent.
\end{proof}

\begin{lemma} \label{lem6}
	If point $P_1^{\vec{v}}(\sigma,a_1, a_2,\ldots, a_{\nu})$ is $(j,\mathcal{C}_{w}-\{C_j\})$-valent for some permutation $\sigma$  and positive integers $j, a_1, a_2,\ldots, a_{\nu}$ where $0 \leq a_1\leq a_2 \leq \ldots \leq a_{\nu}\leq N-f+\nu-1,$ then $a_1\geq 1.$
\end{lemma}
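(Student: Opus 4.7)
The plan is to prove the contrapositive: if $a_1 = 0$, then $P_1^{\vec{v}}(\sigma, 0, a_2, \ldots, a_\nu)$ cannot be $(j, \mathcal{C}_{w}-\{C_j\})$-valent. Suppose for contradiction that it is, and let $\beta$ be a witnessing extension of $\alpha_1^{\vec{v}}(\sigma, 0, a_2, \ldots, a_\nu)$ in which (i) client $C_j$ performs no value-dependent send actions after $P_1^{\vec{v}}$, (ii) channels out of $C_j$ deliver no value-dependent messages, and (iii) a read operation completes returning $v_j$. The key observation is that because $a_1 = 0$, no value-dependent message has been delivered to any server by the point $P_1^{\vec{v}}$. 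Combined with (ii), no server ever receives any value-dependent message originating at $C_j$ in the entire execution $\beta$; the value-dependent messages already placed by $C_j$ in its outgoing channels during $\alpha_0^{\vec{v}}$ sit there undelivered forever.

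Next I would pick a value vector $\vec{v}' = (v_0, v_1, \ldots, v_{j-1}, v_j', v_{j+1}, \ldots, v_\nu)$ that differs from $\vec{v}$ only in the $j$-th coordinate, where $v_j'$ is chosen distinct from every component of $\vec{v}$. This requires $|\mathcal{V}| \geq \nu + 2$, which is implicit in the regime in which Theorem \ref{thm:fourth} is meaningful. I would then construct an execution $\beta'$ that extends $\alpha_1^{\vec{v}'}(\sigma, 0, a_2, \ldots, a_\nu)$ by having every server, reader, server-to-server channel, reader-to-server channel, and writer $C_i$ with $i \neq j$ take exactly the same steps as in $\beta$, while client $C_j$ and its outgoing channels take the analogous steps (all of which are value-independent and thus well-defined under black-box semantics from Assumption 3(a)). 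Lemma \ref{lem:schedule0}/Lemma \ref{lem:metadata} already encodes the analogous indistinguishability argument at point $P_1$; combined with the fact that the only information ever reaching the servers about $C_j$ is value-independent, a straightforward induction on steps shows that the composite state of the servers, the reader, and their channels evolves identically in $\beta$ and $\beta'$, and that $\beta'$ is a valid execution of $A$.

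In $\beta'$ the read operation therefore terminates returning $v_j$. However, the set of terminating write operations in $\beta'$ has values contained in $\{v_1, \ldots, v_{j-1}, v_j', v_{j+1}, \ldots, v_\nu\}$, the default initial value is $v_0$, and by construction $v_j$ is distinct from all of these. No weakly regular serialization of $\{\pi_r\} \cup \Phi \cup \Pi$ for $\beta'$ can justify the return value $v_j$, which contradicts the weak regularity of algorithm $A$. Hence $a_1 \geq 1$.

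The main obstacle is the careful formalization of the construction of $\beta'$ and the corresponding indistinguishability argument. Specifically, one must verify that every action $C_j$ takes in $\beta$ after $P_1^{\vec{v}}$ (necessarily value-independent by (i)) has a corresponding black-box analogue in $\beta'$ that produces identical externally visible effects, and that every message delivered by a channel out of $C_j$ in $\beta$ is value-independent (since value-dependent deliveries are ruled out by (ii)), hence carries identical content in $\beta'$. Once this is in place, the contradiction with weak regularity is immediate.
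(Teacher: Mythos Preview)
Your proposal is correct and follows essentially the same approach as the paper: assume $a_1=0$, use the valency witness $\beta$, replace $v_j$ by a fresh value $v_j'$ to obtain $\vec{v}'$, build the analogous execution $\beta'$ from $\alpha_0^{\vec{v}'}$ using Lemma~\ref{lem:metadata} and the black-box property, and derive a contradiction with weak regularity since the read returns $v_j\notin\{v_0,v_1,\ldots,v_{j-1},v_j',v_{j+1},\ldots,v_\nu\}$. Your write-up is in fact slightly more careful than the paper's in explicitly tracking that $C_j$ may still take value-independent actions after $P_1^{\vec{v}}$ and that these carry over via the black-box semantics of Assumption~3(a).
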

\begin{proof}

	To show that $a_{1} \geq 1$, assume to the contrary that $a_{1} = 0$. Because the point $P_1^{\vec{v}}(\sigma, a_1, a_2,\ldots, a_{\nu})$ is $(j, \mathcal{C}_{w}--\{C_j\})$-valent, there is an execution $\beta$ which extends $\alpha_{1}^{\vec{v}}(\sigma, a_1, a_2,\ldots,a_\nu)$ such that, client $C_j$ stops acting at point $P_{1}^{\vec{v}}(\sigma, a_1, a_2,\ldots,a_\nu),$ and a read operation $\pi_r$ begins and returns $v_j$. Now, because $a_1=0,$ we note that point $P_{1}^{\vec{v}}(\sigma, a_1, a_2,\ldots,a_\nu)$ is the same as point $P_0^{\vec{v}}$.   
	
	Consider the value vector $\vec{u} = (u_0, u_1, u_2,\ldots, u_{\nu}),$ where for $i \in \{0,1,\ldots,\nu\}-\{j\},$ we have $u_i = v_i,$ and $v_j \notin \{u_0, u_1,\ldots, u_\nu\}$. Note that at point $P_0^{\vec{v}},$ the state of every server, channel from a server to a server or a client, and client in $\mathcal{C}_{w}-\{C_j\}$ is the same as its state at point $P_0^{\vec{u}}$. Consider an execution $\beta'$ which extends $\alpha_0^{\vec{u}}$ as follows. Starting from point $P_0^{\vec{u}},$ every component takes the same steps as the component takes starting from point $P_0^{\vec{v}}$ in $\beta$. Note that client $C_j$ takes only value-independent actions after $P_0^{\vec{v}}$ in $\beta.$ Because Lemma \ref{lem:metadata} implies that the state of every component except client $C_j,$ and the metadata component of $C_j$ is the at point $P_0^{\vec{u}}$ in $\beta'$ as at point $P_0^{\vec{v}}$ in $\beta$, execution $\beta'$ is an execution of the algorithm. In execution $\beta'$, read operation $\pi_r$ returns $v_j$ which does not belong to $\{u_0,u_1,\ldots, u_\nu\}$. Therefore execution $\beta'$ violates weak regularity. Therefore $a_1 \geq 1$.
\end{proof}

The next lemma shows that, informally, the valency of the point $P_i^{\vec{v}}(\sigma,a_1, a_2,\ldots, a_{\nu})$ does not depend on value $a_{i+1},\dots,a_\nu$, $\sigma(i+1),\dots,\sigma(\nu)$.
\begin{lemma} \label{lem4}
	Suppose that a point $P_i^{\vec{v}}(\sigma,a_1, a_2,\ldots, a_{\nu})$ is $(j,\mathcal{C}_{w}-\{C_{\sigma(1)},C_{\sigma(2)},\ldots,C_{\sigma(i)}\})$-valent for some permutation $\sigma$ and integers $j,a_1,\ldots, a_{\nu}$ such that $0 \leq a_1 \leq a_2 \ldots \leq a_{\nu} \leq N-f+\nu-1$. Then the point $P_i^{\vec{v}}(\overline{\sigma}, a_1, a_2, \ldots, a_i, \overline{a}_{i+1}, \overline{a}_{i+2},\ldots, \overline{a}_{\nu})$ is $(j,\mathcal{C}_{w}-\{C_{\sigma(1)},C_{\sigma(2)},\ldots,C_{\sigma(i)}\})$-valent for every set of integers $\overline{a}_{i+1}, \overline{a}_{i+2},\ldots \overline{a}_{\nu}$ where $a_i \leq \overline{a}_{i+1} \leq \overline{a}_{i+2} \leq \ldots \leq \overline{a}_{\nu} \leq N-f+\nu-1$, and every $\overline{\sigma}$ where $\overline{\sigma}(l)=\sigma(l), 1 \le l \le i$.
\end{lemma}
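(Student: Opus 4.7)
The plan is to prove Lemma \ref{lem4} by showing that the two points $P_i^{\vec{v}}(\sigma, a_1, \ldots, a_\nu)$ and $P_i^{\vec{v}}(\overline{\sigma}, a_1, \ldots, a_i, \overline{a}_{i+1}, \ldots, \overline{a}_\nu)$ are literally endpoints of identical finite execution prefixes, so the valency of one transfers verbatim to the other. No dynamic property of the algorithm is actually needed; the lemma is a bookkeeping observation about how the parameters of the construction in Section \ref{sec:thm4_construction} influence each prefix.

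The key observation is obtained by inspecting the construction: the execution $\alpha_0^{\vec{v}}$ depends on neither $\sigma$ nor any of the $a_k$, and for each $j=1,\ldots,i-1$ the step that produces $P_{j+1}$ from $P_j$ uses only the parameters $a_j$, $a_{j+1}$, and $\sigma(1),\ldots,\sigma(j)$ (the delivery of value-dependent messages to servers in $\{a_j+1,\ldots,a_{j+1}\}$ from writers outside $\{C_{\sigma(1)},\ldots,C_{\sigma(j)}\}$). Thus the prefix $\alpha_i^{\vec{v}}(\sigma,a_1,\ldots,a_\nu)$ depends only on $a_1,\ldots,a_i$ and $\sigma(1),\ldots,\sigma(i-1)$. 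Under the hypotheses $\overline{\sigma}(l)=\sigma(l)$ for $1\le l\le i$ and $a_1,\ldots,a_i$ unchanged, these parameters agree, and so step-by-step
$$\alpha_i^{\vec{v}}(\sigma,a_1,\ldots,a_\nu)=\alpha_i^{\vec{v}}(\overline{\sigma},a_1,\ldots,a_i,\overline{a}_{i+1},\ldots,\overline{a}_\nu).$$
In particular, their final points are the same, with identical state at every server, client, and channel. Moreover, because $\overline{\sigma}$ agrees with $\sigma$ on indices $1,\ldots,i$, the excluded-client set appearing in the two valency statements,
$\mathcal{C}_w - \{C_{\sigma(1)},\ldots,C_{\sigma(i)}\}$, is the same for both points.

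With these observations in hand the conclusion is immediate from the definition of valency in Section \ref{sec:thm4_valence}: if some execution $\beta$ extends $\alpha_i^{\vec{v}}(\sigma,a_1,\ldots,a_\nu)$ and witnesses $(j,\mathcal{C}_w-\{C_{\sigma(1)},\ldots,C_{\sigma(i)}\})$-valency at $P_i^{\vec{v}}(\sigma,a_1,\ldots,a_\nu)$, then the very same $\beta$ extends $\alpha_i^{\vec{v}}(\overline{\sigma},a_1,\ldots,a_i,\overline{a}_{i+1},\ldots,\overline{a}_\nu)$ and witnesses the same valency condition at $P_i^{\vec{v}}(\overline{\sigma},a_1,\ldots,a_i,\overline{a}_{i+1},\ldots,\overline{a}_\nu)$, since the two prefixes are identical and the excluded-client sets coincide. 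I expect no genuine obstacle here; the only care required is to carry out the step-by-step verification that the construction really does use only the parameters $a_1,\ldots,a_i$ and $\sigma(1),\ldots,\sigma(i-1)$ up to $P_i$, which is a direct reading of the inductive loop in the definition of $\alpha^{\vec{v}}(\sigma,a_1,\ldots,a_\nu)$.
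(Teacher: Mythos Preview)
Your proposal is correct and matches the paper's own proof essentially exactly: the paper observes that at the two points $P_i^{\vec{v}}(\sigma,a_1,\ldots,a_\nu)$ and $P_i^{\vec{v}}(\overline{\sigma},a_1,\ldots,a_i,\overline{a}_{i+1},\ldots,\overline{a}_\nu)$ every server, channel and client is in the same state, and then transplants the witnessing extension $\beta$ to obtain $\beta'$. Your argument that the two prefixes are in fact identical step-by-step (since the construction up to $P_i$ uses only $a_1,\ldots,a_i$ and $\sigma(1),\ldots,\sigma(i-1)$) is a slightly sharper formulation of the same observation and yields the conclusion in the same way.
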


\begin{proof}[Proof of Lemma \ref{lem4}]
	Let $P_i^{\vec{v}}(\sigma,a_1, a_2,\ldots, a_{\nu})$ be $(j,\mathcal{C}_{w}-\{C_{\sigma(1)},C_{\sigma(2)},\ldots,C_{\sigma(i)}\})$-valent. Therefore, there exists an execution $\beta$ that extends $\alpha_i^{\vec{v}}(\sigma,a_1, a_2,\ldots, a_{\nu})$ such that, after point $P_i^{\vec{v}}(\sigma,a_1, a_2,\ldots, a_{\nu})$ the clients and the channels from the clients in $C_{\sigma(1)},C_{\sigma(2)},\ldots,C_{\sigma(i)}$ do not take any value-dependent actions, and there is a read operation $\pi_r$ that begins and returns value $v_j$. Now we construct execution $\beta'$ which extends $$\alpha_i^{\vec{v}}(\overline{\sigma}, a_1, a_2, \ldots, a_i, \overline{a}_{i+1}, \overline{a}_{i+2},\ldots, \overline{a}_{\nu})$$ as follows. Note that at point $P_i^{\vec{v}}(\overline{\sigma}, a_1, a_2, \ldots, a_i, \overline{a}_{i+1}, \overline{a}_{i+2},\ldots, \overline{a}_{\nu})$ every server, channel and client is at the same state as it is at point $P_i^{\vec{v}}(\sigma,a_1, a_2,\ldots, a_{\nu})$ in $\alpha_i^{\vec{v}}(\sigma,a_1, a_2,\ldots, a_{\nu}).$ In execution $\beta'$, every component performs the same actions as the component does in $\beta$ starting from point $P_i^{\vec{v}}(\sigma,a_1, a_2,\ldots, a_{\nu}).$ Therefore, after point  $P_i^{\vec{v}}(\overline{\sigma}, a_1, a_2, \ldots, a_i, \overline{a}_{i+1}, \overline{a}_{i+2},\ldots, \overline{a}_{\nu})$
the clients and the channels from the clients in $C_{\sigma(1)},C_{\sigma(2)},\ldots,C_{\sigma(i)}$ do not take any value-dependent actions, and there is a read operation $\pi_r$ that begins and returns value $v_j$. Therefore the point $$P_i^{\vec{v}}(\overline{\sigma}, a_1, a_2, \ldots, a_i, \overline{a}_{i+1}, \overline{a}_{i+2},\ldots, \overline{a}_{\nu})$$
is $(j,\mathcal{C}_{w}-\{C_{\sigma(1)},C_{\sigma(2)},\ldots,C_{\sigma(i)}\})$-valent.
\end{proof}

%{The next lemma follows from the our execution construction and the fact that a $(j,\cC_0)$-valent point is also $(j,\cC'_0)$-valent, for $\cC_0 \subseteq \cC'_0$.}
The next lemma, informally, shows that if a point $P_i^{\vec{v}}(\sigma,a_1, a_2,\ldots, a_{\nu})$ has sufficient information of a value $v_j$, then an earlier point $P_k^{\vec{v}}(\sigma,a_1, a_2,\ldots, a_{\nu})$, $k \le i$, already has sufficient information of $v_j$ if we allow some extra clients to take value-dependent actions.

\begin{lemma} \label{lem5}
    Let $1 \le k \le l \le i \le \nu$.	
	If a point $P_i^{\vec{v}}(\sigma,a_1, a_2,\ldots, a_{\nu})$ is $(j,\mathcal{C}_{w}-\{C_{\sigma(1)},C_{\sigma(2)},\ldots,C_{\sigma(l)}\})$-valent for some permutation $\sigma$ and integers $j,a_1,\ldots, a_{\nu}$ such that $0 \leq a_1 \leq a_2 \ldots \leq a_{\nu} \leq N-f+\nu-1$, then for every $k \leq i$, the point $P_k^{\vec{v}}(\sigma,a_1, a_2,\ldots, a_{\nu})$ is $(j,\mathcal{C}_{w}-\{C_{\sigma(1)},C_{\sigma(2)},\ldots,C_{\sigma(k)}\})$-valent.
\end{lemma}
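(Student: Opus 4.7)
\textbf{Plan for the proof of Lemma \ref{lem5}.} The plan is to witness the $(j,\mathcal{C}_w-\{C_{\sigma(1)},\ldots,C_{\sigma(k)}\})$-valency of $P_k^{\vec{v}}(\sigma,a_1,\ldots,a_\nu)$ by first ``catching up'' from $P_k^{\vec{v}}(\sigma,a_1,\ldots,a_\nu)$ to the point $P_i^{\vec{v}}(\sigma,a_1,\ldots,a_\nu)$, and then splicing in the extension guaranteed by the hypothesis at $P_i^{\vec{v}}(\sigma,a_1,\ldots,a_\nu)$. The catch-up is explicit: for each $m\in\{k+1,k+2,\ldots,i\}$, and each server $n\in\{a_{m-1}+1,\ldots,a_m\}$, we schedule the client-to-server channels from the writers in $\{C_1,\ldots,C_\nu\}-\{C_{\sigma(1)},\ldots,C_{\sigma(m-1)}\}$ to deliver their (already queued) value-dependent messages to server $n$. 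By the very definition of $\alpha^{\vec{v}}(\sigma,a_1,\ldots,a_\nu)$ in Section \ref{sec:thm4_construction}, this is precisely the sequence of actions taken between $P_k^{\vec{v}}$ and $P_i^{\vec{v}}$ in the original execution, so the resulting extension of $\alpha_k^{\vec{v}}(\sigma,a_1,\ldots,a_\nu)$ reaches a point whose global state coincides with $P_i^{\vec{v}}(\sigma,a_1,\ldots,a_\nu)$.

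Next, I would invoke the hypothesis that $P_i^{\vec{v}}(\sigma,a_1,\ldots,a_\nu)$ is $(j,\mathcal{C}_w-\{C_{\sigma(1)},\ldots,C_{\sigma(l)}\})$-valent to obtain an execution $\beta$ extending $\alpha_i^{\vec{v}}(\sigma,a_1,\ldots,a_\nu)$ in which, past that point, no writer in $\{C_{\sigma(1)},\ldots,C_{\sigma(l)}\}$ (and no channel from such a writer) delivers any value-dependent message, and a read operation begins and returns $v_j$. Since the state reached by the catch-up phase agrees, component by component, with the state at $P_i^{\vec{v}}(\sigma,a_1,\ldots,a_\nu)$, appending the suffix of $\beta$ after $P_i^{\vec{v}}(\sigma,a_1,\ldots,a_\nu)$ to our extension of $\alpha_k^{\vec{v}}(\sigma,a_1,\ldots,a_\nu)$ yields a valid execution of algorithm $A$ by standard I/O-automaton indistinguishability reasoning.

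It remains to check that no writer in $\{C_{\sigma(1)},\ldots,C_{\sigma(k)}\}$ performs or has a value-dependent delivery in the composite extension, which is the exclusion required for $(j,\mathcal{C}_w-\{C_{\sigma(1)},\ldots,C_{\sigma(k)}\})$-valency at $P_k^{\vec{v}}$. In the catch-up phase, each $m\in\{k+1,\ldots,i\}$ satisfies $m-1\geq k$, so the delivering writer set $\{C_1,\ldots,C_\nu\}-\{C_{\sigma(1)},\ldots,C_{\sigma(m-1)}\}$ is disjoint from $\{C_{\sigma(1)},\ldots,C_{\sigma(k)}\}$. In the spliced suffix from $\beta$, since $l\geq k$ we have $\{C_{\sigma(1)},\ldots,C_{\sigma(l)}\}\supseteq\{C_{\sigma(1)},\ldots,C_{\sigma(k)}\}$, so the writers excluded in $\beta$ already include all of $\{C_{\sigma(1)},\ldots,C_{\sigma(k)}\}$. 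Together with the read returning $v_j$, this shows that $P_k^{\vec{v}}(\sigma,a_1,\ldots,a_\nu)$ is $(j,\mathcal{C}_w-\{C_{\sigma(1)},\ldots,C_{\sigma(k)}\})$-valent, completing the argument.

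The only subtle step is the state-identification argument at the junction of the catch-up phase and the spliced suffix; however, this is essentially a reindexing of the construction in Section \ref{sec:thm4_construction}, not a new idea, so no genuine obstacle arises. I expect the proof to be short and to require no new machinery beyond the valency definition and the deterministic nature of the action schedule used to produce $P_i^{\vec{v}}$ from $P_k^{\vec{v}}$.
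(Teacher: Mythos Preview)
Your proposal is correct and is essentially the paper's argument: the very execution $\beta$ witnessing valency at $P_i^{\vec v}$ already extends $\alpha_k^{\vec v}$ (since $\alpha_i^{\vec v}$ extends $\alpha_k^{\vec v}$), and between $P_k^{\vec v}$ and $P_i^{\vec v}$ only channels from writers outside $\{C_{\sigma(1)},\ldots,C_{\sigma(k)}\}$ deliver value-dependent messages, while after $P_i^{\vec v}$ the excluded set $\{C_{\sigma(1)},\ldots,C_{\sigma(l)}\}\supseteq\{C_{\sigma(1)},\ldots,C_{\sigma(k)}\}$ handles the rest. Your ``catch-up then splice'' framing is thus slightly more elaborate than necessary---no state-identification or indistinguishability step is needed, since the catch-up phase is literally the segment of $\beta$ between $P_k^{\vec v}$ and $P_i^{\vec v}$, and $\beta$ itself serves directly as the witness.
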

\begin{proof}
	Let $P_i^{\vec{v}}(\sigma,a_1, a_2,\ldots, a_{\nu})$ be $(j,\mathcal{C}_{w}-\{C_\sigma(1),C_\sigma(2),\ldots,C_{\sigma(l)}\})$-valent. Therefore, there exists an execution $\beta$ that extends $\alpha_i^{\vec{v}}(\sigma,a_1, a_2,\ldots, a_{\nu})$ such that, after point $P_i^{\vec{v}}(\sigma,a_1, a_2,\ldots, a_{\nu})$ the clients and the channels from the clients in $C_{\sigma(1)},C_{\sigma(2)},\ldots,C_{\sigma(l)}$ do not take any value-dependent actions, and there is a read operation $\pi_r$ that begins and returns value $v_j$. Note that for $k\leq i$, execution $\beta$ is an extension of $\alpha_k^{\vec{v}}(\sigma,a_1, a_2,\ldots, a_{\nu}),$ where, after point $P_k^{\vec{v}}(\sigma,a_1, a_2,\ldots, a_{\nu}),$ the clients in $\{C_\sigma(1),C_\sigma(2),\ldots, C_\sigma(k)\}$ and the channels from these clients do not perform value-dependent actions, read operation $\pi_r$ that begins and returns value $v_j.$ Therefore the point $P_k^{\vec{v}}(\sigma,a_1, a_2,\ldots, a_{\nu})$ is $(j,\mathcal{C}_{w}-\{C_{\sigma(1)},C_{\sigma(2)},\ldots,C_{\sigma(k)}\})$-valent.
\end{proof}

We are now ready to prove our key lemma, Lemma \ref{lem2}.
\begin{proof}[Proof of Lemma \ref{lem2}]
We begin by choosing $a_1, \sigma(1)$.	From Lemma \ref{lem3}, we know that the set
\begin{equation}\mathcal{A}_1 = \left\{(\overline{a}_{1},i): 
	\begin{array}{c}\textrm{there exists an integer $i \in \{1,2,\ldots,\nu\}$, a permutation } \overline{\sigma}, \textrm{and } \\ \textrm{integers }\overline{a}_{1},\ldots \overline{a}_{\nu}$\textrm{ where } $0 \leq \overline{a}_{1} \leq \overline{a}_{2} \leq \ldots \leq \overline{a}_{\nu} \leq N-f+\nu-1 \\ \textrm {such that }P_1^{\vec{v}}(\overline{\sigma}, \overline{a}_{1}, \overline{a}_2,\ldots, \overline{a}_{\nu}) \\ \textrm{is $(i, \mathcal{C}_{w}-\{C_i\})$-valent}\end{array}\right\} 
\end{equation}
is non-empty. In particular, Lemma \ref{lem3} implies that tuple $(N-f,i)$ belongs to $\mathcal{A}_1$ for some integer $i\in\{1,2,\ldots,\nu\}$. We choose $a_1$ to be the smallest integer such that $(a_1, i)$ belongs to the set $\mathcal{A}_{1}$ for some $i$\footnote{Informally, $a_1$ may be viewed as the smallest number such that the first $a_1$ servers contains ``sufficient information'' of some value $v_i,$ given the information of all other values.}, that is \[a_{1} = \min\{a:\textrm{there exists } j\textrm{ such that }(a,j) \in \mathcal{A}_{1}\}.\] Note that Lemma \ref{lem6} implies that $a_1 \geq 1$.
{We let 
\begin{align*}
\sigma(1) = \arg\min_{\{j:(a_{1},j) \in \mathcal{A}_{1}\}}v_j ,
\end{align*}
where the minimum is according to the ordering $\prec$.}
%Next we show that $a_1 \ge 1$, that is, we show that point $P_0^{\vec{v}}$ is not $(i,\mathcal{C}_{w}-\{C_i\})$-valent for any $1 \le i \le \nu$. Suppose the contrary, that is assume that $P_0^{\vec{v}}$ is $(i,\mathcal{C}_{w}-\{C_i\})$-valent for $i \in \{1,2,\ldots,\nu\}$. By Lemma \ref{lem:schedule0}, we know that for a value vector $\vec{v}'=(v_0,v_1,\dots,v_{i-1},v'_i,v_{i+1},\dots,v_\nu)$ where $v_i \neq v'_i$ and $v_0,v_1,\dots,v_{i-1},v'_i,v_{i+1},\dots,v_\nu$ are distinct, the servers, clients $\cC_w - C_i$, the corresponding channels, and the metadata component of client $C_i$ all have the same state at point $P_0^{\vec{v}}$ of execution $\alpha_0^{\vec{v}}$ and as at point $P_0^{\vec{v}'}$ at execution $\alpha_0^{\vec{v}'}$. Because we assume that point $P_0^{(\vec{v})}$ is $(i,\mathcal{C}_{w}-\{C_i\})$-valent, there exists an execution $\gamma$ that extends $\alpha_0^{\vec{v}}$ such that client $C_i$ and its channels do not take value-dependent actions, and a read returns $v_i$. We can use the execution $\gamma$ from point $P_0^{\vec{v}}$ to construct an extension of $\alpha_0^{\vec{v}'}$ such that after point $P_0^{\vec{v}'}$  client $C_i$ does not take any value-dependent actions, its channels do not deliver any value-dependent messages, and a read operation begins and returns value $v_i$. However, $v_i \notin \{v_0,\dots,v_{i-1},v'_i,v_{i+1},\dots,v_\nu\}$, which contradicts regularity. Therefore, point $P_0^{\vec{v}}$ is not $(i,\mathcal{C}_{w}-\{C_i\})$-valent for any $1 \le i \le \nu.$ %get a contradiction to regularity.

Next, we provide a procedure that recursively chooses $a_{i_0+1}$ and $\sigma(i_0+1)$  given $a_{1},a_{2},\ldots, a_{i_0},$ and $\sigma(1),\sigma(2),\ldots,\sigma(i_0),$ for any $i_0 \in \{1,2,\ldots,\nu-1\}$. Our choice of $a_{i_0+1}$ will satisfy $a_{i_0} < a_{i_0+1} \leq N-f+i_0-1$.

Let 
\begin{equation}\label{eqA}\mathcal{A}_{i_0+1} = \left\{(\overline{a}_{i_0+1},i): 
	\begin{array}{c}\textrm{there exists a permutation } \overline{\sigma}, \textrm{where $\overline{\sigma}(j) = \sigma(j)$ for $j\leq i_0$, and} \\ \textrm{integers }\overline{a}_{i_0+1}, \overline{a}_{i_0+2},\ldots \overline{a}_{\nu}$\textrm{ where } $a_{i_0} \leq \overline{a}_{i_0+1} \leq \overline{a}_{i_0+2} \leq \ldots \leq \overline{a}_{\nu} \leq N-f+\nu-1 \\ \textrm{ and an integer $i \in \{1,2\ldots,\nu\}-\{\sigma(1),\sigma(2),\ldots, \sigma(i_0)\}$} \\ \textrm {such that }P_{i_0+1}^{\vec{v}}(\overline{\sigma}, {a}_{1}-1, a_2-1, \ldots, a_{i_0}-1, \overline{a}_{i_0+1}, \overline{a}_{i_0+2},\ldots, \overline{a}_{\nu}) \\ \textrm{is $(i, \mathcal{C}_{w}-\{C_{\sigma(1)}, C_{\sigma(2)},\ldots, C_{\sigma(i_0)}, C_i\})$-valent} \end{array}\right\}
\end{equation}

We show that $\mathcal{A}_{i_0+1}$ is non-empty. We choose $a_{i_0+1}$ to be the smallest integer such that $(a_{i_0+1}, i)$ belongs to the set $\mathcal{A}_{i_0+1}$ for some $i$. %We choose 
{We let
\begin{align*}
\sigma(i_0+1) = \displaystyle\arg\displaystyle\min_{\{j:(a_{i_0+1},j) \in \mathcal{A}_{i_0+1}\}}v_j,
\end{align*}
where the minimum above is taken according to the ordering $\prec$.
}

To complete the proof of Lemma \ref{lem2}, we show that 
\begin{enumerate}[(a)]
	\item $\mathcal{A}_{i_0+1}$ is non-empty, {and $a_{i_0+1}>a_{i_0}$, $1 \le i_0 \le \nu-1$;}
\item $P_{i_0+1}^{\vec{v}}(\sigma, a_1-1, a_2-1,\dots,a_{i_0}-1, a_{i_0+1}, a_{i_0+2},\ldots,a_{\nu})$ is $(\sigma(i_0+1), \mathcal{C}_{w} - \{C_{\sigma(1)}, C_{\sigma(2)}, \ldots, C_{\sigma(i_0+1)}\})$-valent, $0 \le i_0 \le \nu-1$;
\item  $P_{i_0+1}^{\vec{v}}(\sigma, a_1-1, a_2-1,\dots,a_{i_0}-1, a_{i_0+1}, a_{i_0+2},\ldots,a_{\nu})$ is not $(\sigma(j), \mathcal{C}_{w} - \{C_{\sigma(1)}, C_{\sigma(2)}, \ldots, C_{\sigma(i_0+1)}\})$-valent, for any $j<i_0+1$, $1 \le i_0 \le \nu-1$; 
\item if $P_{i_0+1}^{\vec{v}}(\sigma, a_1-1, a_2-1,\dots,a_{i_0}-1, a_{i_0+1},a_{i_0+2},\ldots,a_{\nu})$ is $$(\sigma(j), \mathcal{C}_{w} - \{C_{\sigma(1)}, \ldots, C_{\sigma(i_0)},C_{\sigma(j)}\})\textrm{-valent}$$for some $j>i_0+1$, then $v_{\sigma(i_0+1)} \prec v_{\sigma(j)}$, $0 \le i_0 \le \nu-2$.
\end{enumerate}

\emph{Proof of $(a)$:}

We show $(a)$ by showing that there is some integer $i$ such that $(N-f+i_0,i)$ belongs to $\mathcal{A}_{i_0+1}$. More specifically, we will show that for any arbitrary permutation $\overline{\sigma}$ which satisfies $\overline{\sigma}(j) = \sigma(j)$ for $1 \leq j \leq i_0,$ the point $P_{i_0+1}^{\vec{v}}(\overline{\sigma}, {a}_{1}-1, a_2-1, \ldots, a_{i_0}-1, N-f+i_0, N-f+i_0+1,\ldots, N-f+\nu-1)$ is $(i, \mathcal{C}_{w}-\{C_{\sigma(1)}, C_{\sigma(2)},\ldots, C_{\sigma(i_0)}, C_i\})$-valent for some $i \in \{1,2,\ldots,\nu\}-\{\sigma(1),\sigma(2),\ldots,\sigma(i_0)\}$. To show this, we construct an execution $\beta,$ which is an extension of $\alpha_{i_0+1}^{\vec{v}}(\overline{\sigma}, {a}_{1}-1, a_2-1, \ldots, a_{i_0}-1, N-f+i_0, N-f+i_0+1,\ldots, N-f+\nu-1)$  as follows. After point $P_{i_0+1}^{\vec{v}}(\overline{\sigma}, {a}_{1}-1, a_2-1, \ldots, a_{i_0}-1, N-f+i_0, N-f+i_0+1,\ldots, N-f+\nu-1)$ in $\beta$, the write clients in $\{C_{\overline{\sigma}(1)}, C_{\overline{\sigma}(2)},\ldots, C_{\overline{\sigma}(i_0)}\}$, and the channels from these write clients to the non-failed servers do not send or deliver value-dependent messages. The clients in $\mathcal{C}_{w}-\{C_{\overline{\sigma}(1)}, C_{\overline{\sigma}(2)},\ldots, C_{\overline{\sigma}(i_0)}\}$, the channels from these clients, the non-failed servers, and the channels between the servers continue taking actions in a fair manner.  Note that algorithm A guarantees that in a fair execution where the number of server failures is no bigger than $f$ and the number of active write clients is no bigger than $\nu$, every write operation terminates. From the perspective of clients in $\mathcal{C}_{w}-\{C_{\overline{\sigma}(1)}, C_{\overline{\sigma}(2)},\ldots, C_{\overline{\sigma}(i_0)}\},$ the execution $\beta$ is indistinguishable from a fair execution. Therefore some operation in $\{\pi_{\overline{\sigma}(i_0+1)}, \pi_{\overline{\sigma}(i_0+2)},\ldots,\pi_{\overline{\sigma}(\nu)}\}$ terminates in $\beta$. After the termination of an operation $\pi_j, j \in \{\overline{\sigma}(i_0+1),\overline{\sigma}(i_0+2),\ldots,\overline{\sigma}(\nu)\},$ all the write operations and their channels stop taking actions. A read operation $\pi_{r}$ begins at a read client. The reader, the channels from and to the reader, and the non-failed servers perform actions in a fair manner until the read operation $\pi_r$ terminates. After the termination of $\pi_r$ the execution $\beta$ ends.
Because the algorithm satisfies regularity and because write operation $\pi_j$ has terminated, the read operation $\pi_r$ returns value $v_i$ for some $i \in \{1,2,\ldots,\nu\}$. 

We show that, in fact, $\pi_r$ returns $v_i$ for some $i \in \{1,2,\ldots,\nu\}-\{\sigma(1),\sigma(2),\ldots,\sigma(i_0)\}.$ Assume the contrary, that is, assume that the read operation $\pi_r$ returns $v_{\sigma(k)}$ for $k \in \{1, 2,\ldots, i_0\}.$ The existence of execution $\beta$ implies that the point $P_{i_0+1}^{\vec{v}}(\overline{\sigma}, {a}_{1}-1, a_2-1, \ldots, a_{i_0}-1, N-f+i_0, N-f+i_0+1,\ldots, N-f+\nu-1)$ is $(\sigma(k),\mathcal{C}_{w}-\{C_{{\sigma}(1)}, C_{{\sigma}(2)},\ldots, C_{{\sigma}(i_0)}\})$-valent. {Lemma \ref{lem5}} implies that $P_{k}^{\vec{v}}(\overline{\sigma}, {a}_{1}-1, a_2-1, \ldots, a_{i_0}-1, N-f+i_0, N-f+i_0+1,\ldots, N-f+\nu-1)$ is $(\sigma(k),\mathcal{C}_{w}-\{C_{{\sigma}(1)}, C_{{\sigma}(2)},\ldots, C_{{\sigma}(k)}\})$-valent. Therefore $(a_{k}-1, \sigma(k)) \in \mathcal{A}_{k}$. This however contradicts the fact that we choose $a_k$ to be the smallest element such that $(a_k,j)$ is $\mathcal{A}_{k}$ for some $j \in \{1,2,\ldots, \nu\}-\{\sigma(1),\sigma(2),\ldots,\sigma(k-1)\}$. Therefore, it cannot be that $\pi_r$ returns $v_{\sigma(k)}$ for some $k \in\{1,2,\ldots,i_0\}$. Therefore, $\mathcal{A}_{i_0+1}$ is non-empty.

{We now show that $a_{i_0+1}> a_{i_0}$. We know $a_{i_0+1} \ge a_{i_0}$.
To show that $a_{i_0+1} > a_{i_0}$, assume to the contrary that $a_{i_0+1} = a_{i_0}$. Because $a_{i_0+1}\in \mathcal{A}_{i_0+1}$ and because we assume $a_{i_0+1}=a_{i_0}$, point $P_{i_0+1}^{\vec{v}}({\overline{\sigma}},a_1-1,\dots,a_{i_0}-1,a_{i_0},\overline{a}_{i_0+1},\dots,\overline{a}_\nu)$ is $(\sigma(i_0+1), \mathcal{C}_{w}-\{C_{\sigma(1)}, C_{\sigma(2)},\ldots, C_{\sigma(i_0)}, C_{\sigma(i_0+1)}\})$-valent. By Lemma \ref{lem5} this implies that point $P_{i_0}^{\vec{v}}({\overline{\sigma}},a_1-1,\dots,a_{i_0}-1,a_{i_0},\overline{a}_{i_0+1},\dots,\overline{a}_\nu)$ is $(\sigma(i_0+1), \mathcal{C}_{w}-\{C_{\sigma(1)}, C_{\sigma(2)},\ldots, C_{\sigma(i_0)}\})$-valent. Therefore, $(a_{i_0}-1,\sigma(i_0+1)) \in \mathcal{A}_{i_0}$, and contradicts the fact that we choose $a_{i_0}$ to be the smallest element such that $(a_{i_0},j)$ is in $\mathcal{A}_{i_0}$ for some $j \in \{1,2,\ldots, \nu\}-\{\sigma(1),\sigma(2),\ldots,\sigma(i_0-1)\}$.
}

\emph{Proof of (b): }

Because $(a_{i_0}, \sigma(i_0)) \in \mathcal{A}_{i_0}, 1 \le i_0 \le \nu$ there are distinct integers $\overline{a}_{i_0+1},\overline{a}_{i_0+2},\ldots, \overline{a}_{\nu},$ and a permutation $\overline{\sigma}$ such that 
\begin{itemize}
\item $\overline{\sigma}(j) = \sigma(j)$ for $1 \leq j \leq i_0$
\item $a_{i_0} \le \overline{a}_{i_0+1} \le \overline{a}_{i_0+2} \le  \ldots \overline{a}_{\nu} \le N-f+\nu-1$
\end{itemize}
such that the point $$P_{i_0}^{\vec{v}}(\overline{\sigma}, {a}_{1}-1, a_2-1, \ldots, a_{i_0-1}-1, a_{i_0}, \overline{a}_{i_0+1}, \overline{a}_{i_0+2},\ldots, \overline{a}_{\nu})$$ is $(\sigma(i_0), \mathcal{C}_{w}-\{C_{\sigma(1)}, C_{\sigma(2)},\ldots, C_{\sigma(i_0)}\})$-valent. Using Lemma \ref{lem4}, we conclude that the point $$P_{i_0}^{\vec{v}}(\sigma, a_1-1, a_2-1,\ldots,a_{i_0-1}-1, a_{i_0}, a_{i_0+1},\ldots,a_{\nu})$$ is $(\sigma(i_0), \mathcal{C}_{w} - \{C_{\sigma(1)}, C_{\sigma(2)}, \ldots, C_{\sigma(i_0)}\})$-valent. 

\emph{Proof of (c):}

If point $P_{i_0+1}^{\vec{v}}(\sigma, a_1-1, a_2-1,\dots,a_{i_0}-1, a_{i_0+1},a_{i_0+2},\ldots,a_{\nu})$ is $(\sigma(j), \mathcal{C}_{w} - \{C_{\sigma(1)}, C_{\sigma(2)}, \ldots, C_{\sigma(i_0+1)}\})$-valent for $j<i_0+1$, then by Lemma \ref{lem5}, the point $P_{j}^{\vec{v}}(\sigma, a_1-1, a_2-1,\dots,a_{i_0}-1, a_{i_0+1},a_{i_0+2},\ldots,a_{\nu})$ is $(\sigma(j), \mathcal{C}_{w} - \{C_{\sigma(1)}, C_{\sigma(2)}, \ldots, C_{\sigma(j)}\})$-valent. However, this implies that $(a_{j}-1, \sigma(j))\in \mathcal{A}_{j}$, which contradicts the fact that $a_{j}$ is the smallest integer among all the integers such that $(a_j, k) \in \mathcal{A}_{j}.$  Therefore $P_{i_0+1}^{\vec{v}}(\sigma, a_1-1, a_2-1,\dots,a_{i_0}-1, a_{i_0+1},a_{i_0+2},\ldots,a_{\nu})$ is not $(\sigma(j), \mathcal{C}_{w} - \{C_{\sigma(1)}, C_{\sigma(2)}, \ldots, C_{\sigma(i_0)}\})$-valent, for any $j<i_0$.

\emph{Proof of $(d)$:}
If $P_{i_0+1}^{\vec{v}}(\sigma, a_1-1, a_2-1,\dots,a_{i_0}-1, a_{i_0+1},a_{i_0+2},\ldots,a_{\nu})$ is $$(\sigma(j), \mathcal{C}_{w} - \{C_{\sigma(1)}, \ldots, C_{\sigma(i_0)},C_{\sigma(j)}\})\textrm{-valent}$$for some $j>i_0+1$,
then  $(a_{i_0+1},\sigma(j))$ belongs to $\mathcal{A}_{i_0+1}$. Because $(a_{i_0+1},\sigma(i_0+1))$ also belongs to $\mathcal{A}_{i_0+1}$ and because we chose 
$$ 
\sigma(i_0+1) = \arg\min_{\{(a_{i_0+1},k) \in \mathcal{A}_{i_0+1}\}}v_k,$$
we have $v_{\sigma(i_0+1)} \prec v_{\sigma(j)}.$  
\end{proof}

\subsubsection{Proof of Theorem \ref{thm:fourth}}
\label{sec:thm4_proof}
Recall that $\mathcal{S}_{n}$ represents the set of possible server states of the $n$th server. Let $\vec{S}_i^{\vec{v}}(\sigma, a_1, a_2,\ldots,a_{\nu})$ denote the $N-f+\nu-1$ dimensional vector in the set $\prod_{n=1}^{N-f+\nu-1}\mathcal{S}_{n}$, whose $j$th component denotes the state of server $j$ at point 
$$P_i^{\vec{v}}(\sigma, a_1, a_2,\ldots,a_{\nu})$$ in execution 
$$\alpha_i^{\vec{v}}(\sigma, a_1, a_2,\ldots,a_{\nu}).$$ 

For a given vector $\vec{v}=(v_0, v_1,\ldots,v_{\nu}) \in \mathcal{V}^{\nu},$ let permutation $\sigma^{\vec{v}}$ and distinct integers $a_1^{\vec{v}}, \ldots, a_{\nu}^{\vec{v}}$ satisfy the conditions of Lemma \ref{lem2} as per a total order $\prec$ on the set $\mathcal{V}$. 

Let $v_{0} \in \mathcal{V}$ be the initial value , and 
$$\mathcal{V}_0 = \{(v_0, v_1,\ldots, v_{\nu}):v_1, v_2,\ldots, v_{\nu}\in \mathcal{V}-\{v_0\} \textrm{ are distinct}\}.$$

We show that there is a one-to-one mapping from tuples of the form 
$$(\sigma^{\vec{v}}, a_1^{\vec{v}}, a_2^{\vec{v}},\ldots, a_{\nu}^{\vec{v}}, \vec{S}_{\nu}^{\vec{v}}(\sigma^{\vec{v}}, a_1^{\vec{v}}, a_2^{\vec{v}},\ldots,a_{\nu}^{\vec{v}}))$$
to the vectors in $\mathcal{V}_{0}.$ Specifically, if $\vec{u}, \vec{v} \in \mathcal{V}_0$ and $\vec{u}\neq \vec{v}$, we show that 
\begin{align}
	&(\sigma^{\vec{u}}, a_1^{\vec{u}}, a_2^{\vec{u}},\ldots, a_{\nu}^{\vec{u}}, \vec{S}_{\nu}^{\vec{u}}(\sigma^{\vec{u}}, a_1^{\vec{u}}, a_2^{\vec{u}},\ldots,a_{\nu}^{\vec{u}}))\nonumber \\
	& \neq (\sigma^{\vec{v}}, a_1^{\vec{v}}, a_2^{\vec{v}},\ldots, a_{\nu}^{\vec{v}}, \vec{S}_{\nu}^{\vec{v}}(\sigma^{\vec{v}}, a_1^{\vec{v}}, a_2^{\vec{v}},\ldots,a_{\nu}^{\vec{v}})). \label{eq:1}
\end{align}
The one-to-one mapping implies that the cardinality of $\mathcal{P} \times \{1,2,\ldots,N-f+\nu-1\}^{\nu}\times \prod_{n=1}^{N-f+\nu-1}\mathcal{S}_{n}$ must be no smaller than $|\mathcal{V}_0|,$ where $\mathcal{P}$ represents the set of all permutations on $\{1,2,\ldots,\nu\}.$ This implies that 
\begin{eqnarray*}
	(\nu !) \cdot (N-f+\nu-1)^{\nu} \cdot \prod_{n=1}^{N-f+\nu-1} |\mathcal{S}_{n}| &\geq &|\mathcal{V}_0| = \binom{|\cV|-1}{\nu}
\end{eqnarray*}
which implies the statement of the theorem.

To complete the theorem, we show the relation stated in (\ref{eq:1}). We provide a proof by contradiction. Consider two distinct vectors $\vec{u}=(v_0, u_1,\ldots, u_{\nu})$ and $\vec{v}=(v_0, v_1,\ldots, v_{\nu})$ which violate (\ref{eq:1}). Let $\sigma=\sigma^{\vec{u}} = \sigma^{\vec{v}}$ and $a_i = a_i^{\vec{u}}=a_i^{\vec{v}}$ for $i \in \{1,2,\ldots,\nu\}$.

Since $\vec{u} \neq \vec{v},$ we know that there exists an index $i\in\{1,2\ldots,\nu\}$ such that $u_i \neq v_i$. Let $i_{0}$ be the largest element of $\{j:u_{\sigma(j)} \neq v_{\sigma(j)}\}.$  Note that if $j > i_0,$ we have $u_{\sigma(j)} = v_{\sigma(j)}.$ Also, $u_{\sigma(i_0)} \neq v_{\sigma(i_0)}$. This implies that either $u_{\sigma(i_0)} \prec v_{\sigma(i_0)}$ or $v_{\sigma(i_0)} \prec u_{\sigma(i_0)}.$ Without loss of generality, we assume that $v_{\sigma(i_0)} \prec u_{\sigma(i_0)}.$  We next use Lemma \ref{lem2} to show that $u_{\sigma(i_0)}=v_{\sigma(i_0)}$ which is a contradiction.

Because the point $P^{\vec{v}}_{i_0}(\sigma, a_1-1, a_2-1,\ldots,a_{i_0}-1, a_{i_0}, a_{i_0+1},\ldots,a_{\nu})$ is $(\sigma(i_0), \mathcal{C}_{w}-\{C_{\sigma(1)}, C_{\sigma(1)},\ldots,C_{\sigma(i_0)}\})$-valent, there exists an execution $\beta'$ which extends $$\alpha_{i_0}^{\vec{v}}(\sigma, a_1-1, a_2-1,\ldots,a_{i_0-1}-1, a_{i_0}, a_{i_0+1},\dots,a_{\nu}),$$ such that after $P^{\vec{v}}_{i_0}(\sigma, a_1-1, a_2-1,\ldots,a_{i_0}-1, a_{i_0}, a_{i_0+1},\ldots,a_{\nu})$ the clients in $\{C_{\sigma(1)}, C_{\sigma(2)},\ldots,C_{\sigma(i_0)}\}$ and the channels from these clients do not take value-dependent actions and there is a read operation that begins and returns ${v}_{\sigma(i_0)}$. 

We compare the component states at two points: point $P^{\vec{v}}_{i_0}(\sigma, a_1-1, a_2-1,\ldots,a_{i_0}-1, a_{i_0}, a_{i_0+1},\ldots,a_{\nu})$ in execution $\beta'$ and point $P^{\vec{u}}_{i_0}(\sigma, a_1-1, a_2-1,\ldots,a_{i_0}-1, a_{i_0}, a_{i_0+1},\ldots,a_{\nu})$ in execution $\alpha_{i_0}^{\vec{u}}(\sigma, a_1-1, a_2-1,\ldots,a_{i_0-1}-1, a_{i_0}, a_{i_0+1},\ldots,a_{\nu}).$
Because $\vec{S}_\nu^{\vec{u}} = \vec{S}_\nu^{\vec{v}},$ the state of a server at the first point is the same as its state at the second point.
From Lemma \ref{lem:metadata}, the metadata components of the state of any write client and the state of every channel from a server to a server or client are the same at both points.
Finally, because $u_{\sigma(j)} = v_{\sigma(j)}$ for all $j > i_0,$ and the clients in $C_{w}-\{C_{\sigma(1)}, C_{\sigma(2)},\ldots, C_{\sigma(i_0)}\}$ have the same state at both points.

We now create an execution $\beta$ that extends $\alpha_{i_0}^{\vec{u}}(\sigma, a_1, a_2,\ldots,a_{\nu})$ as follows. Starting at point $P^{\vec{u}}_{i_0}(\sigma, a_1-1, a_2-1,\ldots,a_{i_0}-1, a_{i_0}, a_{i_0+1},\ldots,a_{\nu}),$ every client, server, and channel takes the same steps in $\beta$ as it takes starting from point  $P^{\vec{v}}_{i_0}(\sigma, a_1-1, a_2-1,\ldots,a_{i_0}-1, a_{i_0}, a_{i_0+1},\ldots,a_{\nu})$ in $\beta'.$ Note that every component except the clients in  $\{C_{\sigma(1)}, C_{\sigma(2)},\ldots, C_{\sigma(i_0)}\}$ and the channels from these clients have the same state at point $P^{\vec{u}}_{i_0}(\sigma, a_1-1, a_2-1,\ldots,a_{i_0}-1, a_{i_0}, a_{i_0+1},\ldots,a_{\nu})$  in $\beta$ as at point $P^{\vec{v}}_{i_0}(\sigma, a_1-1, a_2-1,\ldots,a_{i_0}-1, a_{i_0}, a_{i_0+1},\ldots,a_{\nu})$ in $\beta'$. The metadata components of the states of the clients in $\{C_{\sigma(1)}, C_{\sigma(2)},\ldots, C_{\sigma(i_0)}\}$ and the metadata messages in the channels from these clients are the same at the two points. Because clients in $\{C_{\sigma(1)}, C_{\sigma(2)},\ldots, C_{\sigma(i_0)}\}$  and the channels from these clients only take value-independent output actions after point $P^{\vec{v}}_{i_0}(\sigma, a_1-1, a_2-1,\ldots,a_{i_0}-1, a_{i_0}, a_{i_0+1},\ldots,a_{\nu})$ in $\beta'$,  $\beta$ is a valid execution of the algorithm $A$. Thus, we have created an execution $\beta$ which is an extension of  $\alpha^{\vec{u}}_{i_0}(\sigma, a_1-1, a_2-1,\ldots,a_{i_0}-1, a_{i_0}, a_{i_0+1},\ldots,a_{\nu}),$ where, after point $P^{\vec{v}}_{i_0}(\sigma, a_1-1, a_2-1,\ldots,a_{i_0}-1, a_{i_0}, a_{i_0+1},\ldots,a_{\nu}),$ the clients in $\{C_{\sigma(1)}, C_{\sigma(2)},\ldots, C_{\sigma(i_0)}\}$  and the channels from these clients do not take any value-dependent actions, and a read operation begins and returns $v_{\sigma(i_0)}.$ 
Because the algorithm is weakly regular, we must have $v_{\sigma(i_0)} \in \{v_0, u_1, u_2,\ldots, u_{\nu}\}$. Furthermore, $v_{\sigma(i_0)}\neq v_0$ since we assume that the components of $\vec{v}$ are distinct. So, there exists an integer $j_0$ in $\{1,2,\ldots,\nu\}$ such that $v_{\sigma(i_0)}=u_{\sigma(j_0)}.$ 

The existence of execution $\beta$ implies that the point $P^{\vec{u}}_{i_0}(\sigma, a_1-1, a_2-1,\ldots,a_{i_0}-1, a_{i_0}, a_{i_0+1},\ldots,a_{\nu}),$ is $(\sigma(j_0), \{C_{\sigma(1)}, C_{\sigma(2)},\ldots, C_{\sigma(i_0)}\})$-valent. Statement (ii) of Lemma \ref{lem2} implies that $j_0 \geq i_0$. Statement (iii) of Lemma \ref{lem2} implies that if $j_0 > i_0,$ then $u_{\sigma(i_0)} \prec u_{\sigma(j_0)}$. However, we started with the assumption that  $u_{\sigma(j_0)} = v_{\sigma(i_0)} \prec u_{\sigma(i_0)}.$ Therefore, it cannot happen that $j_0 > i_0$, and we conclude that $j_0 = i_0, v_{\sigma(i_0)} = u_{\sigma(i_0)}.$ This contradicts our assumption that $v_{\sigma(i_0)} \neq u_{\sigma(i_0)}.$ Therefore, (\ref{eq:1}) must be true. This completes the proof.

\subsection{Conjecture related to Theorem \ref{thm:fourth}}
\label{sec:fourth_discussions}
The assumptions of Theorem \ref{thm:fourth} do not apply to some algorithms \cite{HGR, androulaki2014_separate_metadata}. These algorithms send value-dependent messages in two phases. In one of the two phases, the algorithms send erasure coded elements corresponding to the value. In the other phase where value-dependent messages are sent, a hash of the value is sent. The hashes are used for verification of the client's integrity, which is important in \cite{HGR, androulaki2014_separate_metadata} as the algorithms in these references handle Byzantine failures. We believe that it may be possible to generalize our result of Theorem \ref{thm:fourth} with Assumption 3 (b) modified as follows:
\begin{itemize}
\item  the algorithm has a bounded number of phases, and 
\item  there is at most one phase where a value-dependent message of size $\Theta(|\mathcal{V}|)$ is sent.
\end{itemize}
The above restrictions imply that, even if there is more than one phase where value dependent messages are sent, the value-dependent messages in the additional phases do not carry much information about the value. The above restrictions would cover algorithms of \cite{HGR, androulaki2014_separate_metadata}, and we conjecture that the lower bound of Corollary \ref{cor:fourth} bound still applies.

\section{Concluding Remarks}
\label{sec:conclusion}
{
This paper was motivated by the following open question (see Section \ref{sec:comparisons}): Does there exist an atomic shared memory emulation algorithm whose storage cost is smaller than $\nu\frac{N}{N-f} \log_{2}|\mathcal{V}|,$ where $\nu$ represents the number of active write operations? This question remains open. The insight obtained by our bounds in conjunction with the result of \cite{spiegelman2015space} is summarized here. 
If there is an algorithm whose storage cost is $g(\nu, N, f) \log_{2}|\mathcal{V}|+o(\log_{2}|\mathcal{V}|),$  where $g(\nu,N,f)$ is some real-valued function of parameters $\nu, N, f$ then
\begin{itemize}
	\item $g(\nu,N,f) \geq \frac{2N}{N-f+2};$ 
	\item If $g(\nu,N,f) <  \frac{\nu N}{N-f+\nu-1},$ then \begin{itemize}
			\item 
			the writer sends its value in multiple phases to the servers, or 
\item the writer's state may not separate the value and the metadata, or
\item during a write operation, the writer can take non-black box actions;
	\end{itemize}

\item If, for a given values of parameters $N,f$, we have  $g(\nu,N,f) < f+1$ for all values of $\nu,$ then, in certain executions, the servers store symbols which jointly encode values across different versions.
\end{itemize}

%Our insights are obtained by studying the proofs of recent coding theoretic results \cite{Wang_Cadambe_ISIT, Wang_Cadambe_allerton, Wang_Cadambe} which are inspired by the application of shared memory emulation (More details are in Appendix \ref{app:multiversion_coding}). Our results demonstrate the promise of an approach to the study of shared memory emulation that formulates new, relevant coding-theoretic and information-theoretic problems and obtains insights from the formulations.}

\newpage

\bibliographystyle{abbrv}
\bibliography{biblio}

\appendix
\section{Discussion on the Storage Scheme Assumption of \cite{spiegelman2015space}}
\label{app:discussion}
For the sake of technical clarity and completeness, we provide a discussion on the storage scheme assumption of \cite{spiegelman2015space}. Reference \cite{spiegelman2015space} assumes that every stored bit is associated uniquely with a value of a write operation. However, this assumption is restrictive, and the storage cost lower bound proof of \cite{spiegelman2015space} may not be applicable for arbitrary storage schemes.

In fact, a critical idea in the proof of \cite{spiegelman2015space} is the following. If at a point $P$ in an execution of an algorithm $A$, 
\begin{itemize}
\item no value from $\{v_1, v_2,\ldots,v_{m}\}\subset \mathcal{V}$ is returnable from a set of servers,
\item if there is a value $v_i \in \{v_1, v_2,\ldots, v_m\}$ such that no server stores a single bit of the value, and 
\item after a single step of the execution, some value which is not necessarily  $\{v_1, v_2,\ldots,v_{m}\}\subset \mathcal{V}$ is returnable,
\end{itemize}  
then the number of bits stored in a server must increase by $\log_{2}|\mathcal{V}|$ bits in the step. However, this is true only for specific storage schemes, but may not be true for arbitrary storage schemes. We show this via a counter-example. 

Let $\mathcal{V}$ be a finite field of $2^{m}$ elements for some integer $m$. Note that every element of $\mathcal{V}$  is an $m$-bit vector over the binary base field. Let $v_1, v_2, v_3 \in \mathcal{V}$ be three versions of the data object associated with three write operations in an execution of an algorithm. Suppose that, in some algorithm $A$, because of structure of the server protocol, there are two servers which both store $v_1 + v_2 + v_3$ at some point $P$ of an execution. It is impossible to  associate a bit stored at the servers with a single value.  Note that a reader which accesses only the two servers cannot recover even a single bit of $v_1, v_2, v_3$.

For argument's sake, suppose that the bits stored are associated with  none of the values. Note that none of $\{v_1, v_2, v_3\}$ are returnable from both servers.  Now, imagine a single step, where the first server receives a message that contains $v_2$ and, after the receipt of the message, stores $v_1 + v_3$ at point $P'$. Then a reader which can access the bits stored in both servers can recover $v_2$ by simply subtracting the contents of the two servers. However, the number of bits stored in the servers did not change in the step!  Thus the proof method of \cite{spiegelman2015space} would not be generally applicable to algorithm $A$.

The proof technique of \cite{spiegelman2015space}, in fact, applies when versions are  encoded \emph{separately}. For instance, if a server that receives information from three values $v_1, v_2, v_3$, and stores information of the form $(f_1(v_1), f_2(v_2), f_3(v_3))$, where $f_1, f_2, f_3$ are three arbitrary functions. In this instance, every stored bit can be uniquely associated with a write operation.

It must be noted that the algorithm $A$ considered in this section is hypothetical. We do not know whether we can construct meaningful algorithms that can encode across different versions. Our storage cost lower bound of Theorem \ref{thm:fourth} suggests that even if such an algorithm can be constructed, there would be little benefit from a storage cost perspective if the writers send values in a single phase.

\section{A Simple Information-theoretic Lower Bound}
\label{sec:firstthm}

In this section, we provide a simple information-theoretic lower bound on the storage cost incurred by any distributed shared memory emulation algorithm. We describe the lower bound in Theorem \ref{thm:first}. The theorem leads to lower bounds on the max- and {total-}storage costs, which are stated in Corollary \ref{cor:first}. After stating Theorem \ref{thm:first} and Corollary \ref{cor:first}, we provide an informal description of the proof of Theorem \ref{thm:first}, followed by a formal description.

\subsection{Statement of Theorem \ref{thm:first}}

\begin{theorem}
{Let $A$ be a single-writer single-reader shared memory emulation algorithm that implements a {regular} read-write object whose values come from a finite set $\mathcal{V}$. Suppose that,  in algorithm $A$, every server's state belongs to a set $\mathcal{S}$.  Suppose that the algorithm $A$ satisfies the following liveness property: 

In a fair execution of $A,$ if the number of server failures is no bigger than $f$, $f\ge 1$, then every operation invoked at a non-failing client terminates.}

{Then, for every subset $\mathcal{N} \subset \{1,2,\ldots,N\}$ where $|\mathcal{N}| = N-f$, 
$$\sum_{n \in \mathcal{N}} \log_{2} |\mathcal{S}_n| \geq \log_{2}|\mathcal{V}|.$$}
\label{thm:first}
\end{theorem}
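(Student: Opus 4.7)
The plan is to prove the bound by a Singleton-style counting argument adapted to the shared memory setting. Fix a subset $\mathcal{N} \subset \{1,2,\ldots,N\}$ with $|\mathcal{N}| = N-f$ and declare, once and for all, that the $f$ servers in $\{1,2,\ldots,N\} \setminus \mathcal{N}$ fail at the very beginning of every execution we shall construct. For each value $v \in \mathcal{V}$, I would build an execution $\alpha^v$ in which no reader ever takes a step: a single write operation of value $v$ is invoked at the writer, and the writer, the non-failing servers, and the channels incident to them (including any server-to-server channels) then take steps in a fair manner until the write terminates and, afterwards, until all server-to-server channels have delivered all pending messages. Termination of the write is guaranteed by the liveness hypothesis, because from the perspective of the writer and the servers this execution is indistinguishable from one in which every reader has crashed before taking any step.

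Next, I would associate with $\alpha^v$ the state vector $\vec{S}^v \in \prod_{n \in \mathcal{N}} \mathcal{S}_n$ whose $n$-th coordinate is the state of server $n$ at the final point $P^v$ of $\alpha^v$. The bound $\sum_{n \in \mathcal{N}} \log_2 |\mathcal{S}_n| \geq \log_2 |\mathcal{V}|$ follows immediately if I can show that the map $v \mapsto \vec{S}^v$ is injective, since then $\prod_{n \in \mathcal{N}} |\mathcal{S}_n| \geq |\mathcal{V}|$ and the bound of Theorem \ref{thm:first} is obtained by taking logarithms.

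To prove injectivity, I would argue by contradiction: suppose $\vec{S}^v = \vec{S}^{v'}$ for some $v \neq v'$, and construct a hybrid execution $\gamma$. The execution $\gamma$ agrees with $\alpha^v$ up to the point $P^v$; at that point, the writer and every channel to and from the writer are silenced indefinitely, a read operation is invoked at the reader, and the reader, the non-failing servers, and the channels between them then follow exactly the schedule used in the analogous post-$P^{v'}$ extension of $\alpha^{v'}$ in which the same read is invoked and runs to completion. Because every component outside the writer and its channels is in the same state at $P^v$ as at $P^{v'}$ (the non-failed servers by the hypothesis $\vec{S}^v = \vec{S}^{v'}$; the readers and reader-side channels because they have taken no step in either of the two executions; the server-to-server channels because they are empty at both points), $\gamma$ is a bona fide execution of $A$ and in $\gamma$ the read returns the same value it would have returned in the $\alpha^{v'}$-based extension, namely $v'$. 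On the other hand, $\gamma$ is itself an extension of $\alpha^v$ in which a single write, of value $v$, has terminated before the read is invoked, so regularity forces the read to return $v$. This contradicts $v \neq v'$ and completes the proof.

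The main delicate step is the splicing argument that $\gamma$ is a legal execution: I need to verify that no component outside the writer and its channels can distinguish $P^v$ from $P^{v'}$, and that silencing the writer and writer-side channels after $P^v$ does not violate any I/O-automaton transition constraint at a server or a reader. This is essentially an indistinguishability check and is where most of the care goes; everything else (termination, the appeal to regularity, the passage from a product bound to a sum of logarithms) is routine. This same plan will serve as a warm-up for the stronger bounds in Theorems \ref{thm:second} and \ref{thm:third}, where the single write operation will be replaced by two writes and the injectivity target by the harder set of ordered pairs of distinct values.
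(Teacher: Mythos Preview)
Your proposal is correct and follows essentially the same approach as the paper: construct one execution per value with the complement of $\mathcal{N}$ failed, record the vector of non-failed server states after the write completes and channels quiesce, and prove injectivity by splicing the read phase from the $v'$-execution onto the $v$-prefix to force a regularity violation. The only cosmetic difference is that the paper builds the read into each $\alpha^{(v)}$ up front and empties all channels (not just server-to-server ones) before the splice point, whereas you defer the read to the hybrid and silence the writer-side channels instead; both choices work for the same reason.
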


The above theorem naturally implies a bound on the total- and max-storage costs as demonstrated in the following corollary.

\begin{corollary}
Let $A$ be a single-writer-single-reader shared memory emulation algorithm that implements a {regular} read-write object whose values come from a finite set $\mathcal{V}$. Suppose that every server's state belongs to a set $\mathcal{S}$ in algorithm $A$.  Suppose that the algorithm $A$ satisfies the following liveness property: 

In a fair execution of $A,$ if the number of server failures is no bigger than $f$, $f\ge 1$, then every operation invoked at a non-failing client terminates.

Then
\begin{eqnarray*}
MaxStorage(A) &\geq& \frac{\log_2{|\mathcal{V}|}}{N-f}, \textrm{ and }\\
TotalStorage(A) &\geq& \frac{N \log_2{|\mathcal{V}|}}{N-f}.
\end{eqnarray*}
\label{cor:first}
\end{corollary}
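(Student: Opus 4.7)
The plan is to construct, for each value $v \in \mathcal{V}$, a single execution $\alpha^{(v)}$ and extract a ``fingerprint'' $\vec{S}^{(v)} \in \prod_{n \in \mathcal{N}} \mathcal{S}_n$ given by the states of the servers in $\mathcal{N}$ at a designated point $P^{(v)}$. I would then show that $v \mapsto \vec{S}^{(v)}$ is injective, whence $|\mathcal{V}| \leq \prod_{n \in \mathcal{N}} |\mathcal{S}_n|$ and the theorem follows by taking $\log_2$. This is exactly the pattern used in the proofs of Theorems \ref{thm:second} and \ref{thm:third}, but with a single write operation instead of two and with no valency bookkeeping.

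I would build $\alpha^{(v)}$ as follows. The $f$ servers in $\{1,\ldots,N\}\setminus\mathcal{N}$ fail at the initial point, and the reader together with its channels take no steps throughout. A single write $\pi$ with value $v$ is invoked at the writer, and every other component takes turns fairly until $\pi$ terminates; termination is guaranteed by the liveness hypothesis since only $f$ servers have failed. I would then extend the execution by letting all inter-server channels deliver their pending messages and letting the non-failed servers react, obtaining a point $P^{(v)}$ at which all information about $v$ still in transit has been absorbed into the states of the servers in $\mathcal{N}$. Let $\vec{S}^{(v)}$ be the tuple of those server states at $P^{(v)}$.

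For injectivity, suppose $\vec{S}^{(v)} = \vec{S}^{(v')}$ with $v \neq v'$. I extend $\alpha^{(v)}$ to $\beta^{(v)}$ by freezing the writer at $P^{(v)}$, delaying all of the writer's channels indefinitely, invoking a read at the reader, and letting every remaining component act fairly until the read terminates. Liveness gives termination, and because $\pi$ completed strictly before the read was invoked and no other write operation occurs, regularity forces the read to return $v$. An analogous extension $\beta^{(v')}$ of $\alpha^{(v')}$ has the read return $v'$. Since the composite sub-automaton consisting of the non-failed servers, the reader, and all channels among them is in an identical state at $P^{(v)}$ and $P^{(v')}$ (the reader and reader-server channels are still in their initial states; the servers in $\mathcal{N}$ have matching states by assumption; the inter-server channels are empty by construction), the same schedule applied after $P^{(v)}$ and $P^{(v')}$ produces the same trace, and in particular the same return value, contradicting $v \neq v'$.

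The main obstacle will be the indistinguishability step, namely justifying that $\vec{S}^{(v)}$ alone captures everything the read can learn. Since the theorem does not forbid inter-server gossip, I must schedule the final portion of $\alpha^{(v)}$ carefully so that inter-server channels really are drained at $P^{(v)}$, and I must handle the possibility that servers send value-dependent messages directly to the reader during $\pi$ --- this I would address by delaying those messages (the reader takes no steps in $\alpha^{(v)}$) and noting that, once the writer and its channels are frozen after $P^{(v)}$, the only value-carrying information the reader can ever receive during the appended read operation comes from the servers in $\mathcal{N}$, whose states are determined by $\vec{S}^{(v)}$. Corollary \ref{cor:first} then follows routinely: applying the theorem with $\mathcal{N}$ chosen to contain the $N-f$ servers of smallest state-space yields $MaxStorage(A) \ge \frac{\log_2|\mathcal{V}|}{N-f}$, and a counting argument identical in spirit to the proof of Corollary \ref{cor:second} in the excerpt gives $TotalStorage(A) \ge \frac{N\log_2|\mathcal{V}|}{N-f}$.
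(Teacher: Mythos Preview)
Your proposal is correct and follows essentially the same approach as the paper's proof: build one execution $\alpha^{(v)}$ per value, record the tuple of non-failed server states at a point after the write completes, and argue injectivity of $v\mapsto\vec{S}^{(v)}$ by a hybrid/indistinguishability argument that would otherwise force a post-write read to return the wrong value, contradicting regularity. The derivation of the corollary from the theorem via ordering the $|\mathcal{S}_n|$ and picking $\mathcal{N}$ to be the $N-f$ smallest is also the paper's argument. The only cosmetic differences are that the paper lets \emph{all} channels (not just inter-server ones) drain before fixing $P^{(v)}$, and it builds a single hybrid execution $\beta$ rather than comparing two separate extensions $\beta^{(v)},\beta^{(v')}$; your explicit discussion of the server-to-reader channel issue is in fact slightly more careful than the paper's own treatment.
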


\begin{proof}[Proof of Corollary \ref{cor:first}]
	We assume, without loss of generality, that $|\mathcal{S}_{1}| \leq |\mathcal{S}_{2} \leq \ldots \leq |\mathcal{S}_{N}|.$ From Theorem \ref{thm:first}, we have 
	$$\sum_{n =1}^{N-f} \log_{2} |\mathcal{S}_n| \geq \log_{2}|\mathcal{V}|.$$

	As a consequence, we have $\log_{2}|\mathcal{S}_{N-f}|\geq \frac{\log_{2} |\mathcal{V}|}{N-f}$. Therefore, we have $\displaystyle\max_{n \in \{1,2,\ldots,N\}} \log_{2}|\mathcal{S}_{n}| \geq  \log_{2}|\mathcal{S}_{N-f}| \geq \frac{\log_{2}|\mathcal{V}|}{N-f}.$ Furthermore, we have $\log_{2}|\mathcal{S}_{n}| \geq  \frac{\log_{2} |\mathcal{V}|}{N-f}$ for every $n \in \{N-f+1,\ldots,N\}$. This implies the following chain of relations.
	\begin{eqnarray*}
		\sum_{n =1}^{N} \log_{2} |\mathcal{S}_n| &\geq& \log_{2}|\mathcal{V}| + \sum_{n=N-f+1}^{N} \log_{2}|\mathcal{S}_{n}| \\
		&\geq&  \log_{2}|\mathcal{V}| + f \frac{\log_{2}|\mathcal{V}|}{N-f} = \frac{N\log_{2}|\mathcal{V}|}{N-f}
		%\Rightarrow  \frac{1}{N} \sum_{n=1}^{N}\log_{2}|\mathcal{S}_{n}| &\geq& \frac{\log_2{|\mathcal{V}|}}{N-f}.
	\end{eqnarray*}
This completes the proof.
\end{proof}

We now prove Theorem \ref{thm:first}.

%{Next, in Corollary \ref{cor:first}, we show that the assumption that every server's state comes from the same set can be lifted.

\subsection{Informal Proof Sketch for Theorem \ref{thm:first}} Intuitively, the above theorem can be understood as follows. Consider any subset $\mathcal{N} \subseteq \{1,2,\ldots,N\}$ where $|\mathcal{N}| = N-f$.
Consider an execution of the algorithm $A$ where servers $\{1,2,\ldots,N\} - \mathcal{N}$ fail {at the beginning of the execution}. After the servers fail, a writer writes value $v$ in the system and terminates. Because the algorithm is {regular}, any reader that begins after the termination of the write must recover the last written value $v$ from the $N-f$ servers in $\mathcal{N}$. {Since the state of server $i$ belongs to $\mathcal{S}_i$}, the total number of possible {configurations} of the {states of the }$N-f$ servers in $\mathcal{N}$ is $\prod_{i=1}^{N-f}|\mathcal{S}_i|$. Since the value $v$ can be any element of the set $\mathcal{V}$ and the reader must recover the value through messages exchanged with the servers, there must be a one-to-one mapping from the set of values to the set of server states. Therefore, we need $\prod_{n \in \mathcal{N}}|\mathcal{S}_n| \geq |\mathcal{V}|$, which implies the result of Theorem \ref{thm:first}. We provide a formal proof below. 
\subsection{Formal Proof of Theorem \ref{thm:first}}

\begin{proof}[{Proof of Theorem \ref{thm:first}}]
Consider any subset $\mathcal{N} \subseteq \{1,2,\ldots,N\}$ where $|\mathcal{N}| = N-f$.
We construct $|\mathcal{V}|$ executions of the algorithm. In particular, for every value $v$ in $\mathcal{V}$, we construct an execution $\alpha^{(v)}$ of the algorithm as follows. In $\alpha^{(v)}$, the $f$ servers in $\{1,2,\ldots,N\}-\mathcal{N}$ fail at the beginning of the execution. The servers in $\mathcal{N}$ do not fail in $\alpha^{(v)}.$ After the $f$ servers fail, a write operation with value $v$ begins and all components take turns in a fair manner until the write operation terminates. Since, in a fair execution of algorithm $A$ where {the} number of server failures is at most $f$, {any operation invoked at a non-failing client} eventually terminates, we can ensure that the execution can be extended until the write terminates. 
Let $\tilde{P}^{(v)}$ be some point after the termination of the write. At $\tilde{P}^{(v)}$, all the channels in the system act, delivering {all} their messages. Let $P^{(v)}$ {be some point} in $\alpha^{(v)}$ after the channels deliver their messages. {At} $P^{(v)},$ the write client {fails}. At some point after $P^{(v)}$, a read operation begins and all the components in the system take turns in a fair manner until the read terminates. {Because the read client does not fail, and because the number of server failures is $f$, the read operation terminates in $\alpha^{(v)}$}. The execution $\alpha^{(v)}$ ends after the completion of the read operation. 

For $j \in \mathcal{N}$ denote the state of server $j$ at point $P^{(v)}.$ We denote by $\vec{S}^{(v)},$ the tuple $(S_{j_1}^{(v)}, S_{j_2}^{(v)}, \ldots, S_{j_{N-f}}^{(v)})$ of server states, where $\mathcal{N}=\{j_1, j_2,\ldots, j_{N-f}\},$ and $j_1 < j_2 < \ldots j_{N-f}$. Note that $\vec{S}^{(v)}$ is an element from the set $\prod_{n \in \mathcal{N}} S_n.$  To complete the proof of the lemma, it suffices to show that for two distinct  values $v, v'$ in $\mathcal{V}$, we have {$\vec{S}^{(v)} \neq \vec{S}^{(v')}.$} This is because {$\vec{S}^{(v)} \neq \vec{S}^{(v')}$} implies that there are at least $|\mathcal{V}|$ elements in $\prod_{n \in \mathcal{N}} S_n,$ which implies the theorem statement owing to the following chain of relations:
\begin{align*}
& \prod_{n \in \mathcal{N}}|\mathcal{S}_n| \geq |\mathcal{V}| \\
 \Rightarrow & \sum_{n \in \mathcal{N}}\log_2 |\mathcal{S}_n| \geq \log_{2}|\mathcal{V}| 
\end{align*}
So, to complete the proof, {it is enough to show} that for distinct values $v \neq v',$ we have {$\vec{S}^{(v)} \neq \vec{S}^{(v')}.$} 

{Assume for contradiction} that there exist two different values $v\neq v'$ such that {$\vec{S}^{(v)} = \vec{S}^{(v')}$}. We create an execution $\beta$ of the algorithm where the operations are not {regular}, which would contradict the assumption that the algorithm is {regular} and complete the proof. The steps of $\beta$ are identical to the steps of execution $\alpha^{(v)}$ until the point $P^{(v)}$ in $\alpha^{(v)}$. {Consider the composite automaton that includes the servers, the readers, the channels amongst the servers, and the channels between the readers and the servers.} The state of every component {of this composite automaton} at point $P^{(v)}$ in $\alpha^{(v)}$ is the same as the state of the corresponding component at point $P^{(v')}$ in $\alpha^{(v')}$. This is because at both $P^{(v)}$ and $P^{(v')}$, all the channels are empty, the servers in $\{1,2,\ldots,N\} - \mathcal{N}$ have failed, and the state of any server from $\mathcal{N}$ is the corresponding element of $\vec{S}^{(v)},$ which is equal to $\vec{S}^{(v')}$. In $\beta,$ after the point $P^{(v)},$ all the components follow the steps of the execution $\alpha^{(v')}$. Clearly $\beta$ is an execution of the algorithm $A$. Because $P^{(v)}$ is after the termination of the write operation that wrote value $v$ in $\alpha^{(v)}$, and because $\beta$ is identical to $\alpha^{(v)}$ until the point $P^{(v)}$, we infer that a write operation wrote value $v$ in $\beta$. Because a read operation begins after $P^{(v')}$ and returns $v'$ in $\alpha^{(v')},$ and because every component follows the same steps in execution $\beta$ as in execution $\alpha^{(v')}$ after point $P^{(v')},$ we infer that a read operation begins in $\beta$ after the termination of the write and returns value $v'$. If $\beta$ is {regular}, this read operation is serialized after the write operation in $\beta$. Therefore, the read operation should return $v$. However it returns $v'$ which is not equal to $v$. Therefore $\beta$ is not {regular}, which contradicts the assumption that the algorithm is {regular}. Therefore, for any two different values $v\neq v',$ we have ${\vec{S}^{(v)} \neq \vec{S}^{(v')}}.$ This completes the proof.
\end{proof}

\end{document}